\documentclass{ieeeaccess}
\usepackage[utf8]{inputenc} 
\usepackage[T1]{fontenc}    
\usepackage{times}
\usepackage{amsmath,amssymb,amsfonts}
\usepackage{algorithmic}
\usepackage{graphicx}
\usepackage[font={sf,scriptsize},           
            labelfont={bf,color=accessblue},
            caption=false]                  
           {subfig} 
\newcommand{\subcaptionbox}[2]{\subfloat[#1]{#2}} 
\usepackage{textcomp}
\usepackage[usenames,dvipsnames]{xcolor} 
\usepackage{tikz}
\NewSpotColorSpace{PANTONE}
\AddSpotColor{PANTONE} {PANTONE3015C} {PANTONE\SpotSpace 3015\SpotSpace C} {1 0.3 0 0.2}
\SetPageColorSpace{PANTONE}
\usepackage{booktabs}       
\usepackage{multirow}       
\usepackage[inline]{enumitem}
\usepackage{orcidlink} 
\usepackage{amssymb} 
\usepackage{amsmath} 
\usepackage{mathrsfs} 
\usepackage{amsthm}
\newtheorem{theorem}{Theorem}[section]

\theoremstyle{definition} 
\newtheorem{definition}{Definition}[section]
\usepackage[ruled,vlined,linesnumbered,algo2e]{algorithm2e}
\SetKwInput{KwInput}{Input}
\SetKwInput{KwOutput}{Output}
\SetKwInput{KwRequire}{Require}
\SetKwComment{Comment}{$\triangleright$~}{}
\SetKwFor{Repeat}{repeat}{times}{end}
\SetNlSty{}{}{:} 
\usepackage{bm}
\usepackage{mathtools}
\DeclarePairedDelimiter{\bra}{\langle}{\rvert} 
\DeclarePairedDelimiter{\ket}{\lvert}{\rangle} 
\DeclarePairedDelimiterX{\braket}[2]{\langle}{\rangle}{#1 \vert #2}
\DeclarePairedDelimiterX{\ketbra}[2]{\lvert}{\rvert}{#1 \rangle\langle #2}
\DeclarePairedDelimiter{\abs}{\lvert}{\rvert} 
\DeclarePairedDelimiter{\paren}{\lparen}{\rparen} 
\DeclarePairedDelimiter{\bparen}{[}{]} 
\DeclarePairedDelimiter{\set}{\{}{\}} 
\DeclarePairedDelimiter{\ceil}{\lceil}{\rceil}

\DeclarePairedDelimiter{\tuple}{\langle}{\rangle} 
\DeclareMathOperator*{\argmax}{arg\,max}

\newcommand{\pfunc}[2]{#1\paren*{#2}} 
\newcommand{\bfunc}[2]{#1\bparen*{#2}} 
\newcommand{\pfuncgiven}[3]{#1\paren*{#2 \;\middle|\; #3}} 
\newcommand{\bfuncgiven}[3]{#1\bparen*{#2 \;\middle|\; #3}} 
\newcommand{\given}[1][]{\:#1\vert\:}
\newcommand{\E}{\mathbb{E}} 
\newcommand{\R}{\mathbb{R}} 
\newcommand{\C}{\mathbb{C}} 
\newcommand{\expect}[1]{\bfunc{\E}{#1}}
\newcommand{\expectgiven}[2]{\bfuncgiven{\E}{#1}{#2}}
\newcommand{\hilbert}{\mathcal{H}} 
\newcommand{\probspace}[1]{\pfunc{\mathscr{P}}{#1}} 
\newcommand{\setaction}{\mathcal{A}} 
\newcommand{\setstate}{\mathcal{X}} 
\newcommand{\setreturn}{\mathcal{Z}} 
\usepackage[nomain,acronym,shortcuts,toc,nonumberlist,section,numberedsection=autolabel,style=super,nogroupskip]{glossaries}
\glsdisablehyper 
\makeglossaries
\usepackage{xspace}

\makeatletter
\newcommand{\acx}{\protect\@acx}%
\newcommand{\@acx}[1]{%
  \ifAC@dua
   \acl{#1}%
  \else
   \expandafter\ifx\csname ac@#1\endcsname\AC@used
      \acs{#1}%
   \else
      \acl{#1}%
   \fi
  \fi
}
\makeatother

\newacronym{qnrl}{QnRL}{quantum-native reinforcement learning}
\newacronym{quak}{QuAK}{quantum amplitude kickback}

\newacronym{drl}{DRL}{distributional reinforcement learning}
\newacronym{qdrl}{QDRL}{quantum distributional reinforcement learning}
\newacronym{qrl}{QRL}{quantum reinforcement learning}
\newacronym{rl}{RL}{reinforcement learning}
\newacronym{mdp}{MDP}{Markov decision process}
\newacronym{nn}{NN}{neural network}
\newacronym{qft}{QFT}{quantum Fourier transform}
\newacronym{vqc}{VQC}{variational quantum circuit}
\newacronym{dft}{DFT}{discrete Fourier transform}
\newacronym{ml}{ML}{machine learning}
\newacronym{dqn}{DQN}{deep Q-network}
\newacronym{qgan}{QGAN}{quantum generative adversarial network}
\newacronym{qml}{QML}{quantum machine learning}
\newacronym{vae}{VAE}{variational autoencoder}
\newacronym{qevae}{QeVAE}{quantum-enhanced variational autoencoder}
\newacronym{qcbm}{QCBM}{quantum circuit Born machine}
\newacronym{cnn}{CNN}{convolutional neural network}
\newacronym{ale}{ALE}{Arcade Learning Environment}
\newacronym{ai}{AI}{artificial intelligence}
\newacronym{qnn}{QNN}{quantum neural network}
\usepackage{hyperref} 
\usepackage{zref-clever}
\zcsetup{
    cap,
    nameinlink=false,
} 
\renewcommand{\label}[1]{\zlabel{#1}}
\newcommand{\cref}[1]{\zcref{#1}}
\newcommand{\Cref}[1]{\zcref[S]{#1}}

\zcRefTypeSetup{figure}{
    abbrev,
}
\zcRefTypeSetup{equation}{
    abbrev,
    Name-sg-ab=, 
    name-sg-ab=, 
    Name-pl-ab=, 
    name-pl-ab=, 
    refbounds={(,,,)}, 
    rangesep={--},
}
\zcRefTypeSetup{algocf}{
    Name-sg={Algorithm},
    name-sg={algorithm},
    Name-pl={Algorithms},
    name-pl={algorithms},
    Name-sg-ab={Alg.},
    name-sg-ab={alg.},
    Name-pl-ab={Algs.},
    name-pl-ab={algs.},
}
\zcRefTypeSetup{appendix}{
    Name-sg={Appendix},
    Name-pl={Appendices},
}
\usepackage[%
backend=biber,
style=ieee,
citestyle=numeric-comp,
sorting=none,
natbib=true,
mincitenames=1,
maxcitenames=1,
url=false,
isbn=false]{biblatex}
\DeclareEmphSequence{\upshape,\it,\bfseries} 
\usepackage{todonotes}

\def\BibTeX{{\rm B\kern-.05em{\sc i\kern-.025em b}\kern-.08em
    T\kern-.1667em\lower.7ex\hbox{E}\kern-.125emX}}
\begin{document}
\history{Date of publication xxxx 00, 0000, date of current version xxxx 00, 0000.}
\doi{10.1109/TQE.2020.DOI}

\newcommand{\titletext}{\acrshort{qnrl}: Quantum-Native Reinforcement Learning}
\title{\titletext}
\author{\uppercase{Alexander DeRieux}\,\orcidlink{0000-0003-1606-0668} \IEEEmembership{Graduate Student Member, IEEE} and \uppercase{Walid Saad}\,\orcidlink{0000-0003-2247-2458} \IEEEmembership{Fellow, IEEE}}

\address{Bradley Department of Electrical and Computer Engineering, Virginia Tech Institute for Advanced Computing, Alexandria, VA 22305 USA}


\markboth
{DeRieux and Saad: \titletext}
{DeRieux and Saad: \titletext}

\corresp{Corresponding author: Alexander DeRieux (email: derieux@vt.edu).}

\begin{abstract}

Quantum reinforcement learning (QRL) is a promising approach to learn effective decision strategies across several applications with stochastic environments. Instead of directly modeling the random variables that govern these environments, existing QRL architectures indirectly approximate environment behavior by estimating expected outcomes, which limits their expressive power and adaptive potential. Overcoming such challenges requires a novel QRL approach that exploits the distributional nature of quantum computers to directly model environment random variables as quantum state distributions. Hence, in this paper, a novel framework dubbed \emph{quantum-native reinforcement learning (QnRL)} is proposed. QnRL is a distributional RL framework that learns conditional distributions naturally in Hilbert space via superimposed and entangled quantum states. Thus, QnRL can directly model the behavior of stochastic learning environments via the natural properties of quantum systems. QnRL accomplishes this via a novel, proposed \emph{quantum amplitude kickback (QuAK)} algorithm that enables comparing the $n$-th power of the $m$-th moment of multiple superimposed distributions. It is theoretically proven that a conditional action policy distribution is distilled from the moments of a quantum generative model entirely within Hilbert space via QuAK, and optimized via QnRL. This complex distribution composition is also shown to provide extra dimensions for expressing environment correlations that are unknown to purely classical and classically-sampled quantum distributional models. Experimental results across diverse environments show that QnRL achieves up to $82.9\%$ higher evaluation scores, with up to $94.3\%$ fewer parameters on average, more accurately estimates the expected return for unseen observations, and better adapts to varying stochastic conditions compared to the baseline. 
\end{abstract}

\begin{keywords}
Quantum computing, reinforcement learning, stochastic modeling.
\end{keywords}

\titlepgskip=-15pt

\maketitle

\section{Introduction}

\Ac{qrl} is a nascent field exploiting the unique properties of quantum mechanics to develop intelligent \ac{qml} agents that learn models of practical environments through experimentation with potentially greater computational efficiency and reduced time complexity, compared to purely classical, i.e., non-quantum, \ac{rl} \cite{Sutton2018ReinforcementLearningIntroduction} paradigms.
As such, \ac{qrl} admits diverse use cases such as financial trading and stock prediction \cite{Chen2025QuantumEnhancedForecastingDeep, Liu2025QuantumEnhancedReinforcementLearning}, control and navigation \cite{Pozza2022QuantumReinforcementLearning, Chen2023EfficientQuantumRecurrent, Liu2024QTRLPracticalQuantum}, satellite communication and coordination \cite{Park2024QuantumMultiAgentReinforcement, Kim2025QuantumReinforcementLearning}, games \cite{Park2025ItsAMeQuantumMario, Cai2025QuantumenhancedHybridDeep}, and smart factories \cite{Yun2023QuantumMultiAgentActorCritic}.
A common thread that links these applications is the need to learn a stochastic environment, which exhibits random behavior that is governed by probability distributions.
Canonically, this stochastic nature is conveyed to \ac{rl} agents through random rewards received as feedback upon taking an action from a learned decision-making strategy, i.e., a \emph{policy}.
Fundamentally, the agents in \ac{qrl} applications must develop policies that can handle the random nature of their stochastic environments.
Many existing \ac{qrl} architectures \cite{Chen2025QuantumEnhancedForecastingDeep, Liu2025QuantumEnhancedReinforcementLearning, Pozza2022QuantumReinforcementLearning, Chen2023EfficientQuantumRecurrent, Liu2024QTRLPracticalQuantum, Park2024QuantumMultiAgentReinforcement, Kim2025QuantumReinforcementLearning, Park2025ItsAMeQuantumMario, Cai2025QuantumenhancedHybridDeep, Yun2023QuantumMultiAgentActorCritic}, however, are designed using the same paradigms as classical \ac{rl} whereby they rely on modeling the \emph{expected outcome of a random environment}, e.g., the canonical Q-value \cite{Sutton2018ReinforcementLearningIntroduction}, instead of modeling the probability distributions which describe the random environment directly.
These probability distributions contain vital information about the random environment that conveys the full spectrum of possible outcomes, such as its variance and shape, which can serve as auxiliary signals to inform the agent's decision-making strategy.
Drawing upon the analogy of learning the behavior of a hidden, or ``black box'', stochastic process the majority of today's \ac{qrl} approaches \cite{Chen2025QuantumEnhancedForecastingDeep, Liu2025QuantumEnhancedReinforcementLearning, Pozza2022QuantumReinforcementLearning, Chen2023EfficientQuantumRecurrent, Liu2024QTRLPracticalQuantum, Park2024QuantumMultiAgentReinforcement, Kim2025QuantumReinforcementLearning, Park2025ItsAMeQuantumMario, Cai2025QuantumenhancedHybridDeep, Yun2023QuantumMultiAgentActorCritic} develop policies with respect to the average output of the black box instead of the underlying probability distributions that govern its behavior.
As such, existing \ac{qrl} architectures have a fundamentally limited perception of the hidden inner behavior of their stochastic environment because they are not capable of tapping into the auxiliary signals that a full distribution would provide.
This limits the capabilities of current \ac{qrl} agents, such as their sensitivity to risk, propensity for exploration, and adaptive potential, and presents a challenge to design quantum policies that model the probability distributions that govern the stochastic environments they learn.

In the classical domain, the concept of \ac{drl}~\cite{Bellemare2017DistributionalPerspectiveReinforcement, Bellemare2023DistributionalReinforcementLearning} was proposed in order to allow agents to gain a better understanding of their environment. In \ac{drl}, agents can model the probability distributions of expected rewards, called \emph{return distributions}, associated with taking actions based on given environment observations, and thus create more expressive decision strategies. However, \ac{drl} uses \acp{nn} to learn the return distributions and, hence, it is only learning a \emph{hidden representation} of the distribution that requires further downstream processing to interpret. Classical \acp{nn} conceptualize probability distributions as arbitrary neuron activations across the entire space of real numbers, because internally the network architecture does not restrict the learnable range of its output. 
This means that the generated distributions are only projections, or \emph{approximations}, of the random variables they emulate, which fundamentally limits the policy's ability to represent the underlying environment behavior. In essence, \acp{nn} are good function approximators, but they are not random variables. Hence, even when agents use \ac{drl}, there will always be inherent uncertainty in their approximation of unobserved events, and thus some characteristics of the stochastic environment remain obfuscated as a black box.
These limitations of \ac{drl} raise a key question: \emph{How can we fundamentally construct distributional models that more closely represent the physical stochastic processes, or random variables, which govern the environment?}

Answering this question can be more naturally done in the quantum domain, by observing that \emph{quantum states represent probability distributions that exist in the space of complex numbers}. 
In other words, quantum systems fundamentally operate as distributionally-native machines that are physically bound by normalization requirements.
This opens the door for constructing \emph{quantum-native} models that can physically represent and manipulate the random variables that govern the learning environment as a natural process of their operation. 
Furthermore, quantum-native models operate over the complex domain. Hence, they naturally possess an extra degree of freedom, compared to classical \acp{nn}, along the phase dimension of a quantum state to express these learned distributions.
Remarkably, there have been few prior works that explored the concept of \ac{qdrl} \cite{Hu2019DistributionalReinforcementLearning, Cherrat2023QuantumDeepHedging}, and more broadly distribution generation via quantum systems \cite{Zoufal2019QuantumGenerativeAdversarial, Du2020ExpressivePowerParameterized, Sweke2021QuantumClassicalLearnability, Zhu2022GenerativeQuantumLearning, Rao2023LearningHardDistributions, Barthe2025ParameterizedQuantumCircuits, Liu2018DifferentiableLearningQuantum, Iaconis2024QuantumStatePreparation, Sano2025QuantumStatePreparation}. However, all prior distributional analysis in \cite{Hu2019DistributionalReinforcementLearning, Cherrat2023QuantumDeepHedging, Zoufal2019QuantumGenerativeAdversarial, Du2020ExpressivePowerParameterized, Sweke2021QuantumClassicalLearnability, Zhu2022GenerativeQuantumLearning, Rao2023LearningHardDistributions, Barthe2025ParameterizedQuantumCircuits, Liu2018DifferentiableLearningQuantum, Iaconis2024QuantumStatePreparation, Sano2025QuantumStatePreparation} relies on sampling learned quantum states into the classical domain, which, due to quantum state collapse, loses information stored along the complex amplitude and phase dimensions that exist exclusively in the quantum domain. Hence, the quantum approaches in \cite{Hu2019DistributionalReinforcementLearning, Cherrat2023QuantumDeepHedging, Zoufal2019QuantumGenerativeAdversarial, Du2020ExpressivePowerParameterized, Sweke2021QuantumClassicalLearnability, Zhu2022GenerativeQuantumLearning, Rao2023LearningHardDistributions, Barthe2025ParameterizedQuantumCircuits, Liu2018DifferentiableLearningQuantum, Iaconis2024QuantumStatePreparation, Sano2025QuantumStatePreparation} act as interchangeable substitutes for classical \acp{nn}, and therefore face the same limitations as classical \ac{drl}.
Further, the shapes of the output distributions in this prior art are often tied to the topology of their quantum circuit architecture. Therefore, some characteristics of the environment random variables must be known a priori to generate distributions that accurately describe them, which is often neither practical nor feasible in many real-world applications.
From a quantum perspective, these limitations raise two key questions: \begin{enumerate*}
    \item How can we construct quantum distributional algorithms that do not lose information through classical sampling and maintain learned quantum domain features?
    \item How can we design flexible quantum generative model architectures such that they can produce many different probability distribution shapes without a priori environment knowledge?
\end{enumerate*}
Answering these questions presents a challenge that requires exploiting the synergy of quantum computers as distributionally-native machines to solve.
Hence, there is a need for a novel quantum-native \ac{drl} approach that more closely emulates the random variables that govern the stochastic environment, maintains quantum features for distributional analysis, and generates a comprehensive spectrum of distribution shapes, which have not been previously been explored.

\subsection{Related Works}\label{sec:qnrl:related_work}

Few prior works \cite{Hu2019DistributionalReinforcementLearning, Cherrat2023QuantumDeepHedging, Zoufal2019QuantumGenerativeAdversarial, Du2020ExpressivePowerParameterized, Sweke2021QuantumClassicalLearnability, Zhu2022GenerativeQuantumLearning, Rao2023LearningHardDistributions, Barthe2025ParameterizedQuantumCircuits, Liu2018DifferentiableLearningQuantum, Iaconis2024QuantumStatePreparation, Sano2025QuantumStatePreparation} have investigated the generation of probability distributions in quantum systems and its intersection with constructing \ac{qml} models that learn distributions as their target goal.
The prior works can be grouped into two broad categories: \begin{enumerate*}
    \item \ac{qdrl} \cite{Hu2019DistributionalReinforcementLearning,Cherrat2023QuantumDeepHedging}, which is directly related to the \ac{drl} setting; and
    \item distribution generation via quantum circuits \cite{Zoufal2019QuantumGenerativeAdversarial, Du2020ExpressivePowerParameterized, Sweke2021QuantumClassicalLearnability, Zhu2022GenerativeQuantumLearning, Rao2023LearningHardDistributions, Barthe2025ParameterizedQuantumCircuits, Liu2018DifferentiableLearningQuantum, Iaconis2024QuantumStatePreparation, Sano2025QuantumStatePreparation}, which spans a broad range of distributionally-based \ac{qml} applications.
\end{enumerate*}

Regarding the first category of \ac{qdrl}, in \cite{Hu2019DistributionalReinforcementLearning}, the authors propose a \ac{qdrl} approach using a classical \ac{nn} paired with a \ac{vqc} using sufficient qubits to represent the quantiles of the return distribution of all actions and Fock space measurement output (i.e., photon particle counts). Their \ac{nn} encodes a classical environment observation into a matrix of parameters, which are a feature map that of Pauli rotations onto the quantile qubits. The output of the \ac{vqc} are Fock measurements sampled from the quantile qubits, which produces a distribution of quantiles for all actions. Importantly, their approach samples the quantile distributions by measurement into the classical domain.
In \cite{Cherrat2023QuantumDeepHedging}, the authors develop a distributional actor-critic approach for deep hedging. They employ two independent \acp{qnn}, an ``actor'' to select actions, and a ``critic'' to evaluate the actor's decision, that are trained in tandem. In particular, their critic is designed as a \ac{qdrl} network to learn return distributions for environment state transitions over a finite time horizon. Importantly, their critic computes the expected return via an expected value measurement, in the range $[-1, 1]$, with respect to an observable constructed from the return distribution support. The total cumulative return along the trajectory is tracked separately and computed classically, and used to construct an exponentiated return function which serves as the critic loss target.
In both \cite{Hu2019DistributionalReinforcementLearning} and \cite{Cherrat2023QuantumDeepHedging}, information stored in the imaginary phase component of the quantum state is lost during the classical measurement sampling process, which means that any classical downstream analysis is not leveraging learned relationships along this extra dimension in complex space. This downsampling projection into classical space misses the broader gains from architecting a quantum-based distributional solution as they map to commensurate classical approaches.
As such, the solutions of \cite{Hu2019DistributionalReinforcementLearning} and \cite{Cherrat2023QuantumDeepHedging} cannot leverage the complex amplitude and phase components of the quantum distribution for comparison or analysis. 
Moreover, the quantum circuit of \cite{Hu2019DistributionalReinforcementLearning} is designed to replicate the behavior of its purely classical counterpart by outputting the quantiles of all actions in the system at the same time, requiring many qubits for repeated quantiles. Thus, the work in \cite{Hu2019DistributionalReinforcementLearning} also does not leverage the superposition nature of the quantum distributions by overlaying them to reduce the system size and resulting simulation computational overhead.
Additionally, the purpose of the distributional critic in \cite{Cherrat2023QuantumDeepHedging} is to aid in tuning the actor policy network during training, and is removed during deployed operation. Hence, the learned quantum return distributions are never used in deployment to influence the policy decision process. 
This, in turn, means their policy cannot analyze the risk of choosing actions based on observations from which it has had no prior exposure. Thus, their policy cannot adapt to unseen stochastic environment behavior.
Moreover, the target cumulative return is fixed as a exponentiated function of the return rather than the return itself, which forces the critic to tune toward a distribution that is not completely representative of the environment behavior.
Finally, the \ac{vqc} design in \cite{Cherrat2023QuantumDeepHedging} requires many qubits dependent on the time horizon length and the return distribution support in addition to dedicated ancillas, which can significantly increase the quantum system size in environments where long trajectories or large distribution supports are required.

The second category of quantum generative approaches can be further split into three groups: \begin{enumerate*}
    \item quantum state preparation using generative models \cite{Zoufal2019QuantumGenerativeAdversarial, Du2020ExpressivePowerParameterized, Sweke2021QuantumClassicalLearnability, Zhu2022GenerativeQuantumLearning, Rao2023LearningHardDistributions, Barthe2025ParameterizedQuantumCircuits},
    \item quantum Born machines \cite{Liu2018DifferentiableLearningQuantum}, and 
    \item matrix product state creation \cite{Iaconis2024QuantumStatePreparation, Sano2025QuantumStatePreparation}.
\end{enumerate*}
In \cite{Zoufal2019QuantumGenerativeAdversarial}, the authors propose a method for learning and loading random probability distributions into quantum states from classical data samples using a \ac{qgan} \cite{Dallaire-Demers2018QuantumGenerativeAdversarial, Lloyd2018QuantumGenerativeAdversarial}. 
%
In \cite{Du2020ExpressivePowerParameterized}, the authors prove that \acp{vqc} with a simple architecture can generate more expressive classical distributions than classical \ac{nn} models.
In \cite{Sweke2021QuantumClassicalLearnability}, the authors propose a quantum generative model for learning quantum states that closely approximate unknown classical distributions.
%
In \cite{Zhu2022GenerativeQuantumLearning}, the authors propose a method for learning classical joint probability distribution functions using a \ac{vqc} ansatz based on the family of multivariate distributions with uniform marginals.
%
The work in \cite{Rao2023LearningHardDistributions} develops a method for modeling the measurement distributions from unknown quantum systems using a hybrid \ac{qevae}, which enhances the generation of classical distributions via hidden quantum correlations.
%
Meanwhile, the work in \cite{Barthe2025ParameterizedQuantumCircuits} shows that \acp{vqc} are universal generative models for multivariate distributions and develops a method for expectation value sampling via quantum measurements. 
%
In \cite{Liu2018DifferentiableLearningQuantum}, the authors propose a gradient-based learning framework for \acp{qcbm} \cite{Cheng2018InformationPerspectiveProbabilistic, Benedetti2019GenerativeModelingApproach} that learn to approximate known classical distributions via \acp{vqc}. 
%
%
%
%
%
In \cite{Iaconis2024QuantumStatePreparation}, the authors propose a method using matrix product states for encoding Gaussian distributions as the amplitude of quantum states. 
%
Finally, in \cite{Sano2025QuantumStatePreparation}, the authors propose a matrix product state technique for loading classical probability distributions into quantum space using reflection symmetry. 
%
%
In terms of quantum state preparation, all prior works \cite{Zoufal2019QuantumGenerativeAdversarial, Du2020ExpressivePowerParameterized, Sweke2021QuantumClassicalLearnability, Zhu2022GenerativeQuantumLearning, Rao2023LearningHardDistributions, Barthe2025ParameterizedQuantumCircuits, Liu2018DifferentiableLearningQuantum, Iaconis2024QuantumStatePreparation, Sano2025QuantumStatePreparation} focus on the approximation of a classical distribution with the goal of achieving high replication accuracy in the classical space through quantum measurement sampling. Further, methods like those in \cite{Iaconis2024QuantumStatePreparation} and \cite{Sano2025QuantumStatePreparation} require a priori knowledge of the random variable to construct an appropriate \ac{vqc} architecture that generates the distribution.
Moreover, none of the prior works \cite{Zoufal2019QuantumGenerativeAdversarial, Du2020ExpressivePowerParameterized, Sweke2021QuantumClassicalLearnability, Zhu2022GenerativeQuantumLearning, Rao2023LearningHardDistributions, Barthe2025ParameterizedQuantumCircuits, Liu2018DifferentiableLearningQuantum, Iaconis2024QuantumStatePreparation, Sano2025QuantumStatePreparation} use the distributions learned by their quantum models in quantum space for analysis, and instead requires they be sampled into a classical representation for downstream applications. 

In summary, no prior work: \begin{enumerate*}
    \item investigates the preparation of random (i.e,. unknown) probability distributions as quantum states in non-generative settings, that is, where the goal is to achieve high replication accuracy,
    \item considers performing analysis on learned quantum distributions in quantum space to preserve both complex amplitude and phase components,
    \item considers the application of superimposing multiple random distributions in quantum space, nor
    \item designs generalized (i.e., non-bespoke) quantum circuit architectures that can generate quantum probability distributions of unknown shapes.
\end{enumerate*}
In other words, the relationship between quantum states, probability distributions, and their applications, particularly in the context of \ac{drl}, are largely unexplored.
These limitations, if not overcome, would restrict the applicability and potential benefits of quantum-based \ac{rl} architectures, requiring environment pre-interaction phases to tailor \ac{vqc} design to distribution shapes, and incurring information loss in classical downstream analysis.

\subsection{Contributions}

In contrast to the prior art, the main contribution of this paper is a novel framework dubbed \emph{\ac{qnrl}}, that brings a quantum advantage to the \ac{drl} paradigm by exploiting the amplitude normalization constraint of quantum systems to learn and analyze quantum distributions in their native complex space.
Our design allows a \ac{qrl} agent to build quantum-native policy models that generate conditional return distributions for all actions simultaneously as superimposed quantum states, analyze the moments of these distributions in their native complex space, and distill the comparison of these moments to measurement sampling of the quantum circuit. This reduces the distribution generation and action selection processes, which canonically were prone to approximation loss and not easily interpretable, down to choosing the action with the highest sampling frequency. In other words, in \ac{qnrl}, the agent inputs an environment observation into a quantum distributional model and measures the action qubits to retrieve the optimal action. 
The key benefit is that, with \ac{qnrl}, we learn these conditional return distributions as quantum states, which express the full range of the complex amplitude domain, and we derive the action distribution in situ from the quantum distribution of returns, which maintains their complex form and does not incur classical approximation loss.
Because the distribution of returns is learned as a quantum state, this also allows agents the flexibility to intercept and analyze the state of the system, which facilitates the design of more interpretable distributional models.
As will be evident from our analysis, \ac{qnrl} will be shown to learn a more optimal quantum policy model, achieve higher evaluation scores on average, and with significantly fewer parameters compared to baselines. 
%
In summary, our key contributions include:
\begin{itemize}
    \item We propose a novel \emph{\ac{qnrl}} framework which, to the best of our knowledge, is the first quantum-based \ac{drl} approach leveraging the \emph{fundamental nature of quantum superposition and quantum entanglement} to both generate quantum return distributions conditioned on input observations, and form an action policy using these quantum return distributions in situ.
    \item We propose a novel algorithm dubbed \emph{\ac{quak}} which, to the best of our knowledge, is the first quantum algorithm for simultaneously computing and comparing the $n$-th power of the $m$-th moments of multiple complex distributions entirely in quantum Hilbert space, making use of \emph{nonlinear trace non-increasing} Kraus operators. The ability to perform such comparisons unlocks the ability to perform statistical analysis directly in quantum systems without the need to sample into the classical domain, in both \ac{drl} and more generalized statistical analysis task regimes.
    \item We theoretically prove, and empirically demonstrate, that a conditional action policy distribution can be distilled from the moments of a quantum generative distributional model entirely within Hilbert space, thereby preserving learned features along the amplitude and phase dimensions of the quantum states. Moreover, we also prove that through quantum superposition and quantum entanglement \ac{quak} parallelizes both the computation and comparison of distributional moments more efficiently and with fewer computational resources than purely classical or hybrid classical-quantum approaches.
    \item We show that the our quantum distributional model learns observation relationships across the full range of the Hilbert space, generating distributions comprised of both amplitude and phase components that realize complex correlations between multiple random variables, which are unknown to both purely classical and classically-sampled quantum distributional approaches.
    \item We evaluate our proposed \ac{qnrl} framework and \ac{quak} algorithm across many diverse environments that comprise an extensive range of observation spaces and target goals, including classic control, grid-world navigation, and Atari games. We empirically demonstrate that \ac{qnrl} achieves evaluation scores up $82.9\%$ higher, and with up to $94.3\%$ fewer parameters, than the classical baseline. Further, we demonstrate that \ac{qnrl} learns a quantum distributional model that more accurately estimates the expected return for unseen observations.
\end{itemize}
%

The rest of this paper is organized as follows. \Cref{sec:qnrl:system} describes our problem formulation. In \cref{sec:qnrl:method}, we describe our proposed \ac{qnrl} framework and \ac{quak} algorithm. \Cref{sec:qnrl:exp} presents our experiment results and discussion. Finally, \cref{sec:qnrl:conclusion} draws key conclusions.

\section{System Model and Problem Statement}\label{sec:qnrl:system}

In this section, we formulate the design of distributional \ac{rl} agent policies in the classical regime, and then leverage properties of quantum mechanical systems to logically expand this formulation into the quantum domain.

\subsection{Distributional Learning Setting}

We consider a quantum-native \ac{drl} setting that extends upon the classical configuration of \cite{Bellemare2017DistributionalPerspectiveReinforcement, Bellemare2023DistributionalReinforcementLearning, Wiltzer2024DistributionalAnalogueSuccessor}
with a single agent interacting in an environment as described by a \ac{mdp} with 5-tuple $\tuple{\setstate, \setaction, p, r, \eta}_{\textrm{MDP}}$, where $\setstate = \set{x_{i}}_{i=0}^{\abs{\setstate}-1}$ and $\setaction = \set{a_{i}}_{i=0}^{\abs{\setaction}-1}$ are the state and action spaces, $p \colon \setstate \times \setaction \to \probspace{\setstate}$ is the state transition probability kernel with $\probspace{\cdot}$ being the probability space over a given set, $r \colon \setstate \times \setaction \to \R$ is an immediate reward function, and $\eta \in [0,1]$ is a discount factor.
An agent policy $\pi \colon \setstate \to \probspace{\setaction}$ is a map from the state space to a distribution over the action space. 
A trajectory $\tau^{\pi} = \set{\tuple{X_{t}, A_{t}, R_{t}}}_{t=0}^{T-1}$ is the set of random state-action-reward pairs an agent experiences through interacting with the environment over a time horizon of length $T \in [1,\infty)$, where $A_{t} \sim \pfuncgiven{\pi}{\cdot}{X_{t}}$, $R_{t} = \pfunc{r}{X_{t}, A_{t}}$, $X_{0} = x \in \setstate$, and $X_{t+1} \sim \pfuncgiven{p}{\cdot}{X_{t}, A_{t}}$ are random variables.
The \emph{random return} $\pfunc{Z^{\pi}}{x,a} = \sum_{t \in \tau^{\pi}} \eta^{t} R_{t}$ is the discounted sum of immediate rewards over a trajectory $\tau^{\pi}$ as governed by $X_{0} = x \in \setstate$ and $A_{0} = a \in \setaction$ for a given policy $\pi$.

In classical \ac{drl} the goal is typically to optimize a given policy $\pi$ according to $\probspace{\setreturn \given x, a}$, which is the \emph{full distribution of random returns} for an initial state-action pair $x \in \setstate$, $a \in \setaction$, whose support is the fixed set of return-value atoms $\setreturn = \set{z_{i}}_{i=0}^{\abs{\setreturn}-1}$ s.t.\ $Z^{\pi} \in \setreturn$. In other words, \ac{drl} optimizes \emph{distributions} as opposed to expectations of those distributions. This can then be represented according to the Bellman equation \cite{Bellman1957DynamicProgramming} in which the value function
\begin{equation}\label{eq:qnrl:qfunc}
\pfunc{Q^{\pi}}{x,a} \vcentcolon= \expectgiven{Z^{\pi}}{x,a} = \expectgiven{\sum_{t \in \tau} \eta^{t} R_{t}}{x,a},
\end{equation}
%
is learned via the expectation of the first moment, i.e., the mean, of the random variable $Z^{\pi}$ conditioned on an initial state-action pair. The optimal action $a^{*}$ according to policy $\pi$ for some initial state $x \in \setstate$ can therefore be found, in the case of a greedy policy, by selecting the action with the highest expected return according to:
\begin{equation}\label{eq:qnrl:a-star}
    a^{*} = \argmax_{a' \in \setaction} \expectgiven{Z^{\pi}}{x,a'}.
\end{equation}
It is with regard to this \ac{drl} setting from which we frame the problems that we address.

\subsection{Problem Formulation}

The learning process in purely classical \ac{drl} must perform the following four independent steps: \begin{enumerate*}
    \item use an \ac{nn} to generate the return distributions $\probspace{\setreturn \given x, a}~\forall a \in \setaction$ for a given $x \in \setstate$,
    \item compute the expected return via \cref{eq:qnrl:qfunc} for \emph{every action},
    \item perform a comparison between these expectations to find $a^{*}$ via \cref{eq:qnrl:a-star} and then preserve the associated return distribution as the \emph{target distribution} $\probspace{\setreturn \given x, a^{*}}$, and
    \item realign the atoms of the target distribution according to the current trajectory.
\end{enumerate*}
However, there are several limitations of this approach. 
First, the domain of the return distributions in \ac{drl} are real numbers, i.e., $\probspace{\setreturn \given \setstate, \setaction} \subseteq \R$. This formulation cannot model relationships between the observations, actions, and returns that have a complex component, i.e., $\probspace{\setreturn \given \setstate, \setaction} \subseteq \C$, which may be advantageous in certain environments.
Second, the evaluation of the return distributions in \cref{eq:qnrl:a-star} only considers their mean. This is a helpful gauge when the central tendency of the returns is desired, but does not consider other properties of the distributions, such as their variance (or ``spread''), which may be more crucial in different environments. 
Third, generating distributions using classical \acp{nn} equates to optimizing a set of parameters that produce logit activations that only form a probability distribution when followed by a downstream softmax layer. Importantly, the classical \ac{nn} is not the distribution itself, but rather a \emph{hidden representation} of that distribution, which is not easily explainable.
Lastly, the size of the classical \ac{nn} grows exponentially with $\setaction$ and $\setreturn$, which means that if $\setaction$ and $\setreturn$ are large the generation, analysis, comparison, and realignment steps become increasingly resource intensive. Moreover, to learn an optimal policy the model must experience a wide range of environment observations. If $\setstate$ is also large, then this increases the optimization time due to the large \ac{nn}, which also affects the computation complexity of model tuning.

Thus, we summarize the problems that face classical \ac{drl} as \begin{enumerate*}
    \item the real domain cannot capture complex relationships between observations, actions, and returns that exist in imaginary space,
    \item return distribution analysis only considers the mean, or average expected return, which does not account for other attributes such as the variance,
    \item classical \acp{nn} model a hidden representation of the return distributions rather than the distributions themselves, which are not readily interpretable, and
    \item the number of \ac{nn} parameters grows exponentially with the size of the action space and support of the return distribution, thereby becoming increasingly resource intensive when these spaces are large.
\end{enumerate*}
This is where quantum computing comes into play. In particular, we can bring a quantum advantage to the \ac{drl} paradigm by transitioning the generation and analysis processes into quantum space via learning the distribution of returns as superimposed and entangled quantum states.

\subsection{Quantum State Probability Distributions}

In quantum computing, a \emph{qubit} is the quantum mechanical analog to the classical bit, with a $q$-qubit \emph{state} $\ket{\psi}$ and its complex conjugate $\bra{\psi} = (\ket{\psi})^{\dagger}$ represented as the tensor product of $2^{q}$-dimensional unit vectors in complex Hilbert space $\hilbert^{\otimes q} \in \mathbb{C}^{2^{q}}$. All qubit states, here on referred to as \emph{quantum states}, can be expressed as a linear combination of any complete and orthonormal basis $\mathcal{B}\vcentcolon=\set{\ket{b}}_{b\in\mathcal{B}}$ of $\hilbert$ such that $\ket{\psi} = \sum_{i=0}^{\abs{\mathcal{B}}-1} c_{i} \ket{b_{i}}$, where $c_{i} = \braket{b_{i}}{\psi} \in \mathbb{C}$ is the complex amplitude of basis state $\ket{b_{i}}$, $\braket{b_{j}}{b_{i}}_{i=j} = 1$, and $\braket{b_{j}}{b_{i}}_{i \neq j} = 0$.
Quantum states can be more generally represented as density matrices which can either be \emph{pure} or \emph{mixed} (also called \emph{ensembles}) as follows:
\begin{equation}
\rho = \sum_{k} \ketbra{\psi_{k}}{\psi_{k}}, \,\,\, \textrm{s.t.} \,\,\, \rho ~\textrm{is} \begin{cases}
    \textrm{\emph{pure}}~ \textrm{if}~ \bfunc{\textrm{tr}}{\rho \rho^{\dagger}} = 1, \\
    \textrm{\emph{mixed}}~ \textrm{otherwise},
\end{cases}
\end{equation}
where $\bfunc{\textrm{tr}}{\cdot}$ is the trace of a square matrix.
If $\rho$ is pure, it is simply defined as the outer product of the pure state:
\begin{equation}
\rho = \ketbra{\psi}{\psi} = \sum_{i,j=0}^{\abs{\mathcal{B}}-1} c_{i}c_{j}^{*} \ketbra{b_{i}}{b_{j}}.
\end{equation}
The inner product of the quantum pure state is also defined, and like the basis must also satisfy $\braket{\psi}{\psi} = 1$. Hence, we have the following normalization constraint on the complex state amplitudes:
\begin{equation}
\braket{\psi}{\psi}
= \bfunc{\textrm{tr}}{\rho}
= \sum_{j,i=0}^{\abs{\mathcal{B}}-1} c_{j}^{*}c_{i} \braket{b_{j}}{b_{i}}
= \sum_{i=0}^{\abs{\mathcal{B}}-1} \underbrace{\abs{c_{i}}^{2}}_{p(b_{i})}
= 1,
\end{equation}
which is by definition a probability distribution $\probspace{\mathcal{B}}$ over the support of $\mathcal{B}$, where the probability of measuring any basis state is
\begin{equation}
    p(b_{i}) = \bra{b_{i}} \rho \ket{b_{i}} = \abs{c_{i}}^{2}.
\end{equation}
This means that \emph{all quantum states are also inherently probability distributions over the support of a given basis}. 
%
Further, because the quantum state amplitude is complex $c_{i} = \alpha_{i} + j \beta_{i}$ we have flexibility over the composition of the atom probabilities, e.g., choosing $\alpha_{i} = \sqrt{p(b_{i})}$ and $\beta_{i} = 0$, vice versa, or a mixture.
%
Therefore, we can re-frame the amplitude constraint above to express the quantum state as a function of basis probabilities like so:
\begin{equation}
\ket{\psi} = \sum_{i=0}^{\abs{\mathcal{B}}-1} \sqrt{p(b_{i})} \ket{b_{i}}, \,\,\, \textrm{s.t.} \,\,\, \abs{\braket{b_{i}}{\psi}}^{2} = p(b_{i}),
\end{equation}
with density matrix
\begin{equation}
\rho = \sum_{i,j=0}^{\abs{\mathcal{B}}-1} \sqrt{p(b_{i}) p(b_{j})} \ketbra{b_{i}}{b_{j}},
\end{equation} 
which aligns with the quantum state distribution formulation of previous works \cite{Zoufal2019QuantumGenerativeAdversarial, Rao2023LearningHardDistributions, Iaconis2024QuantumStatePreparation, Sano2025QuantumStatePreparation}.
From this we can say that \emph{the magnitude of the amplitude of state $b_{i}$ is the square-root of its measurement probability}.
Moreover, because quantum gates $U \in \set{\mathbb{I}, H, X, Y, Z, \dots} $ are unitary operations (i.e., $U U^{\dagger} = \mathbb{I})$, and thereby preserve the amplitude normalization constraint, \emph{the quantum state as a result of a unitary gate will also be a probability distribution}. This means that quantum gates can be used to manipulate quantum state distributions, and by extension \acp{vqc} can learn to generate these state distributions using parameterized gate operations, such as $\pfunc{R_{U}}{\theta} = \exp{(-j U \theta / 2)}$ $\forall U \in \set{X, Y, Z}$.

These key facts, that \begin{enumerate*}
    \item \emph{quantum states naturally represent probability distributions}, and
    \item \emph{quantum state distributions can be manipulated by parameterized gate operations},
\end{enumerate*}
form the basis our proposed solution for transitioning the classical \ac{drl} process into quantum space by \emph{learning distributions of returns natively as complex quantum states}.
Doing so facilitates learning return distributions that have extra degrees of freedom along the imaginary dimension of the complex space, which allows them to represent correlations with the environment observations that may not be apparent in a purely real classical model. Further, it allows multiple distributions to be superimposed atop one another, which facilitates parallel computation. Lastly, the size of quantum systems are proportional to $\log_{2}(\cdot)$ of their classical counterparts. Specifically, the number of qubits required to represent a basis $\mathcal{B}$ is $q_{\mathcal{B}} = \lceil \log_{2}{\abs{\mathcal{B}}} \rceil \ll \abs{\mathcal{B}}$, which means the number of parameters necessary to prepare a quantum distribution of returns can also be significantly reduced compared to classical approaches.
In summary, there is a need for a novel approach that can leverage the quantum domain as a distributionally-native learning medium, as proposed next.

\section{Proposed Q\MakeLowercase{n}RL framework} \label{sec:qnrl:method}

%
%
%
\Figure[t!](topskip=0pt, botskip=0pt, midskip=0pt)[width=\linewidth]{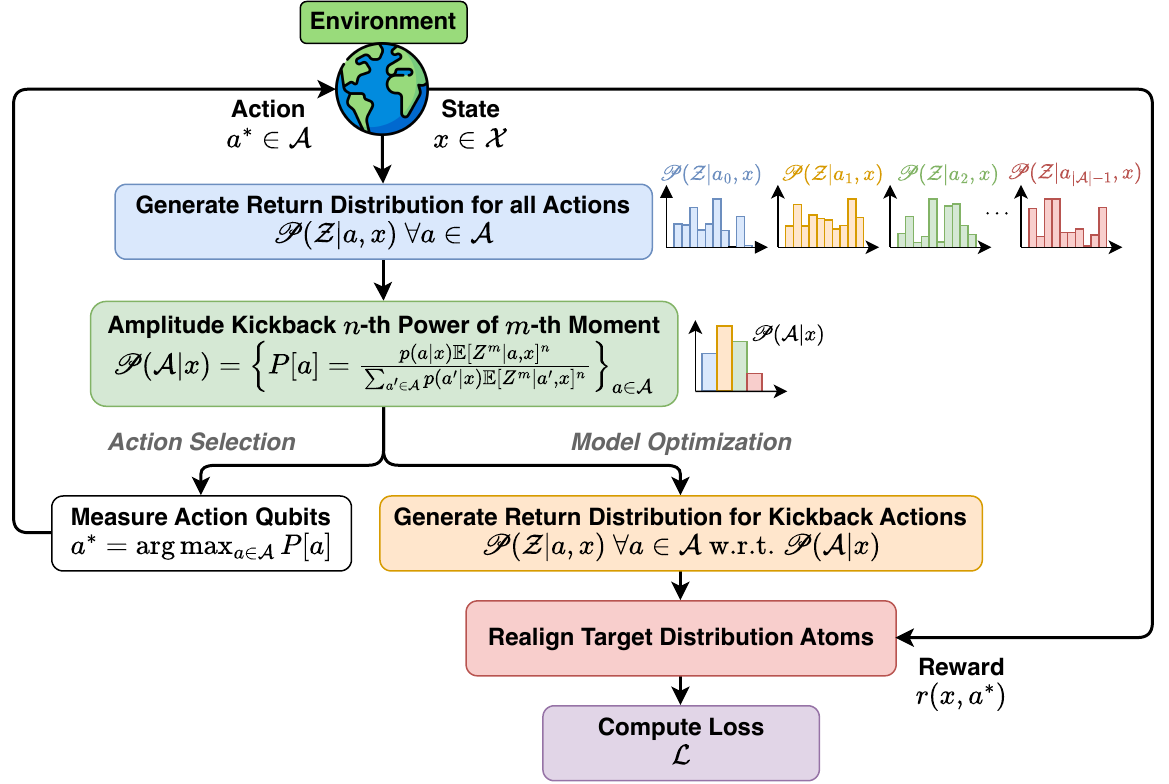}{General design of our \ac{qnrl} framework. A \ac{qnrl} agent generates superimposed quantum conditional return distributions (blue) for all actions given an environment observation, and then runs \ac{quak} algorithm (green) to derive an action distribution relative to the normalized $n$-th power of the $m$-th moment of each conditional return distribution. Action selection (white) is performed by measuring the action qubits. During training, the agent updates the quantum distributional model by generating discounted return distributions for selected target actions (orange), realigns the atoms of this target distribution (red), and then computes policy loss relative to this target distribution (purple).\label{fig:qnrl:framework:overview}}

\begin{figure*}[t!]
\centering
\includegraphics[width=\linewidth]{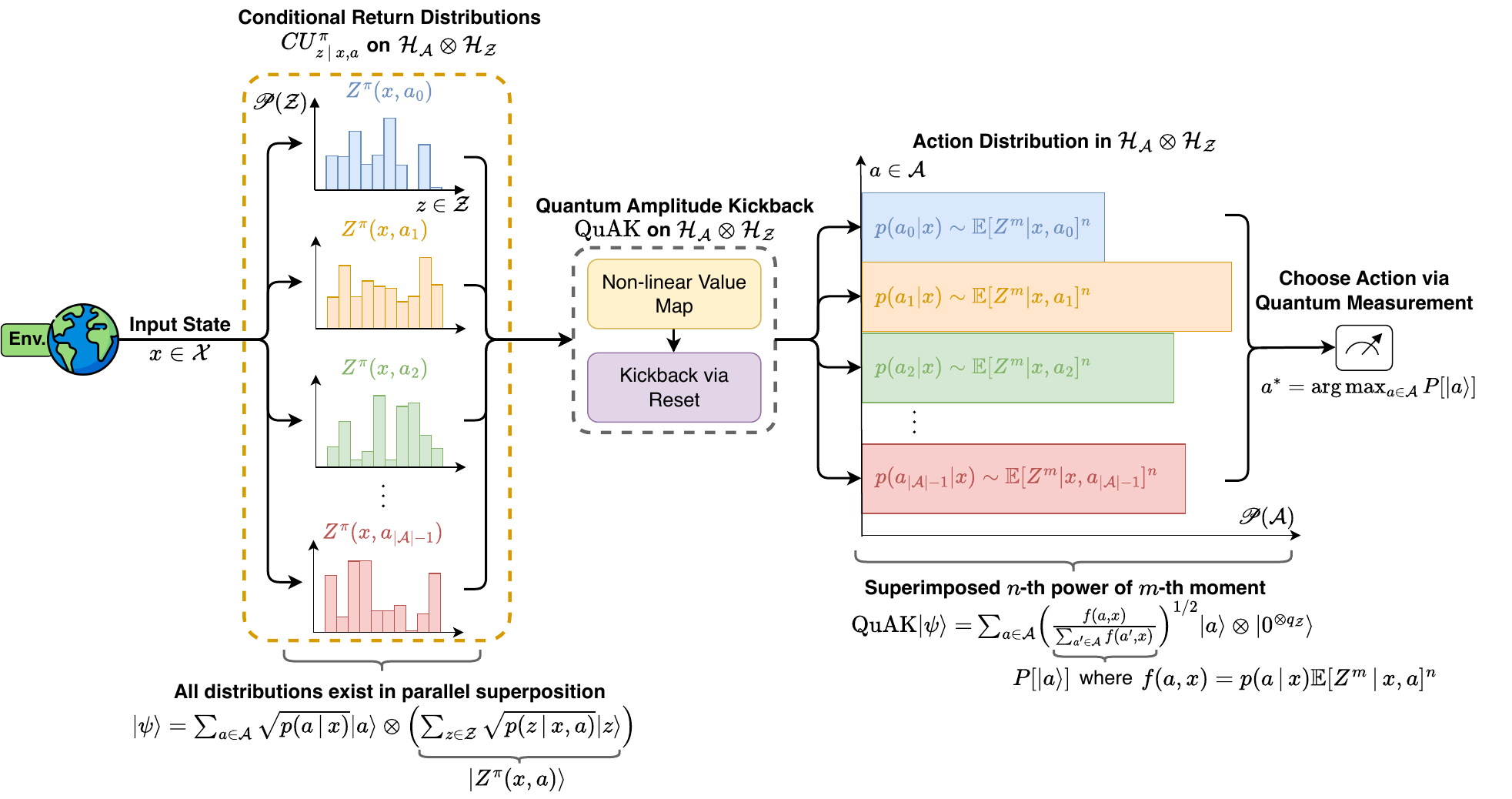}
\caption[Overview of action selection process of \acs{qnrl} using \acs{quak} algorithm.]{High-level view of action selection process in \acs{qnrl} using \acs{quak} algorithm. The agent generates a superimposed quantum state (dashed orange) that is the distribution of returns conditioned on an input observation for every action. The agent then applies a nonlinear value map via Kraus operator (yellow) followed by a quantum reset operation (purple) to map the superimposed distributions of returns into a distribution of actions, where the probability of each action is proportional to the normalized $n$-th power of the $m$-th moment of its return distribution. The agent then measures the action qubits (white) to determine the best action.\label{fig:qnrl:framework:quak_overview}}
\end{figure*}

Our proposed \ac{qnrl} framework is a new approach to \ac{rl} that lies at the intersection of classical \ac{drl} \cite{Bellemare2017DistributionalPerspectiveReinforcement,Bellemare2023DistributionalReinforcementLearning}, and both classical \cite{Mnih2015HumanlevelControlDeep} and quantum value-based \ac{rl} \cite{Skolik2021QuantumAgentsGym}. Inspired by \ac{drl}, we propose to learn a \emph{quantum distributional model of returns} for a set of actions, and then perform statistical analysis on these return distributions to produce a \emph{quantum distribution over the action space}. The keys to our approach, however, are that we learn the distributions of returns as \emph{quantum state probability distributions}, all statistical analysis is performed on these states \emph{natively within quantum space} via nonlinear quantum operators paired with quantum measurements, and that the action policy is formulated by \emph{sampling a quantum distribution of actions derived from our analysis} via quantum circuit measurement.
An overview of our framework is shown in \cref{fig:qnrl:framework:overview}, and a specific view of the action selection phase is shown in \cref{fig:qnrl:framework:quak_overview}.

We consider a combined quantum system of two Hilbert spaces $\hilbert_{\setaction} \otimes \hilbert_{\setreturn} = \C^{2^{q_{\setaction}}} \otimes \C^{2^{q_{\setreturn}}}$ with $q_{\setaction} = \ceil*{\log_{2}{\abs{\setaction}}}$ and $q_{\setreturn} = \ceil*{\log_{2}{\abs{\setreturn}}}$ qubits for the action and return spaces respectively.
From \cref{fig:qnrl:framework:overview}, the two main elements of \ac{qnrl} are the \emph{return distribution generation} phase and the \emph{statistical analysis} phase. All operations for both phases occur within the same quantum system described by $\hilbert_{\setaction} \otimes \hilbert_{\setreturn}$.
First, we prepare quantum return distributions for all actions conditioned on an input observation, i.e., $\forall x \in \setstate: \probspace{\setreturn \given x, a}~\forall a \in \setaction$, via a generalized quantum circuit architecture that we refer to as the \emph{quantum distributional model}. The quantum return distributions for the actions are superimposed, i.e., $\probspace{\probspace{\setreturn}}$, to allow parallel analysis in the next phase.
Next, the expectation of these superimposed quantum return distributions, which we will refer to as their \emph{moments}, are analyzed in situ via a bespoke quantum algorithm using an \emph{amplitude kickback} trick with nonlinear operations, which we refer to as \emph{\ac{quak}}, and transposed into a quantum state distribution of actions conditioned on the input observation, i.e., $\forall x \in \setstate: \probspace{\setaction \given x}$.
This trick enables the \emph{normalized} moments for each $\pfunc{\mathscr{P}}{\mathcal{Z} \given x, a}$ to be encoded as amplitudes within the $\hilbert_{\setaction}$ space. The qubits of $\hilbert_{\setaction}$ can then be sampled to obtain the state-action combination with the \emph{largest moment} relative to others.
Effectively, this reduces the canonical multi-step classical \ac{drl} policy selection procedure to a highly efficient, and distributionally native, quantum circuit algorithm.
In other words, in \ac{qnrl} the policy and return-value distributions are the quantum state. This is very unique compared to classical \ac{drl}, which \emph{models} the return-value distribution through an \ac{nn}, and quantum value-based approaches, which \emph{approximate} the return-value estimate through an expected quantum circuit measurement coupled with a downstream classical \ac{nn}. In \ac{qnrl} there is no intermediary representation to derive the policy, because \emph{the quantum state distribution is the policy}.

During training, our \ac{qnrl} framework is divided into core stages: \begin{enumerate*}
    \item Quantum return distribution generation via the quantum distributional model,
    \item Moment analysis of the quantum return distributions via \ac{quak},
    \item Action selection derived from moment analysis, and
    \item Distributional model optimization via realignment,
\end{enumerate*}
which comprises all of \cref{fig:qnrl:framework:overview}. During execution, only stages 1 through 3 are used, i.e., the 4th stage of model optimization is omitted, which is the right branch of \cref{fig:qnrl:framework:overview}.
\Cref{fig:qnrl:framework:quak_overview} shows a high-level view of the action selection process in \ac{qnrl}. In particular, quantum return distributions are generated via our quantum distributional model in $\hilbert_{\setreturn}$ for individual actions, and then superimposed in $\hilbert_{\setaction} \otimes \hilbert_{\setreturn}$ for all actions, followed by statistical analysis of distribution moments via \ac{quak}, and then action selection via measurement which is derived from this analysis.
To understand how the quantum return distributions are prepared, we next discuss our proposed quantum distributional model.

\subsection{Quantum Distributional Model} \label{sec:qnrl:method:model}

\begin{figure*}[t]
\centering
\includegraphics[width=\linewidth]{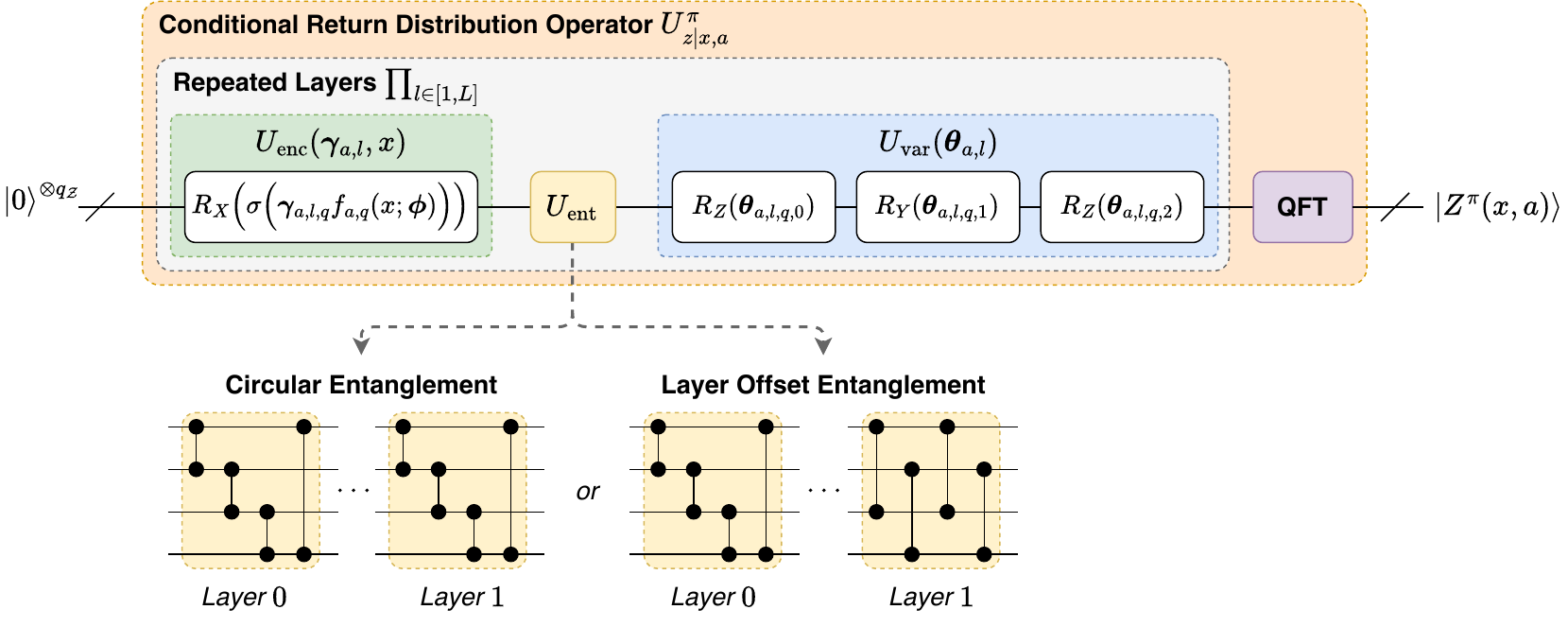}
\caption[Quantum circuit for \acs{qnrl} model.]{Quantum circuit for our proposed conditional return distribution generator $U^{\pi}_{z \,|\, x,a}$, which has cascaded layers of encoding (green), entanglement (yellow), and variational (blue) operators, followed by a final \acs{qft} (purple). In this work we use two variations of entanglement, which are \emph{circular entanglement} (left), and \emph{layer offset entanglement} (right) respectively.\label{fig:qnrl:framework:circuit_model}}
\end{figure*}

At the core of our proposed \ac{qnrl} is a quantum distributional model $U^{\pi}_{z \,|\, x,a}$, which is the first stage in \cref{fig:qnrl:framework:quak_overview}, that learns to prepare the conditional return distribution $\pfuncgiven{\mathscr{P}}{\mathcal{Z}}{x, \mathcal{A}}$ as a quantum state $\ket{Z^{\pi}(x, a)}$ $\forall x \in \setstate, a \in \setaction$. The \ac{vqc} design for this operator is shown in \cref{fig:qnrl:framework:circuit_model}.

The model \ac{vqc} is comprised of $L$ cascaded layers of \emph{encoding}, \emph{entanglement}, and \emph{variational} operators, each with a component that is dependent on the layer index, followed by a \emph{\ac{qft}} as the last operation in the circuit.
First, we define the environment observation encoding operator as follows:
\begin{definition}[Encoding operator]\label{def:qnrl:U_enc}
The \emph{trainable encoding layer} maps the agent's environment observation $\bm{x} \in \R^{\abs{\bm{x}}}$ into a quantum state via the operator
\begin{equation}\label{eq:qnrl:U_enc}
    \pfunc{U_{\textrm{enc}}}{\bm{\gamma}, \bm{\phi}, \bm{x}} = \bigotimes_{d=0}^{q_{\setreturn}-1} \pfunc{R_{X}}{\pfunc{\sigma}{\bm{\gamma}_{a,l,d} \pfunc{f_{a,d}}{\bm{x};\bm{\phi}}}},
\end{equation}
where $a \in \setaction$ is a specific action, $l \in [0,L-1]$ is the layer index, $\bm{\gamma} \in \R^{\abs{\setaction} \times L \times q_{\setreturn}}$ is a matrix of trainable scaling parameters, $\pfunc{f}{\bm{x};\bm{\phi}} = \bm{\phi} \cdot \pfunc{\textrm{flatten}}{\bm{x}}^{T} + b$ is a single classical fully-connected \ac{nn} layer with parameters $\bm{\phi} \in \R^{\abs{\bm{x}} \times \abs{\setaction} q_{\setreturn}}$ and $b$ that maps the environment observation into a shape compatible with the number of actions and number return qubits, and $\sigma : \R \to \R$ is an optional squash activation function.
\end{definition}
Note that $\pfunc{f}{\bm{x};\bm{\phi}}$, the small classical \ac{nn}, is only necessary when operating in an environment with classical observations. This exists to learn a mapping of the classical environment features into quantum Hilbert space.
Importantly, the inclusion of $\bm{\gamma}$ in \cref{eq:qnrl:U_enc} eliminates the dependence of the classical \ac{nn} on the layer depth, dramatically reducing the total number of model parameters when in classical environments.
The encoder is the first operation in the sequence and creates an initial quantum state for the system based upon the environment state. The encoder is followed by an entanglement operator, which binds the state of individual qubits in the system, and is defined as follows:
\begin{definition}[Entanglement operator]\label{def:qnrl:U_ent}
The \emph{non-trainable entanglement layer} links the state of neighboring qubits. We employ two slightly different schemes of entanglement defined by the operator
%
\begin{equation}\label{eq:qnrl:U_ent}
    U_{\textrm{ent}} = \prod_{d=0}^{q_{\setreturn}-1} CZ_{q_{\setreturn}-1-d,D},
\end{equation}
such that
\begin{equation}
    D = \begin{cases}
        q_{\setreturn}-2-d \bmod{q_{\setreturn}-1} & \textrm{if \emph{circular}},
        \\ q_{\setreturn}-2-d-l \bmod{q_{\setreturn}-1} & \textrm{if \emph{offset}},
    \end{cases}
\end{equation}
where $l \in [0,L-1]$ varies the target neighbor qubit in the offset entanglement scheme based on the cascaded layer index, and circular entanglement uses the same target neighbor qubit pairing for every layer.
\end{definition}
We treat the entanglement scheme as a tunable hyperparameter of the model, which as we will see in the experiments (\cref{sec:qnrl:exp}) affects the shape of the learned return distribution and the overall expressiveness of the distributioanl model.
The entanglement operation is then followed by a variational operation defined as follows:
\begin{definition}[Variational operator]\label{def:qnrl:U_var}
The \emph{trainable variational layer} applies a sequence of Pauli Y and Z rotations as given by the operator
\begin{equation}\label{eq:qnrl:U_var}
    \pfunc{U_{\textrm{var}}}{\bm{\theta}} = \bigotimes_{d=0}^{q_{\setreturn}-1} \pfunc{R_{Z}}{\bm{\theta}_{a,l,d,2}} \pfunc{R_{Y}}{\bm{\theta}_{a,l,d,1}} \pfunc{R_{Z}}{\bm{\theta}_{a,l,d,0}},
\end{equation}
where $\bm{\theta} \in [0, 2\pi]^{\abs{\setaction} \times L \times q_{\setreturn} \times 3}$ is a matrix of rotation angle parameters.
\end{definition}
The \ac{vqc} is composed of multiple layers of these encoding, entanglement, and variational operations that immediately follow each other in sequence.
The last operation in the \ac{vqc} is a \ac{qft} that maps a frequency-domain representation of the return distribution to the time domain. The inclusion of the \ac{qft} in our proposed \ac{vqc} model plays a critical role in the expression of the conditional return distribution and is a key result from our experiments. To understand the role of the \ac{qft}, we first state the well-known definition of its classical sibling the \ac{dft}.
\begin{definition}[Classical \acrshort{dft} \cite{VanLoan1992ComputationalFrameworksFast}]\label{def:dft}
The \emph{classical \ac{dft}} maps a vector $\paren{x_{0},\dots,x_{N-1}} \in \C^{N}$ to another vector $\paren{y_{0},\dots,y_{N-1}} \in \C^{N}$ via the relation:
\begin{equation}\label{eq:dft}
    y_{k} = \sum_{n=0}^{N-1} x_{n} \exp{\paren*{-\frac{2 \pi j}{N} k n}}, \quad \forall k \in [0,N-1].
\end{equation}
\end{definition}
The quantum variant builds upon this definition as follows:
\begin{definition}[\acrshort{qft} \cite{Coppersmith2002ApproximateFourierTransform}]\label{def:qft}
In a $q$-qubit quantum system with $N=2^q$ basis states, the \emph{\ac{qft}} maps a quantum state $\ket{x} = \sum_{i=0}^{N-1} x_i \ket{i}$ to another quantum state $\ket{y} = \sum_{i=0}^{N-1} y_i \ket{i}$ via:
\begin{equation}\label{eq:qft:0}
    y_{k} = \frac{1}{\sqrt{N}}\sum_{n=0}^{N-1} x_{n} \exp{\paren*{\frac{2 \pi j}{N} k n}}, \quad \forall k \in [0,N-1].
\end{equation}
If the state $\ket{x}$ is a computational basis state, then the \ac{qft} can be expressed as the operator mapping:
\begin{equation}\label{eq:qft:1}
    \textrm{QFT} : \ket{x} \to \frac{1}{\sqrt{N}}\sum_{n=0}^{N-1} \exp{\paren*{\frac{2 \pi j}{N} x n}} \ket{n}.
\end{equation}
\end{definition}
Notice that sign of the exponents in \cref{eq:qft:0,eq:qft:1} are different from \cref{eq:dft}. Hence, the classical \ac{dft} is equivalent to the inverse quantum operator $\textrm{QFT}^{\dagger}$. Therefore, the inclusion of the \ac{qft} in our model maps a learned frequency representation of the return distribution back into a discrete time-domain signal, which is the actual distribution. This frequency-to-time method results in a more expressive and consistent distribution across multiple diverse training environments.
Building upon \cref{def:qnrl:U_enc,def:qnrl:U_ent,def:qnrl:U_var,def:qft} we can express the complete model \ac{vqc} as a single operator, as follows.
\begin{definition}[Conditional return distribution operator]\label{def:qnrl:U_z_x_a}
Given a quantum system in Hilbert space $\hilbert_{\mathcal{Z}}$, the operator
\begin{equation}\label{eq:qnrl:U_z_x_a}
    U^{\pi}_{z \,|\, x,a} = \textrm{QFT} \prod_{l=0}^{L-1} \pfunc{U_{\textrm{var}}}{\bm{\theta}_{a,l}} U_{\textrm{ent}} \pfunc{U_{\textrm{enc}}}{\bm{\gamma}_{a,l}, \bm{\phi}, \bm{x}},
\end{equation}
prepares the quantum state representing a \emph{conditional distribution of returns} in density matrix form
\begin{equation}
\begin{split}
    U^{\pi}_{z \,|\, x,a} 
    & : \ket{0^{\otimes q_{\setreturn}}} \bra{0^{\otimes q_{\setreturn}}} 
    \to \rho_{Z^{\pi} \given x,a},
\end{split}
\end{equation}
such that
\begin{equation}\label{eq:qnrl:rho_Z_x_a}
\begin{split}
    \rho_{Z^{\pi} \given x,a}
    & = \ket{\pfunc{Z^{\pi}}{x,a}}\bra{\pfunc{Z^{\pi}}{x,a}}
    \\ & = \sum_{z_j, z_l \in \mathcal{Z}} \sqrt{p(z_j \given x, a) p(z_l \given x, a)} \ket{z_j} \bra{z_l},
\end{split}
\end{equation}
is conditioned on a given action $a \in \setaction$ and environment state $x \in \setstate$, where $U_{\textrm{enc}}$, $U_{\textrm{ent}}$, $U_{\textrm{var}}$, and $\textrm{QFT}$ follow from \cref{def:qnrl:U_enc,def:qnrl:U_ent,def:qnrl:U_var,def:qft} respectively.
\end{definition}

Here, we note that \cref{def:qnrl:U_z_x_a} operates on a single Hilbert space of return atoms and is defined for a single arbitrary action. We further leverage quantum superposition to expand the effective range of this operator to both the action and environment state Hilbert spaces by defining a controlled variant that incorporates all possible action combinations. To effectively express this operation, it will be convenient to first define an operation on the action Hilbert space as follows:
\begin{definition}[Action distribution operator]\label{def:qnrl:U_a_x}
Given an initial state of all zeros in the combined Hilbert space $\hilbert_{\mathcal{A}} \otimes \hilbert_{\mathcal{Z}}$, the following operator prepares a \emph{distribution over the action space}:
\begin{equation}\label{eq:qnrl:U_a_x}
\begin{split}
\paren*{U_{a \given x} \otimes \mathbb{I}^{\otimes q_{\setreturn}}}
& : \ket{0^{\otimes q_{\mathcal{A}} + q_{\mathcal{Z}}}}\bra{0^{\otimes q_{\mathcal{A}} + q_{\mathcal{Z}}}} 
%
\\ & \to \rho_{A \given x} \otimes \ket{0^{\otimes q_{\mathcal{Z}}}}\bra{0^{\otimes q_{\mathcal{Z}}}},
\end{split}
\end{equation}
such that
\begin{equation}\label{eq:qnrl:rho_a_x}
    \rho_{A \given x} = \sum_{a_i, a_j \in \mathcal{A}} \sqrt{p(a_i \given x) p(a_j \given x)} \ket{a_i} \bra{a_j}.
\end{equation}
\end{definition}
For example, if an equal superposition of actions is desired, i.e., $p(a \given x) = 1/\abs{\mathcal{A}}$, one could select $U_{a \given x} = H^{\otimes q_{\mathcal{A}}}$, which is simply the Hadamard gate on every action qubit.
Equipped with the definition of the action distribution, we define a variant of \cref{def:qnrl:U_z_x_a} that superimposes the return distributions of all actions in parallel as follows:
\begin{definition}[Controlled conditional return distribution operator]\label{def:qnrl:CU_z_x_a}
Given a quantum state in the combined Hilbert space $\hilbert_{\mathcal{A}} \otimes \hilbert_{\mathcal{Z}}$ representing a distribution over the action set $\mathcal{A}$ according to \cref{eq:qnrl:U_a_x}, the \emph{conditional probability distribution of returns controlled on the action space} is prepared using the following operator:
\begin{equation}\label{eq:qnrl:CU_z_x_a}
\begin{split}
& CU^{\pi}_{z \given x, a} 
: \rho_{A \given x} \otimes \ket{0^{\otimes q_{\mathcal{Z}}}}\bra{0^{\otimes q_{\mathcal{Z}}}}
\to \rho_{Z^{\pi} \given x, A},
\end{split}
\end{equation}
such that
\begin{equation}\label{eq:qnrl:rho_Z_x_A}
    \rho_{Z^{\pi} \given x, A} = \sum_{\substack{a_i, a_k \in \mathcal{A} \\ z_j, z_l \in \mathcal{Z}}} \sqrt{g_{x}(a_i, z_j) g_{x}(a_k, z_l)} \ket{a_i, z_j} \bra{a_k, z_l},
\end{equation}
where $U^{\pi}_{z \given x, a}$ follows from \cref{eq:qnrl:U_z_x_a}, and $g_{x}(a, z) = p(a \given x) p(z \given x, a)$ for brevity. 
\end{definition}
Note that \cref{eq:qnrl:CU_z_x_a} is not strictly unitary because we only include the zero term $\mathbb{I}^{\otimes q_{\setaction}} \otimes \ket{0^{\otimes q_{\mathcal{Z}}}}\bra{0^{\otimes q_{\mathcal{Z}}}}$ on the input. In other words, the effect of applying $CU^{\pi}_{z \given x,a}$ on an all-zero input state in the $\hilbert_{\mathcal{Z}}$ space. While a complete definition for $CU^{\pi}_{z \given x,a}$ would also include effects of all other non-zero states in the $\hilbert_{\mathcal{Z}}$ space (i.e., $\{\ket{i}\bra{i}\}_{i=1}^{2^{\abs{\mathcal{Z}}}-1}$) their effect as a result of applying onto a known state prepared from \cref{eq:qnrl:U_a_x} is inconsequential, i.e., we only consider systems that have a known all-zero starting state. We therefore omit these extra terms in the derivations for brevity.

Now that we have defined all the distributional model components, the next step is to \emph{translate this generative model into an actionable policy} at a high level.
The challenge here is that the learned features of the quantum return distribution, which reside in Hilbert space, are lost upon measurement sampling into the classical domain during traditional hybrid quantum-classical policy derivation methods. Hence, a purely quantum approach is necessary to preserve these complex feature relationships in the policy.
Towards achieving this, we next discuss our proposed \ac{quak} algorithm for generating an action policy using the conditional distributional model entirely in quantum Hilbert space.

\subsection{Proposed Amplitude Kickback Algorithm} \label{sec:qnrl:method:quak}

\begin{figure*}[t!]
\centering
\includegraphics[width=\linewidth]{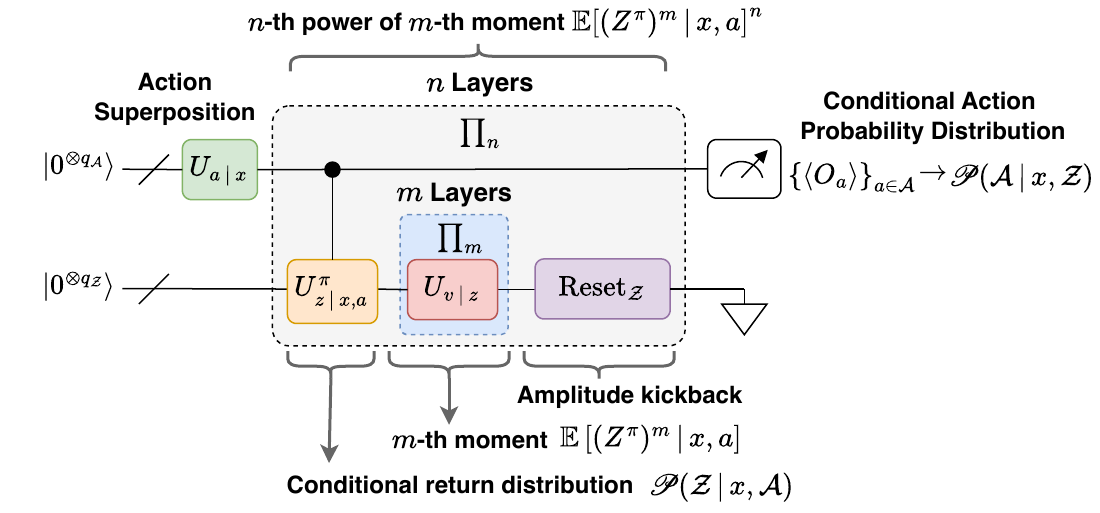}
\caption[Quantum circuit for \acs{quak} algorithm.]{Quantum circuit for \acs{quak} algorithm (gray). An initial action distribution (green) is used to seed the cascaded generation of conditional return distributions (orange), followed by cascaded nonlinear return-value maps via Kraus operator (red), and a final reset over the return qubits (purple) to generate an action distribution that is proportional to the normalized $n$-th power of the $m$-th moment of each respective return distribution. Measurements on the action qubits (white) sample from this weighted distribution.\label{fig:qnrl:framework:quak_circuit}}
\end{figure*}

Our quantum distributional model learns a quantum state given by \cref{eq:qnrl:rho_Z_x_A} that superimposes the conditional return distribution $\probspace{\setreturn \,|\, x, \setaction}$ for all actions. Although this provides a model of the agent's environment reward behavior, this alone is not sufficient to construct a policy. We particularly need a relationship for the reward behavior between actions that can effectively serve as a scoring method. In \cite{Bellemare2017DistributionalPerspectiveReinforcement}, this scoring is done classically through the expectation of the return distribution for every action followed by a selection phase using $\argmax$ as given by \cref{eq:qnrl:qfunc,eq:qnrl:a-star}. This classical process is computationally intensive and does not scale when $\setstate$ and $\setaction$ are large. This is a challenge when generating probability distributions via classical \acp{nn} because the model size grows exponentially proportional to $\setaction$. Likewise, the policy optimization process is also negatively affected by the size of $\setstate$ due to the search space exploring the environment, which increases the number of distributions that must be generated to learn an optimal policy.
To address these challenges, we propose \ac{quak}, a purely quantum algorithm for simultaneously computing the $n$-th power of the $m$-th moment of superimposed random distributions and comparing these moments via quantum measurement entirely in quantum Hilbert space. An overview of this algorithm is shown in \cref{fig:qnrl:framework:quak_overview} and the quantum system architecture design is shown in \cref{fig:qnrl:framework:quak_circuit}. We summarize this algorithm in \cref{alg:qnrl:quak}.
To formally define \ac{quak}, however, it will be convenient to first state the well-known definition for quantum projective measurement:

\begin{definition}[Projective measurement {\cite[Sec.~8.2]{Nielsen2012QuantumComputationQuantum}}]\label{def:qnrl:measurement}
A generalized \emph{quantum projective measurement} $M_{\mathcal{B}}$ is an operation that maps a quantum state in density matrix form $\rho$ to a mixed state using a set of projection operators $\set{\Pi_b}_{b \in \mathcal{B}}$ given by
\begin{equation}\label{eq:qnrl:measurement}
\bfunc{M_{\mathcal{B}}}{\rho} 
= \frac{\bfunc{\mathcal{E}_{\mathcal{B}}}{\rho}}{\bfunc{\textrm{tr}}{\bfunc{\mathcal{E}_{\mathcal{B}}}{\rho}}}
= \frac{\sum_{b \in \mathcal{B}} \Pi_b \rho \Pi^{\dagger}_b}{\bfunc{\textrm{tr}}{\sum_{b \in \mathcal{B}} \Pi_b \rho \Pi^{\dagger}_b}}
= \rho',
\end{equation}
where $\rho \in \mathcal{C}^{2^q}$ is the density matrix for a quantum system of $q$ qubits, $b \in \mathcal{B}$ is the measurement basis, $\bfunc{\mathcal{E}_{\mathcal{B}}}{\cdot}$ is the sum representation for the operators in basis $\mathcal{B}$, and $\bfunc{\textrm{tr}}{\cdot}$ is the trace operator (which is simply the sum of the diagonal elements). Notably, this is the \emph{generalized} form of a projective measurement that includes the trace normalization term in the denominator, which is compatible with all completely positive and trace non-increasing projectors such that $\sum_{b} \Pi_b \Pi^{\dagger}_b \leq \mathbb{I}$, also referred to as \emph{Kraus operators} \cite{Kraus1983StatesEffectsOperations,Heinosaari2012ExtendingQuantumOperations,Wood2015TensorNetworksGraphical,Nielsen2012QuantumComputationQuantum}. For example, if measuring in the Pauli-$Z$ basis, $\Pi_b = \ket{b}\bra{b}$, $\forall b \in \set{0,2^{q}-1}$. 
Further, the post-measurement state of a particular outcome $b$ is defined as
\begin{equation}\label{eq:qnrl:measurement_outcome}
\bfunc{M_{b \in \mathcal{B}}}{\rho} 
= \frac{\Pi_b \rho \Pi^{\dagger}_b}{\bfunc{\textrm{tr}}{\sum_{b \in \mathcal{B}} \Pi_b \rho \Pi^{\dagger}_b}}
= \rho'_{b},
\end{equation}
with probability
\begin{equation}\label{eq:qnrl:measurement_outcome_prob}
\bfunc{P}{\rho'_{b}}
= \bfunc{\textrm{tr}}{\rho'_{b}}
= \frac{\bfunc{\textrm{tr}}{\Pi_b \rho \Pi^{\dagger}_b}}{\bfunc{\textrm{tr}}{\sum_{b \in \mathcal{B}} \Pi_b \rho \Pi^{\dagger}_b}}.
\end{equation}
\end{definition}

Our proposed \ac{quak} also requires the well-known quantum reset operation, which is defined as follows:

\begin{definition}[Qubit reset {\cite[Sec.~12.4]{Nielsen2012QuantumComputationQuantum},~\cite{Wierichs2024IntroductionMidcircuitMeasurements}}] 
In a quantum system spanning two Hilbert spaces $\hilbert_{\mathcal{A}} \otimes \hilbert_{\mathcal{Z}}$, a \emph{reset operation} over subspace $\hilbert_{\mathcal{Z}}$ of an arbitrary state $\rho \in \hilbert_{\mathcal{A}} \otimes \hilbert_{\mathcal{Z}}$ is defined as
\begin{equation}\label{eq:qnrl:reset}
\bfunc{\textrm{Reset}_{\mathcal{Z}}}{\rho}
= \bfunc{\textrm{tr}_{\mathcal{Z}}}{\bfunc{M_{\mathcal{Z}}}{\rho}} \otimes \ket{0^{\otimes q_{\mathcal{Z}}}}\bra{0^{\otimes q_{\mathcal{Z}}}},
\end{equation}
which collapses and conditionally sets qubits $q_{\mathcal{Z}}$ in $\hilbert_{\mathcal{Z}}$ to zero based on the mixed measurement result $\bfunc{M_{\mathcal{Z}}}{\rho}$ over the eigenbasis of $\mathcal{Z}$ according to \cref{eq:qnrl:measurement}, and where $\textrm{tr}_{\mathcal{Z}} : \hilbert_{\mathcal{A}} \otimes \hilbert_{\mathcal{Z}} \to \hilbert_{\mathcal{A}}$ is the partial trace operator which traces out, i.e., removes, the subspace $\hilbert_{\mathcal{Z}}$.
\end{definition}

We can now prove that a quantum generative distributional model can be distilled into an action policy distribution entirely within quantum space by dividing our \ac{quak} process into two parts: \begin{enumerate*}
    \item the core \emph{amplitude kickback} phase, and 
    \item the \emph{policy} (or conditional action probability distribution sampler).
\end{enumerate*}
We will show that \ac{quak} allows quantum return distributions to be manipulated in their native complex form, thus not losing information through classical sampling, and offers a method of parallelization by superimposing them in $\hilbert_{\setaction} \otimes \hilbert_{\setreturn}$, which enables more efficient computation than classical expectation.
\begin{theorem}[\ac{quak}]\label{thm:qnrl:quak:moment}
In a quantum system with two Hilbert spaces $\hilbert_{\setaction} \otimes \hilbert_{\setreturn}$ with $q_{\setaction}$ and $q_{\setreturn}$ qubits representing action and return distributions, the conditional $m$-th moment of the return distribution raised to an arbitrary power $n$, i.e., $\bfunc{\mathbb{E}}{(Z^{\pi})^{m} \given x,a}^{n}$ $\forall a \in \setaction$, can be simultaneously computed across all actions entirely in the combined quantum space using the operator
\begin{equation}\label{eq:qnrl:quak:0}
\begin{split}
    & \textrm{QuAK} =
    \\ & \underbrace{\prod_{n} \vphantom{\prod_{m}}}_{\textrm{Power}} \Bigl[ \underbrace{\textrm{Reset}_{\setreturn} \vphantom{\prod_{m}}}_{\textrm{Kickback}} \underbrace{\prod_{m} \bigl[ \mathbb{I} \otimes U_{v \given z} \bigr]}_{\textrm{Moment}} \underbrace{CU^{\pi}_{z \given x,a} \vphantom{\prod_{m}}}_{\textrm{Return dist.}} \Bigr] \bigl(\underbrace{U_{a \given x} \otimes \mathbb{I} \vphantom{\prod_{m}}}_{\textrm{Action dist.}}\bigr),
\end{split}
\end{equation}
which is the quantum state map
\begin{equation}\label{eq:qnrl:quak:1}
\begin{split}
    \textrm{QuAK}
    & : \ket{0^{\otimes q_{\mathcal{A}} + q_{\mathcal{Z}}}}\bra{0^{\otimes q_{\mathcal{A}} + q_{\mathcal{Z}}}} 
    \\ & \to \rho_{A \given x,Z^{\pi}} \otimes \ket{0^{\otimes q_{\setreturn}}} \bra{0^{\otimes q_{\setreturn}}}
\end{split}
\end{equation}
such that
\begin{equation}\label{eq:qnrl:rho_A_x_Z}
    \rho_{A \given x,Z^{\pi}}
    = \frac{\sum_{a \in \mathcal{A}} p(a \given x) \bfunc{\mathbb{E}}{(Z^{\pi})^{m} \given x,a}^{n} \ket{a}\bra{a}}{\sum_{a' \in \mathcal{A}} p(a' \given x) \bfunc{\mathbb{E}}{(Z^{\pi})^{m} \given x,a'}^{n}}
\end{equation}
is the distribution of actions proportional to the $n$-th power of the $m$-th moment of respective returns, and
\begin{align}\label{eq:qnrl:U_v_z}
\begin{split}
    U_{v \given z} 
    &= \sum_{z \in \mathcal{Z}} \sqrt{v_z} \ket{z}\bra{z},
    \\& \forall \set{v_{z}}_{z \in \mathcal{Z}} = \pfunc{G}{\mathcal{Z}} ~\textrm{s.t.}~ \sum_{z \in \mathcal{Z}} v_z = 1,
\end{split}
\end{align}
is a \emph{completely positive}, \emph{self-adjoint}, and \emph{trace non-increasing} value-encoding operator for the return-atom set $\mathcal{Z}$, and $\set{v_{z}}_{z \in \mathcal{Z}} = \pfunc{G}{\mathcal{Z}}$ is a probability generating function for the return values such that $\sum_{z \in \mathcal{Z}} v_z = 1$, i.e., the return values form a probability distribution.
\end{theorem}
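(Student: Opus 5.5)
I plan to prove \cref{thm:qnrl:quak:moment} by tracking the density matrix through each stage of \cref{eq:qnrl:quak:0} from right to left. I will first establish the single-round case $n=1$ and then induct on the number of outer ``Power'' rounds.

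\textbf{Setup.} Applying $U_{a \given x}\otimes\mathbb{I}$ to the all-zero state gives $\rho_{A\given x}\otimes\ket{0}\bra{0}$ by \cref{def:qnrl:U_a_x}. Applying $CU^{\pi}_{z\given x,a}$ then gives $\rho_{Z^{\pi}\given x,A}$ of \cref{eq:qnrl:rho_Z_x_A} by \cref{def:qnrl:CU_z_x_a}. I will write this state as the pure state
\[
\ket{\Psi}=\sum_{a,z}\sqrt{p(a\given x)\,p(z\given x,a)}\,\ket{a,z}.
\]

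\textbf{Moment stage.} Since $U_{v\given z}$ is diagonal in the $\ket{z}$ basis, the $m$-fold product $\prod_m[\mathbb{I}\otimes U_{v\given z}]$ equals $\mathbb{I}\otimes\sum_z v_z^{m/2}\ket{z}\bra{z}$. The unnormalized post-map state therefore has amplitudes $\sqrt{p(a\given x)\,p(z\given x,a)\,v_z^{m}}$. Here $v_z$ plays the role of the encoded return value: $v_z=G(z)$, normalized so that the operator is trace non-increasing, and I will identify it with $z$ up to this normalization constant. Complete positivity, self-adjointness, and $U_{v\given z}^\dagger U_{v\given z}\le\mathbb{I}$ follow immediately from $0\le v_z\le 1$. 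This is what lets me invoke the generalized (trace-normalized) measurement of \cref{def:qnrl:measurement}.

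\textbf{Kickback stage.} $\textrm{Reset}_{\setreturn}$ measures $\hilbert_{\setreturn}$ with $\Pi_z=\mathbb{I}\otimes\ket{z}\bra{z}$ and traces out that register. This removes the cross terms in $z$, and the partial trace sums the $z$-diagonal. The reduced operator on $\hilbert_{\setaction}$ is
\[
\sum_{a,a'}\sqrt{p(a\given x)p(a'\given x)}\sum_z\sqrt{p(z\given x,a)p(z\given x,a')}\,v_z^{m}\,\ket{a}\bra{a'}.
\]
Its diagonal entries are $p(a\given x)\sum_z p(z\given x,a)v_z^{m}=p(a\given x)\,\mathbb{E}[(Z^{\pi})^m\given x,a]$, which is the kickback of the moment onto the action amplitudes. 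Dividing by the trace gives the denominator of \cref{eq:qnrl:rho_A_x_Z} for $n=1$.

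\textbf{Main obstacle: the off-diagonal terms.} For $a\ne a'$ the entries involve the overlaps $\sum_z\sqrt{p(z\given x,a)p(z\given x,a')}\,v_z^m$, which do not vanish in general. The claimed form \cref{eq:qnrl:rho_A_x_Z} is diagonal, so these must be eliminated. I would try first to read the action-register state as the distribution sampled by the final measurement of the action qubits, i.e. apply $M_{\setaction}$ as in \cref{fig:qnrl:framework:quak_circuit}. This dephases the action register and keeps only the diagonal, which is all that matters for the measurement probabilities \cref{eq:qnrl:measurement_outcome_prob}. Alternatively, I would argue that the controlled return preparations for distinct actions are effectively orthogonal after the reset.

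\textbf{Induction on $n$.} For the power, I induct on the outer rounds. Round $k+1$ starts from the diagonal state $\propto\sum_a p(a\given x)\,\mathbb{E}[\cdot\given x,a]^{k}\ket{a}\bra{a}\otimes\ket{0}\bra{0}$, which the reset restored. Re-applying $CU^{\pi}$, the moment maps, and the reset multiplies each diagonal weight by another factor of $\mathbb{E}[(Z^{\pi})^m\given x,a]$. The normalizations telescope into a single overall normalization, which yields \cref{eq:qnrl:rho_A_x_Z} with the reset register left in $\ket{0^{\otimes q_{\setreturn}}}$, as claimed in \cref{eq:qnrl:quak:1}. A second subtle point is that normalization inside $M_{\setreturn}$ and in the Kraus-normalized form is consistent across rounds. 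I would handle this by carrying unnormalized operators throughout and normalizing only once at the end, which is legitimate because every map used is linear before the trace normalization.
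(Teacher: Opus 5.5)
Your overall route is the paper's. You track the density matrix stage by stage, do one round explicitly, extend in $n$, and obtain a diagonal matrix only from the final action measurement $M_{\setaction}$. The gap is in your induction on $n$. You assume round $k+1$ starts from a state that is diagonal in the action basis, ``which the reset restored''. That contradicts your own kickback computation. $\textrm{Reset}_{\setreturn}$ measures and traces out only $\hilbert_{\setreturn}$, so after every round the action register still carries coherences $\propto\sqrt{p(a\given x)p(a'\given x)}\,K_{aa'}$. Nothing in the QuAK operator dephases $\hilbert_{\setaction}$ between rounds. The hypothesis you induct on is therefore false, and as written the step never shows that these coherences cannot feed into later diagonal entries.

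The missing idea is the structure of a single round. Write $CU^{\pi}_{z\given x,a}\ket{a,0^{\otimes q_{\setreturn}}}=\ket{a}\otimes\ket{\psi_a}$. Then one round, before normalization, maps $\ketbra{a}{a'}\mapsto K_{aa'}\ketbra{a}{a'}$, where $K_{aa'}=\bra{\psi_{a'}}U_{v\given z}^{2m}\ket{\psi_a}$ and $K_{aa}=\sum_{z}p(z\given x,a)v_z^{m}$. In other words, it is a Schur multiplier on the action register. Diagonal entries therefore evolve autonomously, and the round map commutes with dephasing; this is exactly what would license your diagonal shortcut.

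If you induct on the full matrix and normalize once at the end (legitimate, as you say), the output has entries $\sqrt{p(a\given x)p(a'\given x)}\,K_{aa'}^{n}/\sum_{b}p(b\given x)K_{bb}^{n}$. The diagonal is the claimed one. The off-diagonals vanish only if $U_{v\given z}^{m}\ket{\psi_a}\perp U_{v\given z}^{m}\ket{\psi_{a'}}$, which fails generically; for example, identical return states for two actions give a pure, non-diagonal output. So your ``effective orthogonality after the reset'' alternative cannot work.

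What is provable is the statement after the final $M_{\setaction}$, i.e.\ the measurement probabilities. That is also all the paper's computation establishes: its $n=2$ step starts from the non-dephased $\tilde{\rho}_{m,1}$ and becomes diagonal only when the action register is measured. With the corrected hypothesis, your induction handles general $n$ more cleanly than the paper's extrapolation from $n=1,2$.

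Separately, identifying $v_z$ with $z$ ``up to a normalization constant'' works only for a pure rescaling $v_z=cz$, whose factor $c^{mn}$ cancels in the ratio. That is incompatible with $v_z\ge 0$ when atoms are negative, and with the shifted $G$ used in the paper. What the circuit actually computes is $\sum_z p(z\given x,a)v_z^{m}$, the $m$-th moment of $G(Z^{\pi})$; the paper's proof makes the same identification without justification.
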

\begin{proof}
See \cref{app:qnrl:proof:quak:moment}.
\end{proof}

From \cref{thm:qnrl:quak:moment}, we observe that quantum probability distributions prepared in $\hilbert_{\setreturn}$, and superimposed in $\hilbert_{\setaction} \otimes \hilbert_{\setreturn}$, can be translated into a distribution of their moments in $\hilbert_{\setaction}$ using only quantum operations. In other words, we can perform statistical analysis using the moments of quantum distributions directly within their native complex space, without classical sampling or estimation.

In \ac{quak} there is one \ac{qrl} agent that has one quantum system of at least two Hilbert spaces $\hilbert_{\setaction} \otimes \hilbert_{\setreturn} = \C^{2^{q_{\setaction}}} \otimes \C^{2^{q_{\setreturn}}}$ with $q_{\setaction} = \ceil*{\log_{2}{\abs{\setaction}}}$ and $q_{\setreturn} = \ceil*{\log_{2}{\abs{\setreturn}}}$ qubits for the action and return spaces respectively. 
The algorithm is a \ac{vqc} that begins by first preparing a quantum state representing an initial action distribution $\probspace{\setaction \given x}$ conditioned on their current environment state $x \in \setstate$ according to \cref{eq:qnrl:U_a_x}, which we refer to as $\rho_{A \given x}$ as given by \cref{eq:qnrl:rho_a_x}. Generally, it is unknown which action will produce a higher expected return a priori, hence the most common starting state will be equal superposition of all actions via the Hadamard operation across all action qubits. This is not a limitation, however, as any starting action distribution could be used. Also, note that this initial action distribution only depends on the environment state, not the return distribution.

The agent then employs the quantum distributional model given by \cref{eq:qnrl:CU_z_x_a} with parameters $\bm{\gamma}$, $\bm{\phi}$, and $\bm{\theta}$ for the encoding and variational layers. This operator generates return distributions $\probspace{\setreturn \given x, \setaction}$ conditioned on the current environment observation $x$ and the action set, thus it is controlled by the action qubits and operates on the return qubits respectively. Importantly, these distributions are superimposed within $\hilbert_{\setaction} \otimes \hilbert_{\setreturn}$, i.e., coexisting in the same space. 
From a machine learning perspective, this controlled nature makes \cref{eq:qnrl:CU_z_x_a} behave as if it were separate \ac{nn} models for each action.
This is unique compared to classical \ac{drl} where a single \ac{nn} generates the return distributions for all actions at once as a large matrix, which is both computationally intensive and wasteful if only a subset of actions are desired. Using \cref{eq:qnrl:CU_z_x_a}, we can selectively generate a distribution for any given action, and by superimposing multiple actions in $\hilbert_{\setaction}$ we can generate multiple distributions that occupy the same space. 
In terms of \acs{ml} optimization, this controlled nature means only a subset of the weights in \cref{eq:qnrl:CU_z_x_a} are tuned during the training process based upon the action with the highest expected return, i.e., the gradient of non-selected parameters are zero, which further reduces the computational burden compared to classical \ac{drl}.

Building upon \cref{sec:qnrl:system}, this controlled strategy can be further expanded in discrete environment state spaces $\setstate$ by superimposing the environment state onto a third Hilbert space $\hilbert_{\setstate}$ and controlling \cref{eq:qnrl:CU_z_x_a} on the combined space $\hilbert_{\setstate} \otimes \hilbert_{\setaction}$. This expansion would be useful in applications such as graph optimization where sets of state and action combinations could be superimposed to find optimal navigation routes.

The distribution model is followed by a moment-generating operation $U_{v \given z}$ from \cref{eq:qnrl:U_v_z} which is both a critical phase in and a key finding of our proposed algorithm. 
To evaluate the expectation of return atoms we must represent the \emph{values} of the return distribution $\set{z}_{z \in \setreturn}$ in quantum Hilbert space. In typical \ac{rl} settings, the \emph{raw values} of the returns are unbounded, and thus cannot be linearly mapped to a quantum operation.
In our \ac{qnrl} setting, however, we are interested in the \emph{comparison} between expectation of returns amongst the action set. Because the set of return atoms is both discrete and explicitly defined in our setting, the range of the return values is known a priori. In light of this, we can apply a normalization to the raw return values to preserve their relative magnitudes, thereby allowing us to represent them in quantum space as a bounded quantum operator.
We therefore define \cref{eq:qnrl:U_v_z}, which is an operator over $\hilbert_{\setreturn}$ that encodes the return values into a quantum state that is both \emph{diagonal} in the $\setreturn$ basis and \emph{self-adjoint}, i.e., Hermitian, where $\pfunc{G}{\setreturn}$ is a probability generating function such that the return values form a probability distribution. We will use the following generating function:
\begin{equation}
    G(\setreturn) = \set*{v_{z} = \frac{z - \min{\setreturn}}{\sum_{z' \in \setreturn} z' - \min{\setreturn}}}_{z \in \setreturn}.
\end{equation}

Because the return values are real and form a probability distribution, from \cref{def:qnrl:measurement}, we can see that $U_{v \given z}$ is \emph{completely positive}, \emph{self-adjoint}, and $U_{v \given z}^{\dagger} U_{v \given z} \leq \mathbb{I}$, i.e., \emph{trace non-increasing}, which means that $U_{v \given z}$ is a valid Kraus operator \cite{Kraus1983StatesEffectsOperations,Heinosaari2012ExtendingQuantumOperations,Wood2015TensorNetworksGraphical,Nielsen2012QuantumComputationQuantum} and can therefore be implemented on quantum hardware \cite{Schlimgen2022QuantumStatePreparation}.
Although $U_{v \given z}$ is self-adjoint, it is \emph{not unitary}. This means the resulting state after solely applying $U_{v \given z}$ will not be normalized. This normalization is resolved, however, by subsequently applying the $\textrm{Reset}_{\setreturn}$ operation \cref{eq:qnrl:reset} on the return space $\hilbert_{\setreturn}$.
This reset operation also serves a dual purpose as the act of resetting the return qubits performs a trace over $\hilbert_{\setreturn}$ which is equivalent to the expectation of returns $\expectgiven{Z^{\pi}}{x,a}$ $\forall a \in \setaction$ disguised as normalized quantum state amplitudes. Because of this trace behavior, we can reapply the value operation $U_{v \given z}$ an arbitrary number of times $m$ to alter the expectation into an arbitrary moment computation, i.e., the random variable raised to the power of $m$, given by $\expectgiven{(Z^{\pi})^{m}}{x,a}$ $\forall a \in \setaction$. These normalized moment amplitudes are ``kicked back'' to the action space $\hilbert_{\setaction}$ as a result of the reset, which re-weights the action probability distribution according to the relative moment of returns for each action.
To further spread this updated action distribution we can reapply the previous conditional return distribution generation \cref{eq:qnrl:CU_z_x_a}, moment computation \cref{eq:qnrl:U_v_z}, and reset operations \cref{eq:qnrl:reset} an arbitrary number of times $n$ to effectively raise each moment to the power of $n$, which is $\expectgiven{(Z^{\pi})^{m}}{x,a}^{n}$ $\forall a \in \setaction$, resulting in the state $\rho_{A \given x,Z^{\pi}}$ as given by \cref{eq:qnrl:rho_A_x_Z}.

It follows from \cref{thm:qnrl:quak:moment} that the action distribution conditioned on the moment of returns, which in our \ac{drl} setting is the agent \emph{policy}, can simply be found by sampling the action qubits after applying \cref{eq:qnrl:quak:0} as we show next.

\begin{theorem}[Action policy distribution]\label{thm:qnrl:quak:policy}
Sampling the action space $\hilbert_{\setaction}$ via quantum measurement after applying amplitude kickback from \cref{thm:qnrl:quak:moment} forms the conditional action probability distribution, or \emph{policy}, as follows:
\begin{equation}\label{eq:qnrl:dist_A_x_Z}
    \pfunc{\pi}{A \given x, Z} 
    \vcentcolon= \probspace{\setaction \given x, \setreturn} 
    = \set*{P[\rho_{a \given x,Z^{\pi}}]}_{a \in \setaction},
\end{equation}
such that the probability of measuring action $a \in \setaction$ is
\begin{equation}\label{eq:qnrl:prob_a_x_Z}
\begin{split}
    P[\rho_{a \given x,Z^{\pi}}]
    & = \bfunc{\textrm{tr}}{\bfunc{M_{a \in \setaction}}{\rho_{A \given x,Z^{\pi}}}}
    \\ & = \frac{p(a \given x) \bfunc{\mathbb{E}}{(Z^{\pi})^{m} \given x,a}^{n}}{\sum_{a' \in \mathcal{A}} p(a' \given x) \bfunc{\mathbb{E}}{(Z^{\pi})^{m} \given x,a'}^{n}},
\end{split}
\end{equation}
where $P[\cdot]$ follows from \cref{eq:qnrl:measurement_outcome_prob}, and $\rho_{A \given x,Z^{\pi}}$ from \cref{eq:qnrl:rho_A_x_Z}.
\end{theorem}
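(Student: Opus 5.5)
The plan is to take the output state of \cref{thm:qnrl:quak:moment} as given and apply \cref{def:qnrl:measurement} directly. By \cref{eq:qnrl:quak:1}, after QuAK the joint state is $\rho_{A \given x,Z^{\pi}} \otimes \ket{0^{\otimes q_{\setreturn}}}\bra{0^{\otimes q_{\setreturn}}}$. The return register is then in a fixed pure product state, so measuring the action qubits in the computational basis sees only the reduced state $\rho_{A \given x,Z^{\pi}}$. I would make this precise by using projectors $\Pi_a = \ket{a}\bra{a} \otimes \mathbb{I}^{\otimes q_{\setreturn}}$ and noting that the trace factorizes over the tensor product, with $\bfunc{\textrm{tr}}{\ket{0}\bra{0}} = 1$. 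This reduces the problem to measuring $\rho_{A \given x,Z^{\pi}}$ alone with $\Pi_a = \ket{a}\bra{a}$.

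Next I would substitute \cref{eq:qnrl:rho_A_x_Z}. Write $w_a = p(a \given x)\,\bfunc{\mathbb{E}}{(Z^{\pi})^{m} \given x,a}^{n}$ and $W = \sum_{a'} w_{a'}$. Then
\[
\rho_{A \given x,Z^{\pi}} = \frac{1}{W}\sum_{a} w_a \ket{a}\bra{a}.
\]
This state is diagonal in the computational basis, so orthonormality gives $\Pi_a \rho \Pi_a^{\dagger} = (w_a/W)\ket{a}\bra{a}$. Summing over $a$ returns $\rho$ itself, and its trace is $1$. By \cref{eq:qnrl:measurement_outcome_prob},
\[
P[\rho_{a \given x,Z^{\pi}}] = \bfunc{\textrm{tr}}{(w_a/W)\ket{a}\bra{a}} = \frac{w_a}{W},
\]
which is exactly \cref{eq:qnrl:prob_a_x_Z}. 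Collecting these probabilities over $a \in \setaction$ gives the set in \cref{eq:qnrl:dist_A_x_Z}.

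Finally I would check that this set is a genuine probability distribution, so that calling it a policy $\pi(A \given x, Z)$ is justified. Nonnegativity holds because each $w_a \ge 0$. This follows because $p(a \given x) \ge 0$ and each $v_z \ge 0$ under $G$, so the encoded moments are nonnegative. The probabilities sum to $W/W = 1$.

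The main obstacle is well-definedness: the normalization requires $W > 0$. I would argue that this is already implicit in \cref{thm:qnrl:quak:moment}, whose normalized output presupposes a nonzero trace before the Reset. Beyond that, the result is a direct corollary of the diagonal form of $\rho_{A \given x,Z^{\pi}}$.
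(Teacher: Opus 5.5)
Your proposal is correct and is essentially the paper's own argument for this part: the combined proof in \cref{app:qnrl:proof:quak:moment} also measures the action register with $\Pi_a = \ket{a}\bra{a}$ (ignoring the return register since it has been reset to $\ket{0^{\otimes q_{\setreturn}}}$), notes that the normalization trace equals $1$, and reads off the weights $p(a \given x)\,\mathbb{E}[(Z^{\pi})^{m} \given x,a]^{n}$ divided by their sum. The one difference is that the paper applies this measurement to the reduced state it actually derives, $\textrm{tr}_{\setreturn}[\rho'_{m,n}]$, which still has off-diagonal $\ket{a_i}\bra{a_k}$ terms that the projectors discard, whereas you start from the already-diagonal form in \cref{eq:qnrl:rho_A_x_Z}; because the outcome probabilities depend only on the diagonal entries $\bra{a}\rho\ket{a}$, both routes give the same result.
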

\begin{proof}
See \cref{app:qnrl:proof:quak:moment}.
\end{proof}

\begin{algorithm2e}[t!]
\caption{\ac{quak} algorithm.}\label{alg:qnrl:quak}
\scriptsize
\KwRequire{Quantum Hilbert space $\hilbert_{\setaction} \otimes \hilbert_{\setreturn}$ with qubits $q_{\setaction}$ and $q_{\setreturn}$, input environment state $x \in \setstate$, initial quantum system state $\ket{0^{\otimes q_{\setaction} + q_{\setreturn}}}$}
Apply $U_{a \,|\, x} \otimes \mathbb{I}$ from \cref{eq:qnrl:U_a_x} to prepare initial action distribution $\pfunc{\mathscr{P}}{\setaction \,|\, x}$\;
\Comment{$n$-th power of distribution moment}
\Repeat{$n$}{
    Apply $CU^{\pi}_{z \,|\, x,a}$ from \cref{eq:qnrl:CU_z_x_a} to generate superimposed return distributions conditioned on the action distribution $\pfunc{\mathscr{P}}{\setreturn \,|\, x, a}~\forall a \in \setaction$\;
    \Comment{$m$-th moment of distribution}
    \Repeat{$m$}{
        Apply $U_{v \,|\, z}$ from \cref{eq:qnrl:U_v_z} to map the return values onto the superimposed return distributions\;
    }
    \Comment{Amplitude kickback}
    Apply $\textrm{Reset}_{\setreturn}$ from \cref{eq:qnrl:reset} to compute expectation of the $m$-th moment of the superimposed return distributions $\expectgiven{(Z^{\pi})^{m}}{x,a}~\forall a \in \mathcal{A}$ and reset the return qubits to the zero state\;
}
Measure $\hilbert_{\setaction}$ to sample action from policy distribution $a \sim \pfunc{\pi}{\cdot \given x, Z}$ with probability $\bfunc{P}{\rho_{a \given x,Z^{\pi}}}$ from \cref{eq:qnrl:prob_a_x_Z}\;
\end{algorithm2e}

From \cref{thm:qnrl:quak:policy}, we make a fundamental observation: multiple quantum probability distributions superimposed in $\hilbert_{\setaction} \otimes \hilbert_{\setreturn}$ can be compared via their moments \emph{entirely in quantum Hilbert space} without needing to encode or decode between classical computation part-way through the quantum algorithm. This is done by running our proposed \ac{quak} algorithm \cref{eq:qnrl:quak:0} and then sampling the action space $\hilbert_{\setaction}$ with probability \cref{eq:qnrl:prob_a_x_Z}, as shown in \cref{alg:qnrl:quak}. Notably, this is in contrast with previous \ac{qrl} works that pair a \ac{vqc} with a downstream classical \ac{nn} to estimate the policy distribution return Q-value \cite{Jerbi2021ParametrizedQuantumPolicies,Skolik2021QuantumAgentsGym,Hu2019DistributionalReinforcementLearning}. In these works the \ac{vqc} is effectively a replacement for a classical \ac{nn} and has no direct interpretation on the data itself. In our proposed approach, however, \cref{eq:qnrl:U_z_x_a} generates the actual distribution of returns as the quantum state \cref{eq:qnrl:rho_Z_x_a}. As such, one can reproduce a classical representation of either
\begin{enumerate*}
    \item the conditional distribution of returns $\probspace{\setreturn \given x, a}$ by measuring $\hilbert_{\setreturn}$ after applying \cref{eq:qnrl:U_z_x_a} for any $a \in \setaction$, or
    \item the conditional distribution of actions $\probspace{\setaction \given x, \setreturn}$ by measuring $\hilbert_{\setaction}$ after applying \cref{eq:qnrl:quak:0}. 
\end{enumerate*}
Phrased simply, we can interpret our quantum model more as a distribution sampler, because \emph{the quantum state is the distribution}. Further, we can choose to either \emph{generate a distribution of returns} for a specific action, or \emph{generate a distribution of actions} based upon their moment of returns.
Importantly, \ac{quak} is a generalized distribution algorithm that is not specifically tied to \ac{rl} applications. As such, it can be applied to other use cases where distributions must be generated and their moments computed and compared.

Combining \cref{thm:qnrl:quak:moment,thm:qnrl:quak:policy} we form the basis of our \ac{qnrl} framework for training quantum distributional models, which we discuss next.

\subsection{Quantum-native RL Algorithm} \label{sec:qnrl:method:qnrl}

Our \ac{qnrl} is a learning algorithm for training the quantum distributional model outlined in \cref{eq:qnrl:CU_z_x_a} using the action selection policy in \cref{thm:qnrl:quak:policy}. Particularly, it is a quantum-native variation of the classical \ac{drl} algorithm C51 \cite{Bellemare2017DistributionalPerspectiveReinforcement}, where the moment generation and action selection phases are performed in quantum space. The algorithm is shown in \cref{alg:qnrl:qnrl}.
In \ac{qnrl} there is a single agent that employs the quantum distributional model architecture given by \cref{eq:qnrl:CU_z_x_a} with two sets of parameters for a \emph{policy} model $\bm{\Theta}_{\textrm{policy}} = \tuple{\bm{\gamma}_{\textrm{policy}}, \bm{\phi}_{\textrm{policy}}, \bm{\theta}_{\textrm{policy}}}$, and a \emph{target} model $\bm{\Theta}_{\textrm{target}} = \tuple{\bm{\gamma}_{\textrm{target}}, \bm{\phi}_{\textrm{target}}, \bm{\theta}_{\textrm{target}}}$, and both sets of parameters are initialized to the same values $\bm{\Theta}_{\textrm{policy}} = \bm{\Theta}_{\textrm{target}}$ at the start of training.
The agent interacts within the environment according to the quantum distributional policy at each time step $t \in [0,T-1]$ by first receiving state $x_{t}$ and then applying our proposed \ac{quak} in \cref{alg:qnrl:quak} to sample the action distribution associated with $\bm{\Theta}_{\textrm{policy}}$ via $a_{t} = \arg\max_{a \in \setaction} \pi_{\textrm{policy}}(a \given x_{t}, Z_{\bm{\Theta}_{\textrm{policy}}})$, from which the environment produces both a reward and next state $r_{t}$ and $x_{t+1}$ respectively. These values are stored in the buffer $\mathcal{D}$ as a transition at each time step $\tuple{x_{t}, a_{t}, r_{t}, x_{t+1}}$.
After a sufficient number of transitions have been stored, the policy is updated using a subset, i.e., minibatch, of transitions. First, the target action is selected by applying \cref{alg:qnrl:quak} using $\bm{\Theta}_{\textrm{target}}$ via $a^{*}_{\tau} = \arg\max_{a \in \setaction} \pi_{\textrm{target}}(a \given x_{\tau+1}, Z_{\bm{\Theta}_{\textrm{target}}})$. This target action is then used to generate the return distribution $\mathscr{P}(\setreturn \given x_{\tau+1},a^{*}_{\tau},\bm{\Theta}_{\textrm{target}})$ via \cref{eq:qnrl:CU_z_x_a}. We then update the atoms of this target distribution according to the discounted reward $\setreturn_{\textrm{target}} = \set{r_{\tau} + \eta z}_{z \in \setreturn}$, and then perform the projection procedure of \cite[Section 4.2]{Bellemare2017DistributionalPerspectiveReinforcement} to realign the atoms of $\setreturn_{\textrm{target}}$ onto the support of $\setreturn$. We then generate the policy return distribution conditioned on the current action $\mathscr{P}(\setreturn \given x_{\tau},a_{\tau},\bm{\Theta}_{\textrm{policy}})$ via \cref{eq:qnrl:CU_z_x_a}. Finally, we compute the policy model loss as the cross-entropy between the target $\mathscr{P}(\setreturn \given x_{\tau+1},a^{*}_{\tau},\bm{\Theta}_{\textrm{target}})$ and policy $\mathscr{P}(\setreturn \given x_{\tau},a_{\tau},\bm{\Theta}_{\textrm{policy}})$ return distributions. 
The final phase of \cref{alg:qnrl:qnrl} periodically updates the target model parameters by incrementally moving $\bm{\Theta}_{\textrm{target}}$ toward $\bm{\Theta}_{\textrm{policy}}$ via a soft update rule.

The quantum advantage of our proposed \cref{alg:qnrl:qnrl} arises from the action selection and return distribution generation phases. In the classical C51 algorithm, an \ac{nn} generates all return distributions for all actions at once as a multidimensional matrix. Action selection then requires the expectation of these distributions be computed, followed by a comparison operator. In contrast, our proposed \ac{quak} reduces action selection to sampling a quantum circuit on $\hilbert_{\setaction}$, because the action state amplitudes consist of the normalized moments of each return distribution due to amplitude kickback. Concisely, the distribution generation and moment computation phases occur entirely in quantum Hilbert space, and the comparison phase reduces to quantum measurement.
Furthermore, the policy update is also simplified because our quantum distributional model generates a return distribution directly for a given action. We can also retrieve this return distribution by sampling the quantum circuit on $\hilbert_{\setreturn}$ because \emph{the learned quantum state is the distribution of returns}.

\begin{algorithm2e}[t!]
\caption{\ac{qnrl} algorithm.}\label{alg:qnrl:qnrl}
\scriptsize
\KwRequire{Set of environment states $\setstate$, actions $\setaction$, and returns $\setreturn$, quantum system $\hilbert_{\setaction} \otimes \hilbert_{\setreturn}$ with qubits $q_{\setaction}$ and $q_{\setreturn}$, policy parameters $\bm{\Theta}_{\textrm{policy}} = \tuple{\bm{\gamma}_{\textrm{policy}}, \bm{\phi}_{\textrm{policy}}, \bm{\theta}_{\textrm{policy}}}$, target parameters $\bm{\Theta}_{\textrm{target}} = \tuple{\bm{\gamma}_{\textrm{target}}, \bm{\phi}_{\textrm{target}}, \bm{\theta}_{\textrm{target}}}$, and replay buffer $\mathcal{D}$.}
Initialize $\bm{\Theta}_{\textrm{policy}} = \bm{\Theta}_{\textrm{target}}$, replay buffer $\mathcal{D} = \set{}$, and time step $t = 0$\;
\While{$t < \textrm{max steps}$}{
    \Comment{Environment interaction.}
    Get environment state $x_{t} \in \setstate$\;
    
    Run \cref{alg:qnrl:quak} to sample policy action $a_{t} = \arg\max_{a \in \setaction} \pi_{\textrm{policy}}(a \given x_{t}, Z_{\bm{\Theta}_{\textrm{policy}}})$\;
    
    Apply action $a_{t}$ and get reward $r_{t}$ and next state $x_{t+1}$\;
    
    Update local replay buffer $\mathcal{D} = \mathcal{D} \cup \set{\tuple{x_{t}, a_{t}, r_{t}, x_{t+1}}}$\;
    
    \If{$x_{t+1}$ is terminal}{
        Reset environment\;
    }

    \Comment{Model update.}
    \If{$t > \textrm{learning starts}$}{
        \Comment{Policy model update.}
        \If{$t \bmod \textrm{policy update frequency} = 0$}{
            Sample a batch of $B$ training samples from buffer $\set{\tuple{x_{\tau}, a_{\tau}, r_{\tau}, x_{\tau+1}} \sim \mathcal{D}}_{0,\dots,B-1}$\;

            Run \cref{alg:qnrl:quak} to sample target action $a^{*}_{\tau} = \arg\max_{a \in \setaction} \pi_{\textrm{target}}(a \given x_{\tau+1}, Z_{\bm{\Theta}_{\textrm{target}}})$\;

            Prepare target return distribution $\mathscr{P}(\setreturn \given x_{\tau+1},a^{*}_{\tau},\bm{\Theta}_{\textrm{target}})$ via \cref{eq:qnrl:CU_z_x_a}\;

            Update atoms of target distribution with discounted reward $\setreturn_{\textrm{target}} = \set{r_{\tau} + \eta z}_{z \in \setreturn}$\;

            Realign the atoms of the target distribution $\setreturn_{\textrm{target}}$ onto the support of $\setreturn$ according to \cite[Section 4.2]{Bellemare2017DistributionalPerspectiveReinforcement}\;

            Prepare policy return distribution $\mathscr{P}(\setreturn \given x_{\tau},a_{\tau},\bm{\Theta}_{\textrm{policy}})$ via \cref{eq:qnrl:CU_z_x_a}\;
                
            Compute cross-entropy loss $\mathcal{L} = -\sum_{z \in \setreturn} p(z \given x_{\tau+1},a^{*}_{\tau},\bm{\Theta}_{\textrm{target}}) \bfunc{\log}{p(z \given x_{\tau},a_{\tau},\bm{\Theta}_{\textrm{policy}})}$\;
        }
        
        \Comment{Target model update.}
        \If{$t \bmod \textrm{target update frequency} = 0$}{
            Incrementally move $\bm{\Theta}_{\textrm{target}}$ toward $\bm{\Theta}_{\textrm{policy}}$ via soft update rule\;
        }
    }
    $x_{t} = x_{t+1}$\;
    $t = t + 1$\;
}
\end{algorithm2e}

\section{Experimental Results and Analysis}\label{sec:qnrl:exp}

Next, we discuss the results of our experiments. First, we will discuss the environments used to train and evaluate our models, followed by our experiment configuration. We conclude this section with discussions on the results for each of the environment experiments.

\subsection{Environments}\label{sec:qnrl:exp:env}

We use several well-known environments across multiple domains as benchmarks for our proposed quantum-native and baseline models, which include the classic control environments \texttt{CartPole} \cite{Barto1983NeuronlikeAdaptiveElements} and \texttt{Acrobot} \cite{Sutton2018ReinforcementLearningIntroduction}, the grid-world environments \texttt{CliffWalking} \cite{Sutton2018ReinforcementLearningIntroduction} and \texttt{FrozenLake} \cite{Brockman2016OpenAIGym}, and the Atari environments \texttt{Breakout} and \texttt{SpaceInvaders} \cite{Bellemare2013ArcadeLearningEnvironment,Machado2018RevisitingArcadeLearning}. 
In particular, the continuous observation spaces of \texttt{CartPole} and \texttt{Acrobot}, and their contrast with the discrete observation spaces of \texttt{CliffWalking} and \texttt{FrozenLake}, serve as interesting case studies for how return probability distributions originate from both continuous and discrete sources and how the generated models fair in the presence of unseen data; that is, preparing distributions of unknown observations. 
In a similar vain, the image observations from \texttt{Breakout} and \texttt{SpaceInvaders} require an entirely different encoding structure, and, thus, they also serve as case study for how classical and quantum systems generate return probability distributions from this image data.
Further, all environments require sufficient exploration to either reach a goal state or achieve a high score, which is also an interesting study for how the generated return distributions facilitate this search.
In all scenarios we evaluate agents using the \emph{episode reward} metric as defined by each respective environment. 
See \cref{app:qnrl:env} for environment details.

\subsection{Experimental Setup}\label{sec:qnrl:exp:setup}

We compare our \ac{qnrl} against the classical \emph{categorical \ac{dqn}} algorithm with a classical fully-connected \ac{nn} as proposed by \cite{Bellemare2017DistributionalPerspectiveReinforcement} and as implemented by \cite{Huang2022CleanRLHighqualitySinglefile}. We refer to this baseline by its canonical name \textbf{\texttt{C51}}. This is a meaningful baseline because it is the foundation for distributional learning, and thus an excellent comparison for re-framing the distributional generation and policy selection processes into quantum space.
All models were built in JAX using \texttt{flax.nnx} \cite{Heek2024FlaxNeuralNetwork}, optimized using \texttt{optax} \cite{DeepMind2020DeepMindJAXEcosystem}, and use \texttt{pennylane} \cite{Bergholm2022PennyLaneAutomaticDifferentiation} as the quantum circuit backend.
We fix the number of distributional atoms to 32 for all models, resulting in a quantum circuit of $q_{\setreturn} = 5$ qubits. The return value range for \texttt{FrozenLake} is $z \in [-2, 2]$, for \texttt{CartPole}, \texttt{CliffWalking}, and \texttt{Acrobot} we use the range $z \in [-100, 100]$, and for \texttt{Breakout} and \texttt{SpaceInvaders} we use $z \in [-10, 10]$. We use a constant return discount factor of $\eta = 0.99$ for all models.
All models were trained continually for a total of $T = 100,000$ time steps for both the classic control and grid-world environments, and $T = 200,000$ for the Atari environments, with automatic environment reset after each episode, a circular buffer size of $|\mathcal{D}| = 10,000$ transitions, a batch size of $128$ trajectory samples for both the classic control and grid-world environments, and a batch size of $32$ for the Atari environments.
During training, all models employ a decaying $\epsilon$-greedy policy strategy for $50\%$ of the training window to both inject randomness from the environment and encourage exploration, with $1 \geq \epsilon \geq  0.05$ for both the classic control and grid-world environments, and $1 \geq \epsilon \geq 0.01$ for the Atari environments respectively. This $\epsilon$-greedy strategy is removed during evaluation.
The baseline classical models are trained with the \texttt{Adam} optimizer \cite{Kingma2014AdamMethodStochastic} using a learning rate of $lr = 10^{-3}$ for both the classic control and grid-world environments, $lr = 2.5 \times 10^{-4}$ for the Atari environments. The quantum models are trained using the \texttt{AdamW} optimizer \cite{Loshchilov2019DecoupledWeightDecay} with different learning rates for the classical encoding and quantum parameters, specifically $\set{lr_{\textrm{classical}} = 10^{-3}, lr_{\textrm{quantum}} = 10^{-2}}$ for both the classic control and grid-world environments, and $\set{lr_{\textrm{classical}} = 2.5 \times 10^{-4}, lr_{\textrm{quantum}} = 10^{-2}}$ for the Atari environments respectively.
All models are trained using \texttt{cross-entropy} loss.
Every model configuration is uniquely identified by a hash string generated from its hyperparameters. A description of these hyperparameters are provided in \cref{app:qnrl:hyper}.
All model configurations are trained on 4 unique seeds, and subsequently evaluated on a separate set of 10 unique seeds. The performance of all our models is reported as an aggregate over these training and evaluation seeds respectively.
For each framework, we compare model configurations based upon both their mean evaluation episode reward, and total size (i.e., number of parameters). We select the ``best'' configuration based on which results in the \emph{highest reward improvement} with the \emph{lowest size}. This combination of reward and size metrics results in the most fair comparison between models; particularly the purely classical baselines which tend to have slightly higher absolute performance at the cost of a significantly larger model (i.e., an order of magnitude).


\subsection{Results and Analysis}\label{sec:qnrl:exp:res}

The first suite of experiments compare the training performance, evaluation performance, and model sizes of \texttt{QnRL} with baseline \texttt{C51} on \texttt{CartPole} in \cref{sec:qnrl:exp:CartPole}, \texttt{CliffWalking} in \cref{sec:qnrl:exp:CliffWalking}, \texttt{FrozenLake} in \cref{sec:qnrl:exp:FrozenLake}, \texttt{Acrobot} in \cref{sec:qnrl:exp:Acrobot}, \texttt{SpaceInvaders} in \cref{sec:qnrl:exp:SpaceInvaders}, and finally \texttt{Breakout} in \cref{sec:qnrl:exp:Breakout} respectively.

\subsubsection{Results for CartPole}\label{sec:qnrl:exp:CartPole}

\begin{figure}[t!] 
\centering
\subcaptionbox{Train Episode Reward \label{fig:qnrl:exp:CartPole:train-episode-reward}}{\includegraphics[width=0.6\columnwidth]{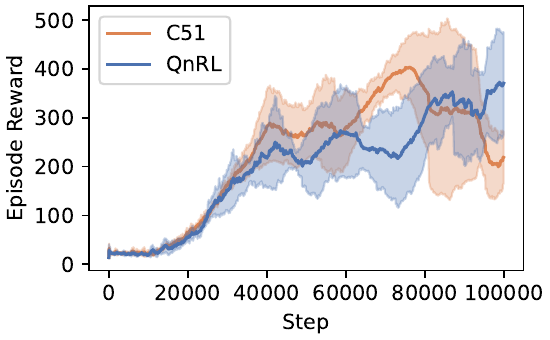}}
\hfil\\
\subcaptionbox{Evaluation Episode Reward \label{fig:qnrl:exp:CartPole:eval-episode-reward}}{\includegraphics[width=0.45\columnwidth]{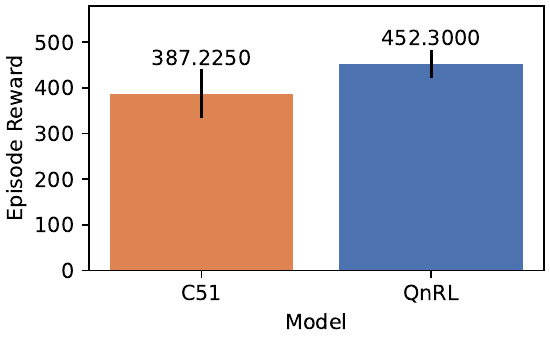}}
\hfil
\subcaptionbox{Model Parameter Count \label{fig:qnrl:exp:CartPole:total-model-parameter-count}}{\includegraphics[width=0.45\columnwidth]{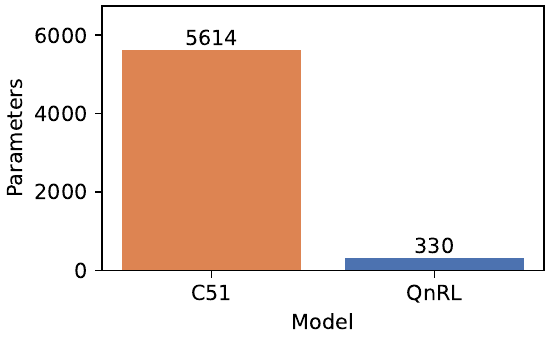}}
\caption[\texttt{CartPole} performance comparison of \acs{qnrl} with baselines.]{Comparison of \texttt{CartPole} metrics (a) train episode reward, (b) evaluation episode reward, and (c) model parameter count for best model configuration of \texttt{C51} (orange) and \texttt{QnRL} (blue). Training performance averaged over 4 seeds, evaluation performance averaged over 10 seeds, with $\pm 1$ std.~dev.\ shown. These figures generally show that \texttt{QnRL} achieves higher and more stable training and evaluation performance with a significantly fewer parameters.\label{fig:qnrl:exp:CartPole}}
\end{figure}


The first set of experiments compare the performance of \texttt{QnRL} (model \texttt{45913de1}) with baseline \texttt{C51} (model \texttt{bdc26ac3}) on the \texttt{CartPole} environment as shown in \cref{fig:qnrl:exp:CartPole}. 
Looking at the training episode reward in \cref{fig:qnrl:exp:CartPole:train-episode-reward} we see that \texttt{QnRL} achieves a $69.5\%$ higher final score (reward of $370.66$) than \texttt{C51} (reward of $218.63$), which drops near the end of training. From this we can infer that the classical model suffers from over-training.
\Cref{fig:qnrl:exp:CartPole:eval-episode-reward} also shows that \texttt{QnRL} achieves a $16.8\%$ higher evaluation score (reward of $452.3$) than \texttt{C51} (reward of $387.23$), and with a lower standard deviation.
This training and evaluation performance is interesting when considering the model sizes as shown in \cref{fig:qnrl:exp:CartPole:total-model-parameter-count}, where we see that \texttt{QnRL} is $94.1\%$ smaller than \texttt{C51}. 
Here, \texttt{QnRL} uses $L=7$ layers, power $n=1$, moment $m=1$, and \texttt{offset} entanglement. This composition is important for the \ac{quak} action selection phase because our model compares the 1st moment (i.e., the mean) of the return distributions. Further, the use of \texttt{offset} entanglement indicates that distribution atoms have a more complex relationship to sequential neighbors (as opposed to constant next-neighbor), which we capture as a function of the layer depth.
From this we can infer that our \texttt{QnRL} performs better because it is able to replicate this relationship between atoms via the offset entanglement, and retain the complexity of this correlation by comparing the distribution means in quantum space.
Hence, we observe that there is a quantum advantage in performance, model size, and by extension computational complexity by learning the return distribution as a quantum state in Hilbert space.

\subsubsection{Results for CliffWalking}\label{sec:qnrl:exp:CliffWalking}

\begin{figure}[t!] 
\centering
\subcaptionbox{Train Episode Reward \label{fig:qnrl:exp:CliffWalking:train-episode-reward}}{\includegraphics[width=0.6\columnwidth]{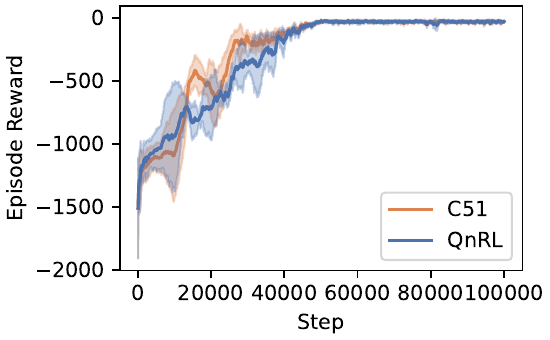}}
\hfil\\
\subcaptionbox{Evaluation Episode Reward \label{fig:qnrl:exp:CliffWalking:eval-episode-reward}}{\includegraphics[width=0.45\columnwidth]{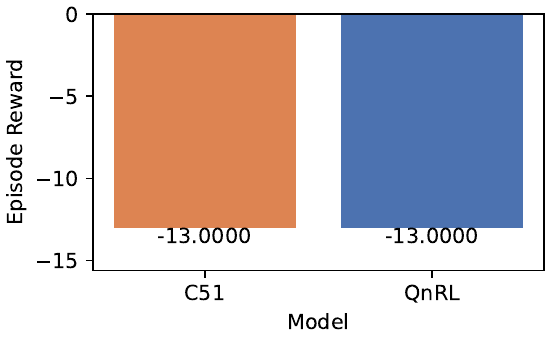}}
\hfil
\subcaptionbox{Model Parameter Count \label{fig:qnrl:exp:CliffWalking:total-model-parameter-count}}{\includegraphics[width=0.45\columnwidth]{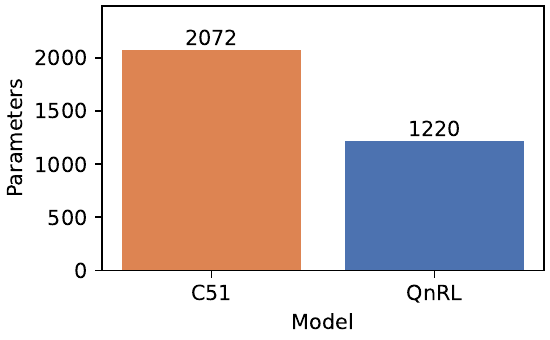}}
\caption[\texttt{CliffWalking} performance comparison of \acs{qnrl} with baselines.]{Comparison of \texttt{CliffWalking} metrics (a) train episode reward, (b) evaluation episode reward, and (c) model parameter count for best model configuration of \texttt{C51} (orange) and \texttt{QnRL} (blue). Training performance averaged over 4 seeds, evaluation performance averaged over 10 seeds, with $\pm 1$ std.~dev.\ shown. These figures generally show that \texttt{QnRL} achieves more stable performance that is on par with baselines using significantly fewer parameters.\label{fig:qnrl:exp:CliffWalking}}
\end{figure}


The second set of experiments compare the performance of \texttt{QnRL} (model \texttt{12c045e6}) with baseline \texttt{C51} (model \texttt{15173423}) on the \texttt{CliffWalking} environment as shown in \cref{fig:qnrl:exp:CliffWalking}.
Here, \texttt{QnRL} uses $L=3$ layers, power $n=1$, moment $m=1$, and \texttt{circular} entanglement. This means that the model compares the 1st moment of the return distributions, and that the return atoms are directly related to their next neighbors, i.e., entanglement is \texttt{circular} with constant neighbor assignments.
The training evaluation performance in \cref{fig:qnrl:exp:CliffWalking:train-episode-reward} shows that \texttt{C51} converges to a final score that is $11.7\%$ higher (episode reward of $-27.55$) than \texttt{QnRL} (reward of $-31.20$), but from the evaluation performance in \cref{fig:qnrl:exp:CliffWalking:eval-episode-reward} we see that both models reach the maximum possible reward of $-13$ for all 10 seeds. From this, we can infer that both models learn return distributions that sufficiently generalize to observations in the environment. 
Here, \texttt{QnRL} performs slightly worse in training because of the over-optimization of the learned distribution. In particular, the model is learning return distributions for some actions that produce very similar means, which causes \ac{quak} to estimate a similar probability for those actions. This, coupled with a small $\epsilon$-greedy element causes the model to take a non-optimal action slightly more frequently than \texttt{C51}.
This performance, however, is particularly interesting when considering the model sizes shown in \cref{fig:qnrl:exp:CliffWalking:total-model-parameter-count}, where \texttt{QnRL} achieves the same evaluation score using $41.1\%$ fewer parameters. Further, the composition of this size is also interesting because we use a one-hot encoding scheme for the observations in the \texttt{CliffWalking} environment, which results in a relatively large 48-dimensional input mapping. 
%
For \texttt{C51} we map $48 \mapsto h \mapsto |\setaction| \times |\setreturn|$, where $h=\set{16, 8}$ are the hidden layer dimensions, and for \texttt{QnRL} we map $48 \mapsto |\setaction| \times q_{\setreturn}$ and then perform an Einstein summation with a small quantum parameter kernel $\bm{\gamma}$ (see \cref{def:qnrl:U_enc}) to multiplex the result into the layer dimension, where $q_{\setreturn}=5$ is the number of return qubits for all quantum models in our experiments. Hence, even when classically encoding observations with many dimensions, we see a quantum advantage for learning a return distribution in Hilbert space, because the number of parameters necessary to map that classical data into quantum space is significantly smaller than its purely classical counterpart.

\subsubsection{Results for FrozenLake}\label{sec:qnrl:exp:FrozenLake}

\begin{figure}[t!] 
\centering
\subcaptionbox{Train Episode Reward \label{fig:qnrl:exp:FrozenLake:train-episode-reward}}{\includegraphics[width=0.6\columnwidth]{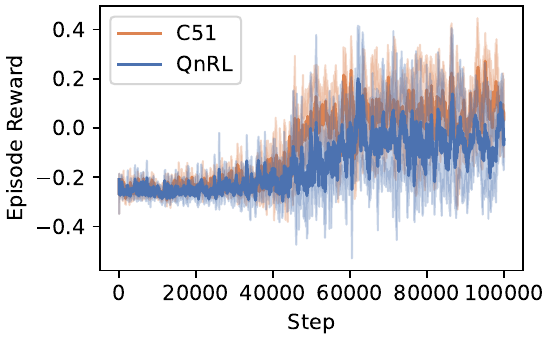}}
\hfil\\
\subcaptionbox{Evaluation Episode Reward \label{fig:qnrl:exp:FrozenLake:eval-episode-reward}}{\includegraphics[width=0.45\columnwidth]{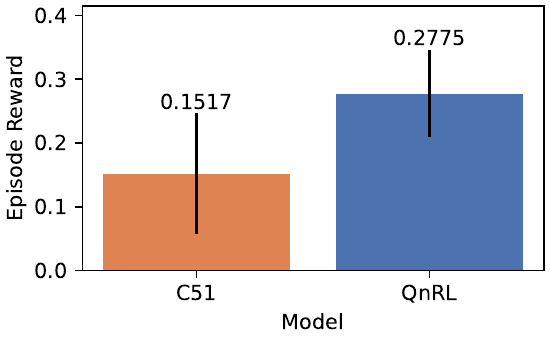}}
\hfil
\subcaptionbox{Model Parameter Count \label{fig:qnrl:exp:FrozenLake:total-model-parameter-count}}{\includegraphics[width=0.45\columnwidth]{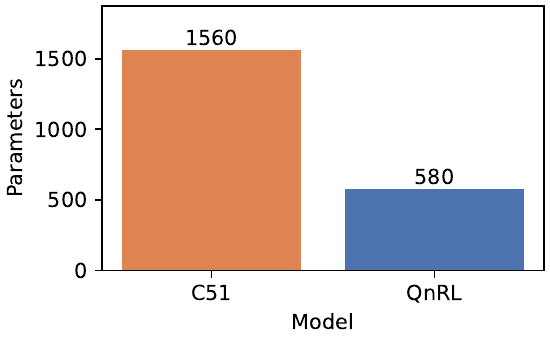}}
\caption[\texttt{FrozenLake} performance comparison of \acs{qnrl} with baselines.]{Comparison of \texttt{FrozenLake} metrics (a) train episode reward, (b) evaluation episode reward, and (c) model parameter count for best model configuration of \texttt{C51} (orange) and \texttt{QnRL} (blue). Training performance averaged over 4 seeds, evaluation performance averaged over 10 seeds, with $\pm 1$ std.~dev.\ shown. These figures generally show that \texttt{C51} exhibits higher training performance, whereas \texttt{QnRL} achieves higher and more stable evaluation performance with significantly fewer parameters.\label{fig:qnrl:exp:FrozenLake}}
\end{figure}


The third set of experiments compare the performance of \texttt{QnRL} (model \texttt{a275ee45}) with baseline \texttt{C51} (model \texttt{1c44782b}) on the \texttt{FrozenLake} environment as shown in \cref{fig:qnrl:exp:FrozenLake}.
This \texttt{QnRL} model uses $L=3$ layers, power $n=1$, moment $m=2$, and \texttt{offset} entanglement, which is the first appearance of higher-order distribution effects. Here, our quantum model compares the 2nd moment (i.e., the variance) of the return distributions, which indicates that the observations and returns in this environment are more deeply related, or more sensitive, to the ``spread'' across the distribution support. Further, the \texttt{offset} entanglement indicates that the behavior of these atoms are directly influenced by several several neighbors.
The training performance in \cref{fig:qnrl:exp:FrozenLake:train-episode-reward} shows that \texttt{C51} achieves a $35.7\%$ higher maximum score (reward of $0.2712$ at step $95137$) than \texttt{QnRL} (reward of $0.1998$ at step $62078$) across 4 seeds, but interestingly \texttt{QnRL} is $28.5\%$ more stable (std.~dev.\ of $0.0949$) than \texttt{C51} (std.~dev.\ of $0.1326$) throughout training.
Looking at the evaluation performance in \cref{fig:qnrl:exp:FrozenLake:eval-episode-reward}, however, we see that \texttt{QnRL} achieves an $82.9\%$ higher average score with less variation (reward of $0.2775$) than \texttt{C51} (reward of $0.1517$) across 10 seeds.
We believe the reasons for why \texttt{QnRL} performs worse here in training are similar to those of its training performance in \texttt{CliffWalking}, which are the moment separation between distributions and the changing stochastic nature of the policy due to varying $\epsilon$-greedy components respectively. In this case, the quantum model is generating distributions that have similar variances for some actions, causing there to be very close probability. We see this prominently in both cases because the quantum models have seemingly plateaued in training performance, which indicates that the generated distributions cannot be further tuned for increased mean and variance separation. During training, the models are optimized using a decaying $\epsilon$-greedy component, which settles at a fixed value for the remainder of the training duration. From the perspective of the models, this $\epsilon$ value injects a level of randomness into the action selection process within the environment, from which non-optimal actions will be sampled with a fixed uniform probability. During evaluation, however, the frequency of randomness is altered by selecting a different fixed $\epsilon$ value as a way of testing model performance under different stochastic conditions. We believe \texttt{C51} often performs better during training because it more rapidly learns the random frequency of the chosen fixed $\epsilon$, and performs worse during evaluation because it becomes over-tuned to a specific random frequency. The converse is generally true of \texttt{QnRL}, where its performance supersedes that of the classical method when the level of policy randomness changes during evaluation. From this we can deduce that quantum distributional systems are better able to compensate for changes in policy randomness under varying environment conditions.
The model size in \cref{fig:qnrl:exp:FrozenLake:total-model-parameter-count} is analogous to previous experiments, \cref{sec:qnrl:exp:CliffWalking,sec:qnrl:exp:CartPole}, whereby \texttt{QnRL} achieves this performance with $62.8\%$ fewer parameters than \texttt{C51}. Interestingly, this size comparison is analogous to the \texttt{CliffWalking} experiment in \cref{sec:qnrl:exp:CliffWalking}, in that here we also use a one-hot encoding scheme for the environment map grid. This too shows a quantum advantage for learning a return distribution in Hilbert space from high-dimensional classical input observations.

\subsubsection{Results for Acrobot}\label{sec:qnrl:exp:Acrobot}

\begin{figure}[t!] 
\centering
\subcaptionbox{Train Episode Reward \label{fig:qnrl:exp:Acrobot:train-episode-reward}}{\includegraphics[width=0.6\columnwidth]{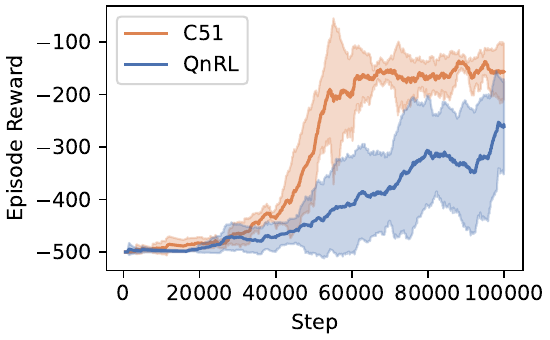}}
\hfil\\
\subcaptionbox{Evaluation Episode Reward \label{fig:qnrl:exp:Acrobot:eval-episode-reward}}{\includegraphics[width=0.45\columnwidth]{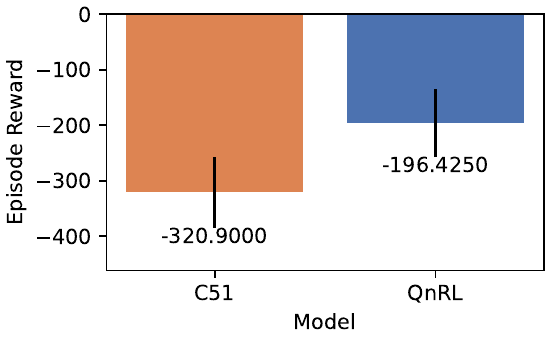}}
\hfil
\subcaptionbox{Model Parameter Count \label{fig:qnrl:exp:Acrobot:total-model-parameter-count}}{\includegraphics[width=0.45\columnwidth]{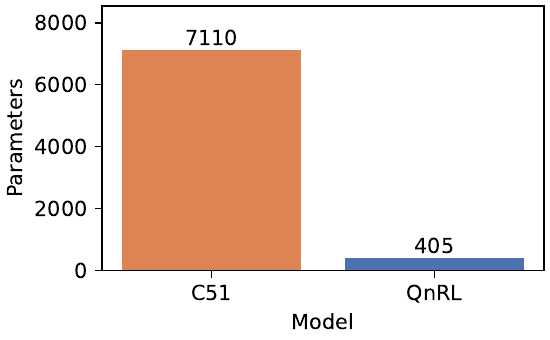}}
\caption[\texttt{Acrobot} performance comparison of \acs{qnrl} with baselines.]{Comparison of \texttt{Acrobot} metrics (a) train episode reward, (b) evaluation episode reward, and (c) model parameter count for best model configuration of \texttt{C51} (orange) and \texttt{QnRL} (blue). Training performance averaged over 4 seeds, evaluation performance averaged over 10 seeds, with $\pm 1$ std.~dev.\ shown. These figures generally show that \texttt{C51} achieves strictly better training performance, but \texttt{QnRL} performs better in evaluation with significantly fewer parameters.\label{fig:qnrl:exp:Acrobot}}
\end{figure}


The fourth set of experiments compare the performance of \texttt{QnRL} (model \texttt{1bacc532}) with baseline \texttt{C51} (model \texttt{c3881283}) on the \texttt{Acrobot} environment as shown in \cref{fig:qnrl:exp:Acrobot}.
The \texttt{QnRL} model uses $L=5$ layers, power $n=1$, moment $m=2$, and \texttt{circular} entanglement. This indicates that the relationship between the observation space and the return distributions exhibits 2nd-order effects, and the behavior of the return distribution atoms are directly affected by their next neighbor.
Looking at the training episode reward in \cref{fig:qnrl:exp:Acrobot:train-episode-reward} we see that \texttt{C51} learns an $46.13\%$ higher and more stable maximum score (reward of $-136.04$ at step $88156$) compared to \texttt{QnRL} (reward of $-252.54$ at step $98527$). We also see that the mean episode reward for \texttt{C51} plateaus around step $60,000$, wheras \texttt{QnRL} steadily increases throughout training. This plateau suggests that the expressiveness of the classical distribution has peaked with the given parameters; that is, the shape of the distribution cannot be tuned further.
This training performance would initially suggest that \texttt{C51} learns a better distribution than \texttt{QnRL}. The evaluation performance shown in \cref{fig:qnrl:exp:Acrobot:eval-episode-reward} paints a different picture however.
Here we see that \texttt{QnRL} achieves a $38.8\%$ higher average evaluation score (reward of $-196.43$) than \texttt{C51} (reward of $-320.90$) and with similar stability. 
This difference between training and evaluation performance indicates that the quantum model favors learning a distribution that is more generalized to the wider observation space than the classical model. This is an interesting behavior considering the \texttt{Acrobot} environment because observations are continuous real numbers over a range of arm joint angles, which is a more complex task than the previous \texttt{CartPole} experiment discussed in \cref{sec:qnrl:exp:CartPole}. Given this continuous nature, it is likely that a model will not experience all possible joint angle combinations during training. Hence, a model must learn to generate return distributions that sufficiently represent the relationship between regions of the observation space with estimated rewards over the action space. We see this tradeoff in practice by comparing \cref{fig:qnrl:exp:Acrobot:train-episode-reward,fig:qnrl:exp:Acrobot:eval-episode-reward}, where clearly \texttt{C51} generates return distributions that inadequately represent the relationship for unseen joint angle combinations.
This comparison is also interesting when we consider the model sizes shown in \cref{fig:qnrl:exp:Acrobot:total-model-parameter-count}. Here we see that \texttt{QnRL} achieves this performance with $94.3\%$ fewer parameters, which is similar to all previous experiments discussed in \cref{sec:qnrl:exp:CartPole,sec:qnrl:exp:CliffWalking,sec:qnrl:exp:FrozenLake}.
Hence, as we have shown in \cref{fig:qnrl:exp:Acrobot} our proposed \texttt{QnRL} learns to generate quantum return distributions that are more expressive in the presence of unseen observations, and it does so with a significantly smaller model.

\subsubsection{Results for SpaceInvaders}\label{sec:qnrl:exp:SpaceInvaders}

\begin{figure}[t!] 
\centering
\subcaptionbox{Train Episode Reward \label{fig:qnrl:exp:SpaceInvaders:train-episode-reward}}{\includegraphics[width=0.6\columnwidth]{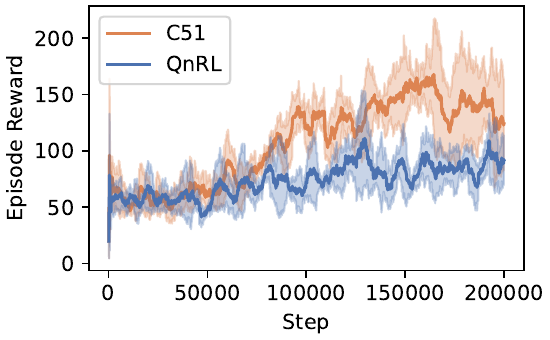}}
\hfil\\
\subcaptionbox{Evaluation Episode Reward \label{fig:qnrl:exp:SpaceInvaders:eval-episode-reward}}{\includegraphics[width=0.45\columnwidth]{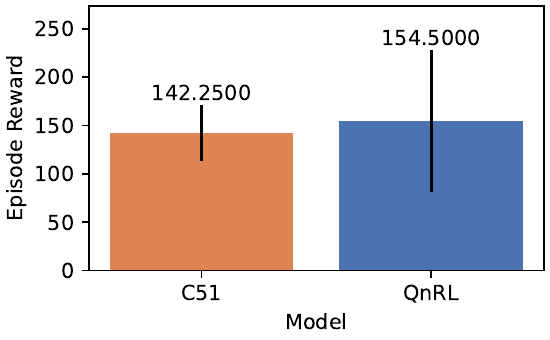}}
\hfil
\subcaptionbox{Model Parameter Count \label{fig:qnrl:exp:SpaceInvaders:total-model-parameter-count}}{\includegraphics[width=0.45\columnwidth]{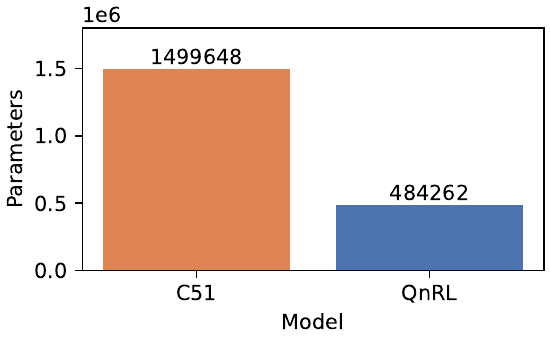}}
\caption[\texttt{SpaceInvaders} performance comparison of \acs{qnrl} with baselines.]{Comparison of \texttt{SpaceInvaders} metrics (a) train episode reward, (b) evaluation episode reward, and (c) model parameter count for best model configuration of \texttt{C51} (orange) and \texttt{QnRL} (blue). Training performance averaged over 4 seeds, evaluation performance averaged over 10 seeds, with $\pm 1$ std.~dev.\ shown. These figures generally show that \texttt{C51} achieves strictly better training performance, but \texttt{QnRL} performs better in evaluation with significantly fewer parameters.\label{fig:qnrl:exp:SpaceInvaders}}
\end{figure}

The fifth set of experiments compare the performance of \texttt{QnRL} (model \texttt{8066810a}) with baseline \texttt{C51} (model \texttt{f758df04}) on the \texttt{SpaceInvaders} environment as shown in \cref{fig:qnrl:exp:SpaceInvaders}. The \texttt{QnRL} model uses $L=7$ layers, power $n=1$, moment $m=1$, and \texttt{offset} entanglement. Looking at the training episode reward in \cref{fig:qnrl:exp:SpaceInvaders:train-episode-reward} we see that \texttt{C51} learns a $35.09\%$ higher final score (reward of $124.16$) than \texttt{QnRL} (reward of $91.91$), but \texttt{QnRL} is $61.51\%$ more stable (std.~dev.~of $14.05$) than \texttt{C51} (std.~dev.~of $36.50$). Looking at the evaluation performance in \cref{fig:qnrl:exp:SpaceInvaders:eval-episode-reward} we see that \texttt{QnRL} scores roughly 12 points higher on average (reward of $154.5$) than \texttt{C51} (reward of $142.25$), and looking at \cref{fig:qnrl:exp:SpaceInvaders:total-model-parameter-count} \texttt{QnRL} achieves this with a $67.71\%$ smaller model. The composition of this size is interesting for reasons similar to those explained for \texttt{CliffWalking} in \cref{sec:qnrl:exp:CliffWalking}. Here, both models employ the same input convolution encoding network with feature dimensions $f=\set{64, 128, 128}$ because the input data for the Atari environment are image frames. For \texttt{C51} we map the output of the final convolution layer to the dimension of the return distributions as given by $f_{\abs{f}} \times w \times h \mapsto \abs{\mathcal{A}} \times \abs{\mathcal{Z}}$, where $w$ and $h$ are the width and height of the image after the final convolution layer, and for \texttt{QnRL} we map $f_{\abs{f}} \times w \times h \mapsto \abs{\mathcal{A}} \times q_{\mathcal{Z}}$ and then multiplex into the quantum circuit layer dimension via an Einstein summation with a small quantum parameter kernel $\bm{\gamma}$ (see \cref{def:qnrl:U_enc}). From this we see a clear quantum advantage for our proposed distributional architecture because $q_{\mathcal{Z}} = \log_{2}{\abs{\mathcal{Z}}} \ll \abs{\mathcal{Z}}$, resulting in a total model size reduction by over half when using the same input encoding network.

\subsubsection{Results for Breakout}\label{sec:qnrl:exp:Breakout}

\begin{figure}[t!] 
\centering
\subcaptionbox{Train Episode Reward \label{fig:qnrl:exp:Breakout:train-episode-reward}}{\includegraphics[width=0.6\columnwidth]{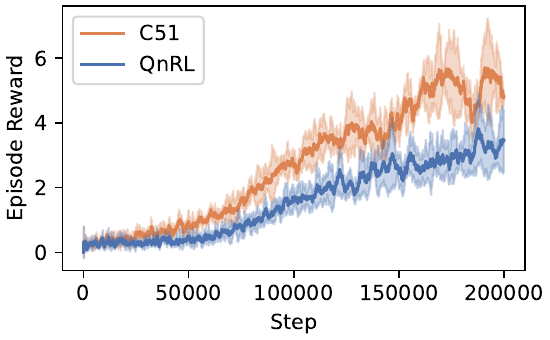}}
\hfil\\
\subcaptionbox{Evaluation Episode Reward \label{fig:qnrl:exp:Breakout:eval-episode-reward}}{\includegraphics[width=0.45\columnwidth]{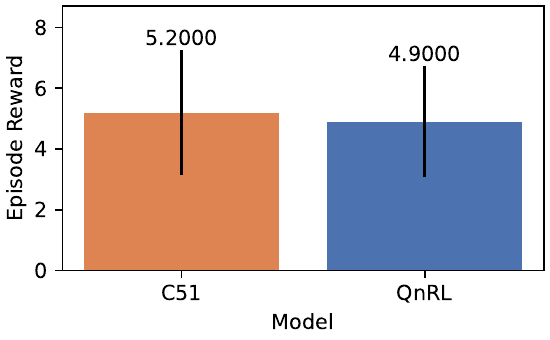}}
\hfil
\subcaptionbox{Model Parameter Count \label{fig:qnrl:exp:Breakout:total-model-parameter-count}}{\includegraphics[width=0.45\columnwidth]{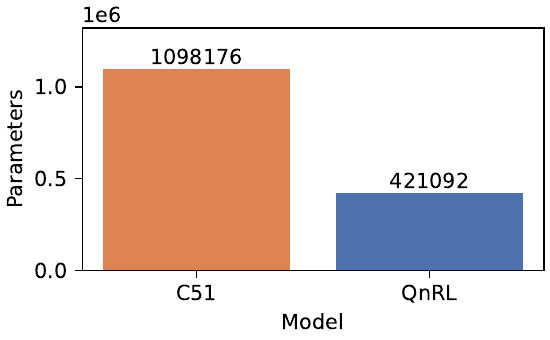}}
\caption[\texttt{Breakout} performance comparison of \acs{qnrl} with baselines.]{Comparison of \texttt{Breakout} metrics (a) train episode reward, (b) evaluation episode reward, and (c) model parameter count for best model configuration of \texttt{C51} (orange) and \texttt{QnRL} (blue). Training performance averaged over 4 seeds, evaluation performance averaged over 10 seeds, with $\pm 1$ std.~dev.\ shown. These figures generally show that \texttt{C51} achieves strictly better training performance, but both models achieve very similar performance in evaluation with \texttt{QnRL} doing so with significantly fewer parameters.\label{fig:qnrl:exp:Breakout}}
\end{figure}

The sixth and final set of experiments compare the performance of \texttt{QnRL} (model \texttt{4d2222f7}) with baseline \texttt{C51} (model \texttt{9c1e6421}) on the \texttt{Breakout} environment as shown in \cref{fig:qnrl:exp:Breakout}. The \texttt{QnRL} model uses $L=5$ layers, power $n=1$, moment $m=1$, and \texttt{offset} entanglement. The training episode reward in \cref{fig:qnrl:exp:Breakout:train-episode-reward} shows that \texttt{C51} achieves a $39.02\%$ higher final score (reward of $4.81$) than \texttt{QnRL} (reward of $3.46$), but \texttt{QnRL} is $38.61\%$ more stable (std.~dev.~of $0.9822$) than \texttt{C51} (std.~dev.~of $1.60$). From the evaluation in \cref{fig:qnrl:exp:Breakout:eval-episode-reward}, we observe that both models perform very similarly, with \texttt{C51} achieving a $6.12\%$ higher average score (reward of $5.2$) than \texttt{QnRL} (reward of $4.9$). In fact, considering that in \texttt{Breakout} only integer-valued scores are possible, when rounding to the nearest integer both models actually achieve the same episode score of $\approx 5$. The model sizes in \cref{fig:qnrl:exp:Breakout:total-model-parameter-count} show that \texttt{QnRL} requires $61.66\%$ fewer parameters, which is very similar to the performance of \texttt{SpaceInvaders} discussed in \cref{sec:qnrl:exp:SpaceInvaders} as both configurations use the same convolution encoding network dimensions. Hence, we see that our quantum distributional model can achieve very similar evaluation performance at the benefit of significantly fewer parameters.

\subsubsection{Performance Summary}\label{sec:qnrl:exp:summary}

\begin{table*}[t!]
\caption[Performance and model size summary.]{Summary of model training and evaluation performance, and comparison of model size in number of parameters across environments. Higher performance indicated by $\uparrow$, and equivalent performance indicated by $\approx$.\label{tab:qnrl:exp:summary}}
\centering
\begin{tabular}{lcccc}
    \toprule
    & \multicolumn{2}{c}{Model Performance} & \multicolumn{2}{c}{Model Size} \\
    \cmidrule(lr){2-3}
    \cmidrule(lr){4-5}
    Environment & QnRL & C51 & QnRL & C51 \\
    \midrule
    \texttt{CartPole} & $\uparrow$ train, and $\uparrow$ eval & - & 330 & 5,614 \\
    \texttt{CliffWalking} & $\approx$ eval & $\uparrow$ train, and $\approx$ eval & 1,220 & 2,072 \\
    \texttt{FrozenLake} & $\uparrow$ eval & $\uparrow$ train & 580 & 1,560 \\
    \texttt{Acrobot} & $\uparrow$ eval & $\uparrow$ train & 405 & 7,110 \\
    \texttt{SpaceInvaders} & $\uparrow$ eval & $\uparrow$ train & 484,262 & 1,499,648 \\
    \texttt{Breakout} & $\approx$ eval & $\uparrow$ train, and $\approx$ eval & 421,092 & 1,098,176 \\
    \bottomrule
\end{tabular}
\end{table*}

A detailed summary of the training and evaluation performance in addition to the sizes of all models in our experiments is provided in \cref{tab:qnrl:exp:summary}. In terms of model size, we observe that \texttt{QnRL} requires significantly fewer parameters across all environments. In terms of performance, we observe a clear pattern of \texttt{QnRL} generally performing worse than \texttt{C51} during training (with the exception of \texttt{CartPole}), and performing better at evaluation time (with the exceptions of both \texttt{CliffWalking} and \texttt{Breakout} being effectively equal). This behavior gives insight into how the learned decision strategies adjust under varying levels of randomness in the environment. Recall from \cref{sec:qnrl:exp:setup} that during training, the algorithms use a decaying $\epsilon$-greedy strategy for $50\%$ of the training window with $1 \geq \epsilon \geq  0.05$ both the classic control and grid-world environments, and $1 \geq \epsilon \geq 0.01$ for the Atari environments. The final $\epsilon$ value is maintained throughout the remainder of training as a way of injecting randomness from the environment into the model. This means that the models learn a decision strategy that incorporates an experienced random rate of $0.05$ and $0.01$ respectively. In other words, the models learn that their chosen action will be randomly shuffled approximately $5\%$ and $1\%$ of the time. We observe from the experimental results that \texttt{C51} generally learns the (effectively constant) rate of the random process better than \texttt{QnRL}. During evaluation, this $\epsilon$ random injection is altered by selecting a different rate value (in our experiments we chose $\epsilon=0$) that remains constant for the entire evaluation episode. Importantly, however, the models are unaware that this change has occurred, and thus the change in $\epsilon$ is viewed as a new random process from their perspective. We observe that under this new random process \texttt{QnRL} generally far outperforms \texttt{C51}, which indicates that \emph{\texttt{QnRL} has learned a decision strategy that better adjusts to varying levels of randomness in the environment}.

\subsection{Learned Distributions}\label{sec:qnrl:exp:dist}

We next compare the return distributions generated from each model during evaluation within the environments discussed in \cref{sec:qnrl:exp:res}. We analyze \texttt{C51} by intercepting the return distributions generated by the classical \ac{nn} for each action, and then compare this with the Q-value calculated for each distribution. We analyze \texttt{QnRL} in quantum simulation by first intercepting the density matrix of the return system $\rho \in \hilbert_{\setreturn}$, followed by generating the return distributions via sampling $\hilbert_{\setreturn}$, and then finally generating the action distribution via our proposed \ac{quak} algorithm and sampling $\hilbert_{\setaction}$. The inspection of the density matrix is key to this analysis, as it gives insight for how the quantum distribution is composed in complex Hilbert space, because recall from \cref{sec:qnrl:method} that the quantum return distribution is transparent to the agent in the action selection process via our purely quantum \ac{quak} algorithm. We visualize the density matrix as 3 separate components, which are: \begin{enumerate*}
    \item the magnitude $\abs{\rho}$,
    \item the real component $\textrm{Re}(\rho)$, and
    \item the imaginary component $\textrm{Im}(\rho)$
\end{enumerate*}. This analysis process allows us to inspect the inner workings of the quantum circuit and is one advantage of performing these experiments via quantum simulation.

\subsubsection{Distributions for CartPole}


\begin{figure}[t!] 
\centering
\subcaptionbox{\texttt{C51~(bdc26ac3)} \label{fig:qnrl:exp:CartPole:distribution:c51}}{\fcolorbox{gray!50}{white}{\includegraphics[width=0.95\linewidth]{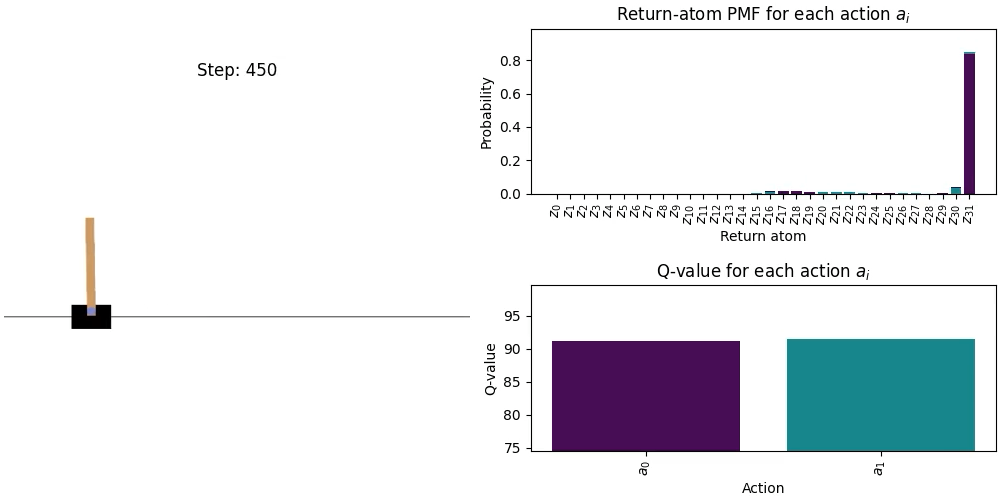}}}
\hfil\\
\subcaptionbox{\texttt{QnRL~(45913de1)} \label{fig:qnrl:exp:CartPole:distribution:qnrl}}{\fcolorbox{gray!50}{white}{\includegraphics[width=0.95\linewidth]{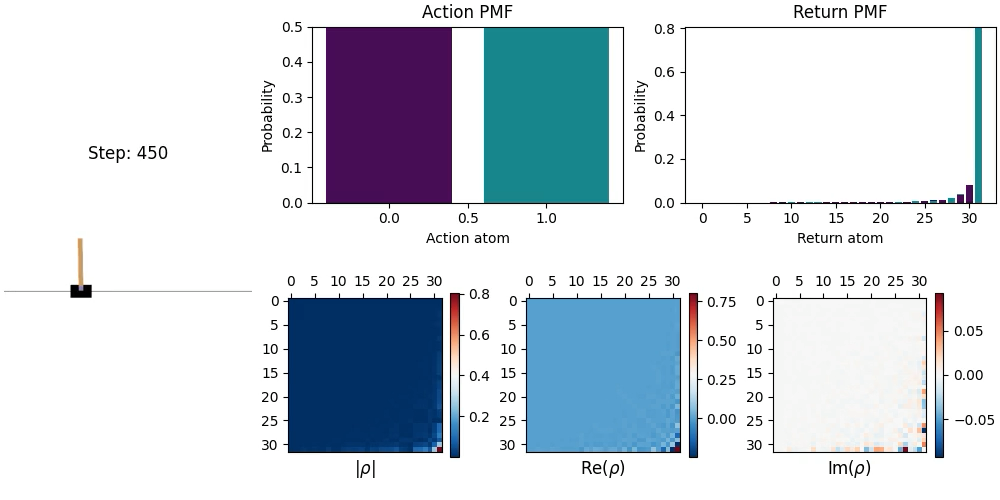}}}
\caption[Learned distributions for \acs{qnrl} and baselines on CartPole.]{Comparison of environment interaction within \texttt{CartPole} for best models of \texttt{C51} (a) and \texttt{QnRL} (b). These figures generally show that that \texttt{QnRL} produces return distributions that are either on par or superior to classical baselines.\label{fig:qnrl:exp:distribution:CartPole}}
\end{figure}

Looking at the \texttt{CartPole} environment in \cref{fig:qnrl:exp:distribution:CartPole}, we focus on step index 450, which is near the end of both episodes (both models reach the maximum of 500 steps, which is optimal). We see that both models learn distributions that are left skewed, resembling impulse functions at the last return atom $z_{31}$. Looking closer at \texttt{C51} we do see a tail starting to form between $[z_{15}, z_{20}]$, and the model estimates a similar Q-value for both actions. Looking at \texttt{QnRL} we do not see such a tail, instead resembling more of a steep exponential about the impulse, and the model estimates both actions have similar first moments (i.e., means) using $n=1$ and $m=1$, with probabilities near $50\%$. We see this shaping in the density matrix visualizations too, with higher amplitudes in the lower-right quadrant, but interestingly here we see that there are both real and complex components to the quantum distribution. This complex composition is a unique feature of \ac{qnrl} that we now see in practice. The quantum model has extra degrees of freedom to generate a distribution across the real and imaginary components, which allows it to learn correlations with the observations that may not be apparent in a purely real classical model.

\subsubsection{Distributions for CliffWalking}


\begin{figure}[t!] 
\centering
\subcaptionbox{\texttt{C51~(15173423)} \label{fig:qnrl:exp:CliffWalking:distribution:c51}}{\fcolorbox{gray!50}{white}{\includegraphics[width=0.95\linewidth]{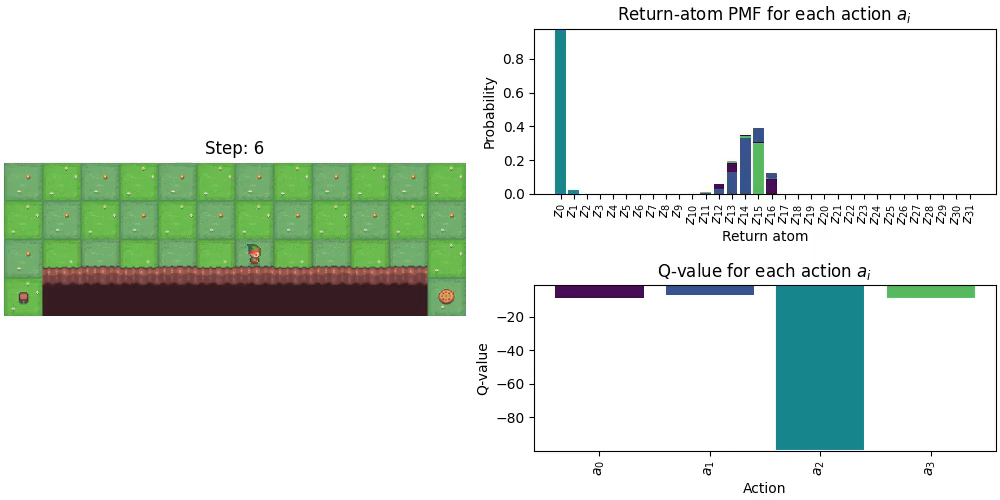}}}
\hfil
\subcaptionbox{\texttt{QnRL~(12c045e6)} \label{fig:qnrl:exp:CliffWalking:distribution:qnrl}}{\fcolorbox{gray!50}{white}{\includegraphics[width=0.95\linewidth]{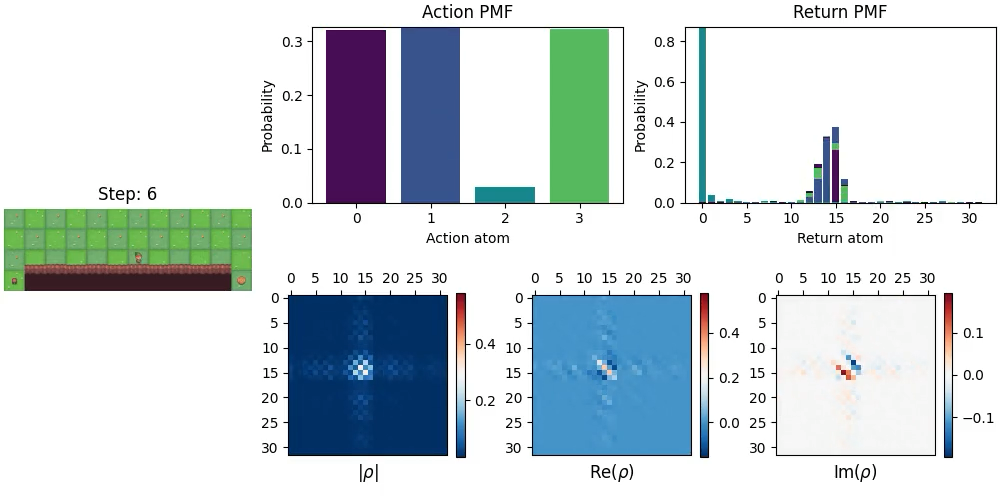}}}
\caption[Learned distributions for \acs{qnrl} and baselines on CliffWalking.]{Comparison of environment interaction within \texttt{CliffWalking} for best models of \texttt{C51} (a) and \texttt{QnRL} (b). These figures generally show that that \texttt{QnRL} produces return distributions that are either on par or superior to classical baselines.\label{fig:qnrl:exp:distribution:CliffWalking}}
\end{figure}

Looking at the \texttt{CliffWalking} environment in \cref{fig:qnrl:exp:distribution:CliffWalking}, we focus on step index 6 which is in the middle of both episodes where the agent is at risk of falling off the cliff (both models reach the goal in 13 steps, which is optimal for this environment). We see that both models learn a Gaussian distribution between $[z_{11}, z_{16}]$ for actions $\set{a_{0} = \texttt{up},~ a_{1} = \texttt{right},~ a_{3} = \texttt{left}}$, and a right skewed impulse at $z_{0}$ for action $a_{2} = \texttt{down}$. This shaping is an interesting case study because here we see that both the classical and quantum models are capable of learning independently-shaped distributions for each action. We also see that both models correctly identify $a_{2}$ as the action with the lowest expected return, as it would result in falling off the cliff. The estimated Q-value and action probabilities place $a_{1}$ with the highest expected return, and thus both models select it as the best action. This \texttt{QnRL} configuration estimates these action probabilities using $n=1$ and $m=1$, similar to the \texttt{CartPole} environment, which shows that comparing the 1st moment of the return distributions is sufficient to learn an optimal policy in this setting. 
The density matrix for \texttt{QnRL} is also interesting here because we clearly see the gaussian shaping as hot-spots across the real and imaginary components, which implies that there is a phase element, i.e., an extra dimensional factor, present in the relationship between observations and the distribution of returns. 
Notice also that we do not see the large impulse for the worst action $a_{2}$ present in the density matrix, which in indicative of it having a significantly lower expected return compared to the other actions. Put simply, the probability for action $a_{2}$ is so low that its return distribution does not influence the combined Hilbert space $\hilbert_{\setaction} \otimes \hilbert_{\setreturn}$ much at all. This is another benefit of our approach, because distributions with very low relative moments present very little amplitude to perturb the system, whereas a classical model would generate all distributions from the \ac{nn} with equal emphasis.

\subsubsection{Distributions for FrozenLake}


\begin{figure}[t!] 
\centering
\subcaptionbox{\texttt{C51~(1c44782b)} \label{fig:qnrl:exp:FrozenLake:distribution:c51}}{\fcolorbox{gray!50}{white}{\includegraphics[width=0.95\linewidth]{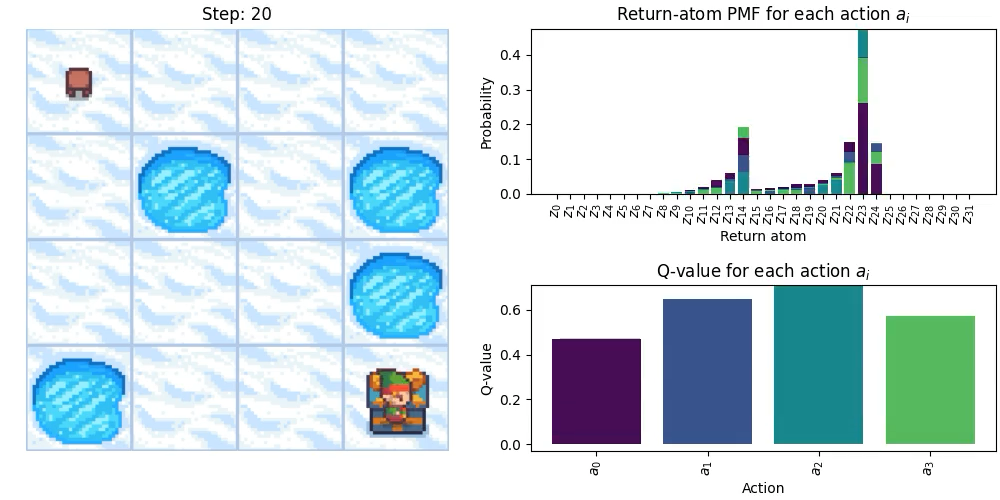}}}
\hfil
\subcaptionbox{\texttt{QnRL~(a275ee45)} \label{fig:qnrl:exp:FrozenLake:distribution:qnrl}}{\fcolorbox{gray!50}{white}{\includegraphics[width=0.95\linewidth]{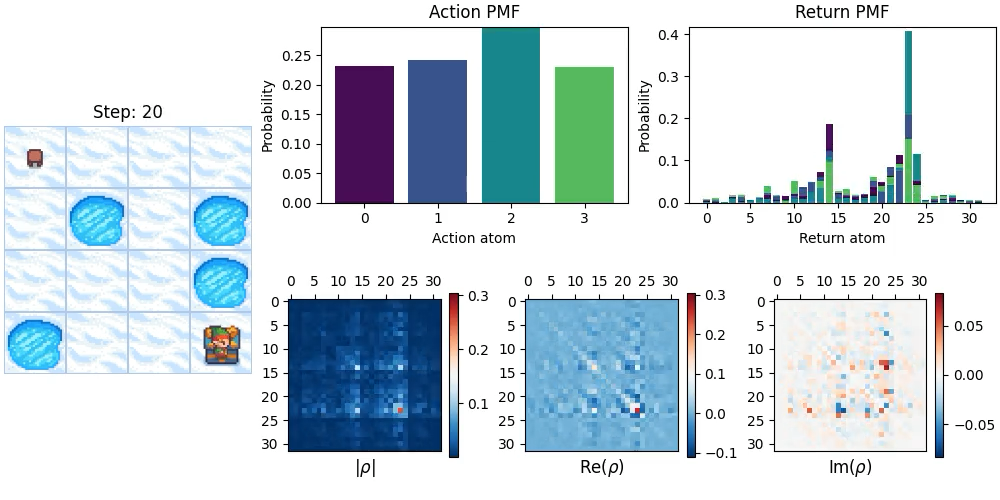}}}
\caption[Learned distributions for \acs{qnrl} and baselines on FrozenLake.]{Comparison of environment interaction within \texttt{FrozenLake} for best models of \texttt{C51} (a) and \texttt{QnRL} (b). These figures generally show that that \texttt{QnRL} produces return distributions that are either on par or superior to classical baselines.\label{fig:qnrl:exp:distribution:FrozenLake}}
\end{figure}

In \cref{fig:qnrl:exp:distribution:FrozenLake}, we show the \texttt{FrozenLake} environment. Here, we focus on the last step prior to the agent reaching the goal, which is step index 20 for both models (both models reach the goal in $21$ steps). This allows us to inspect agent behavior when adjacent to a goal scenario. We observe that both models also learn similarly-shaped return distributions that resemble twinned exponentials with impulses at $z_{14}$ and $z_{23}$ respectively. 
In particular, we see that \texttt{QnRL} generates distributions that have amplitude above zero for most its support. This means the quantum model learns that these atoms are still relevant in the action selection process, which the classical model eliminates entirely.
Both models also predict the same action ranking $a_{2} > a_{1} > a_{3} > a_{0}$ using these distributions, where $\set{a_{0} = \texttt{left},~ a_{1} = \texttt{down},~ a_{2} = \texttt{right},~ a_{3} = \texttt{up}}$, and correctly select $a_{2}$ as the optimal action. Interestingly, this \texttt{QnRL} configuration compares the 2nd moment (i.e., variance) of the return distribution, using $n=1$ and $m=2$. This indicates that second-order effects of the return distribution, i.e., the ``spread'' across its support, has the highest relevance in the relationship between the action and the observation spaces in this environment. The density matrix shows this distribution shape as well with a symmetric pattern, and we see a clear spike in the lower-right quadrant of the magnitude plot. We also see that there is an imaginary component that has a greater presence than the other experiments. This means that the observations and returns for this environment have a greater relationship in the phase component of the return distributions, which is only captured by our modeling in quantum Hilbert space. We believe this is also why the model performs better in evaluation, because these phase components give the model another dimension on which to evaluate the estimated return for each action.

\subsubsection{Distributions for Acrobot}


\begin{figure}[t!] 
\centering
\subcaptionbox{\texttt{C51~(c3881283)} \label{fig:qnrl:exp:Acrobot:distribution:c51}}{\fcolorbox{gray!50}{white}{\includegraphics[width=0.95\linewidth]{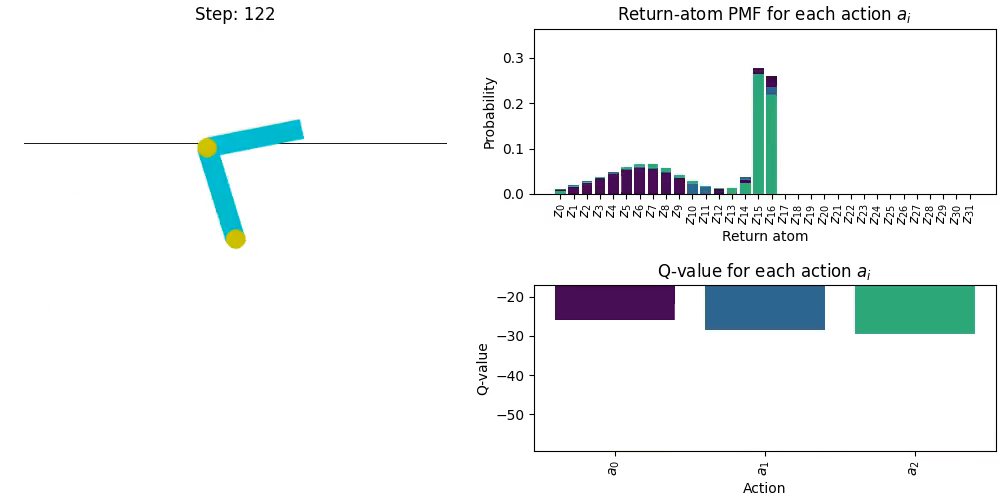}}}
\hfil
\subcaptionbox{\texttt{QnRL~(1bacc532)} \label{fig:qnrl:exp:Acrobot:distribution:qnrl}}{\fcolorbox{gray!50}{white}{\includegraphics[width=0.95\linewidth]{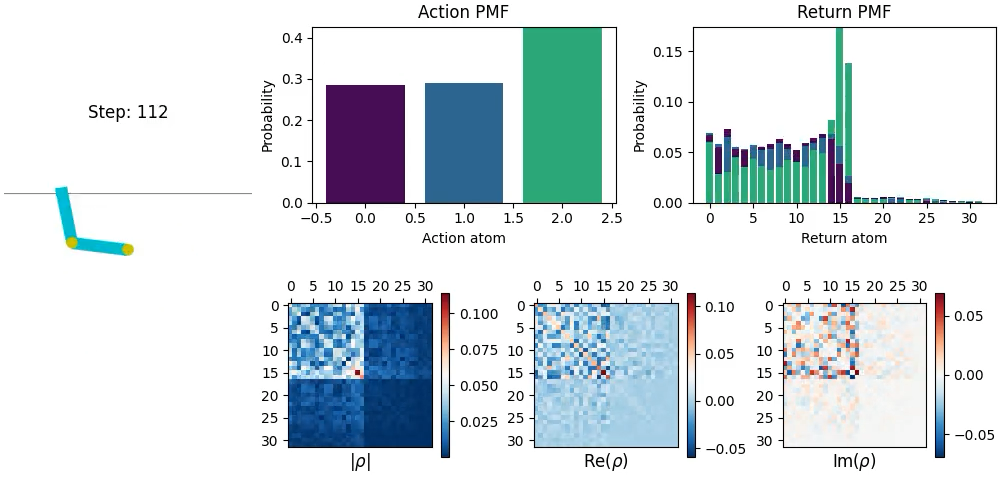}}}
\caption[Learned distributions for \acs{qnrl} and baselines on Acrobot.]{Comparison of environment interaction within \texttt{Acrobot} for best models of \texttt{C51} (a) and \texttt{QnRL} (b). These figures generally show that that \texttt{QnRL} produces return distributions that are either on par or superior to classical baselines.\label{fig:qnrl:exp:distribution:Acrobot}}
\end{figure}

In \cref{fig:qnrl:exp:distribution:Acrobot}, we show the generated distributions resulting from the \texttt{Acrobot} environment. Here, we also focus on the last step prior to the agent reaching the goal, which is step index 112 for \texttt{QnRL}, and step index 122 for \texttt{C51} respectively. While the observations in both cases are different due to their continuous range, we see that the orientation of the arms are similar in both \cref{fig:qnrl:exp:Acrobot:distribution:c51,fig:qnrl:exp:Acrobot:distribution:qnrl}. Hence, we compare the distributions of both models with this consideration in mind. Here we see that the return distribution shapes for both models are different. \texttt{C51} exhibits both a low-amplitude Gaussian within the range $[z_{0}, z_{13}]$, and two high-amplitude impulses at $z_{15}$ and $z_{16}$. \texttt{QnRL} exhibits similar impulses at $z_{15}$ and $z_{16}$, but has a relatively uniform distribution between $[z_{0}, z_{13}]$, and a very low-amplitude tail between $[z_{17}, z_{31}]$. This is interesting because the effective support of the distributions generated by both models are very similar, but the behavior of the lower atoms is unique. We observe that \texttt{C51} estimates that $a_{0} = \textrm{apply $-1$ torque}$ is the action with the highest return, but its Q-value is very similar to the other actions. In contrast, \texttt{QnRL} estimates $a_{2} = \textrm{apply $+1$ torque}$ as the best action with high probability. Interestingly, this \texttt{QnRL} configuration also compares the 2nd moment of the return distribution, using $n=1$ and $m=2$, similar to the model for \texttt{FrozenLake}. we can therefore infer that the relationship between the observation and return spaces exhibit 2nd-order effects that are not captured by the classical algorithm. The density matrix plots paint a similar picture of this complex relationship, as the quantum distribution exhibits clear amplitude and phase components over a wider range of the state map. Further, we see that the density amplitudes are clustered in the top-left quadrant, with lower-amplitude reflections in the top-right, bottom-left, and bottom-right quadrants respectively. This also shows that the observations and returns for this environment have a complex relationship, i.e., in both magnitude and phase components of the return distribution, which is only captured by learning the return distributions natively in a quantum system as in our proposed \ac{qnrl}.

\subsubsection{Distributions for SpaceInvaders}


\begin{figure}[t!] 
\centering
\subcaptionbox{\texttt{C51~(f758df04)} \label{fig:qnrl:exp:SpaceInvaders:distribution:c51}}{\fcolorbox{gray!50}{white}{\includegraphics[width=0.95\linewidth]{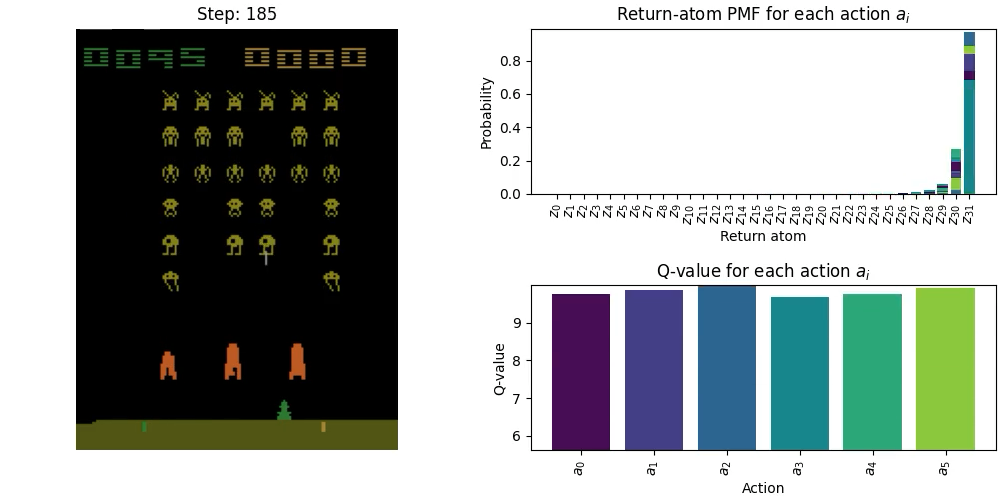}}}
\hfil
\subcaptionbox{\texttt{QnRL~(8066810a)} \label{fig:qnrl:exp:SpaceInvaders:distribution:qnrl}}{\fcolorbox{gray!50}{white}{\includegraphics[width=0.95\linewidth]{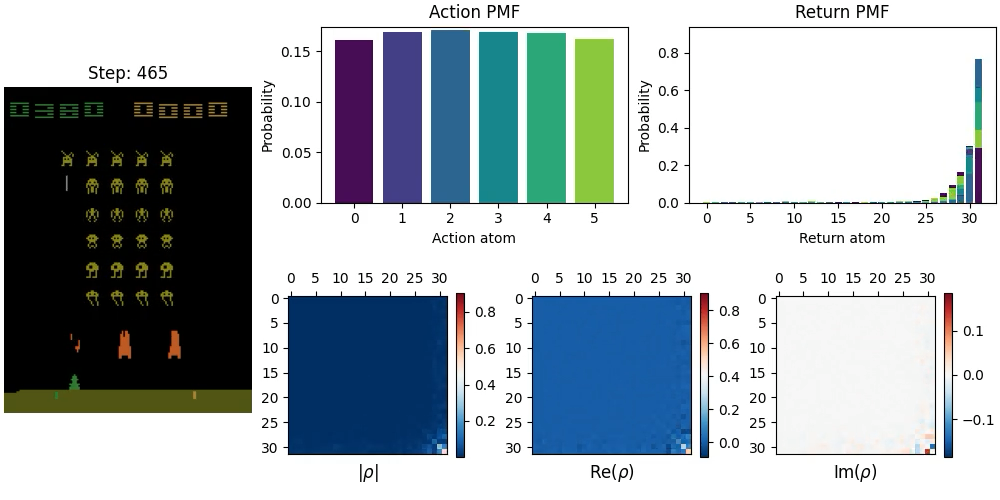}}}
\caption[Learned distributions for \acs{qnrl} and baselines on SpaceInvaders.]{Comparison of environment interaction within \texttt{SpaceInvaders} for best models of \texttt{C51} (a) and \texttt{QnRL} (b). These figures generally show that that \texttt{QnRL} produces return distributions that are either on par or superior to classical baselines.\label{fig:qnrl:exp:distribution:SpaceInvaders}}
\end{figure}

In \cref{fig:qnrl:exp:distribution:SpaceInvaders} we show the \texttt{SpaceInvaders} environment on the best evaluation run for both models, with the final scores being $245$ for \texttt{C51}, and $425$ for \texttt{QnRL} respectively. Here, we focus on time steps which produce similarly-shaped distributions, thus signifying similar environmental conditions, between \texttt{QnRL} and \texttt{C51} during their evaluation.
We see that both models generate return distributions that are left skewed, with impulses near $z_{31}$. In fact, these distributions are very similar to those learned for \texttt{CartPole}, which is interesting considering the entirely different observation data type and triple the action count. Looking at the density matrix of \texttt{QnRL} in \cref{fig:qnrl:exp:SpaceInvaders:distribution:qnrl} we also see the presence of both real and imaginary components. This suggests a higher-order relationship between return atoms exists, and given the nearly double episode score over \texttt{C51} it is beneficial for learning a more performant strategy.

\subsubsection{Distributions for Breakout}


\begin{figure}[t!] 
\centering
\subcaptionbox{\texttt{C51~(9c1e6421)} \label{fig:qnrl:exp:Breakout:distribution:c51}}{\fcolorbox{gray!50}{white}{\includegraphics[width=0.95\linewidth]{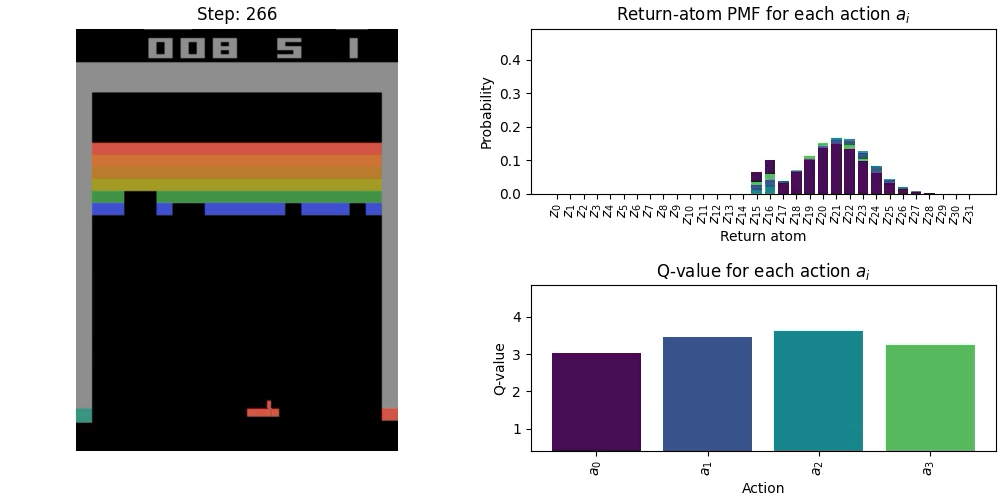}}}
\hfil
\subcaptionbox{\texttt{QnRL~(4d2222f7)} \label{fig:qnrl:exp:Breakout:distribution:qnrl}}{\fcolorbox{gray!50}{white}{\includegraphics[width=0.95\linewidth]{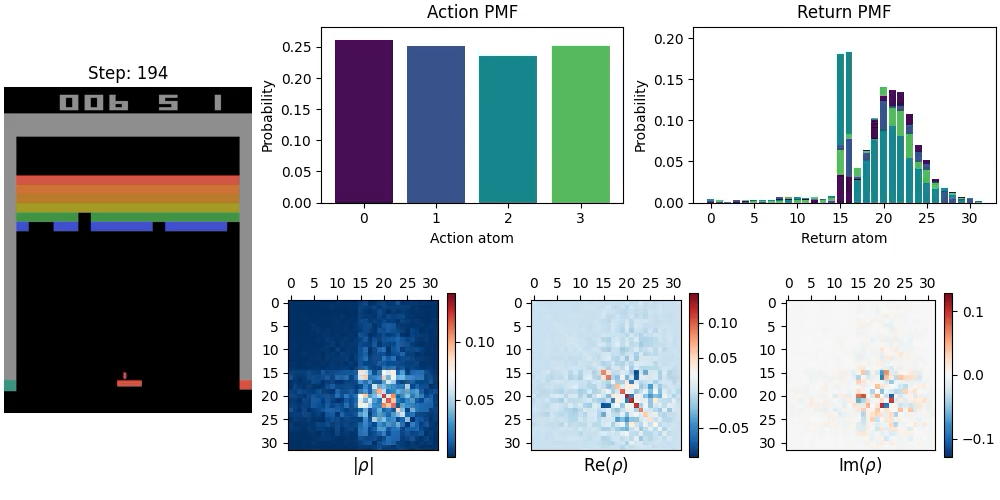}}}
\caption[Learned distributions for \acs{qnrl} and baselines on Breakout.]{Comparison of environment interaction within \texttt{Breakout} for best models of \texttt{C51} (a) and \texttt{QnRL} (b). These figures generally show that that \texttt{QnRL} produces return distributions that are either on par or superior to classical baselines.\label{fig:qnrl:exp:distribution:Breakout}}
\end{figure}

For our last distribution comparison experiment, in \cref{fig:qnrl:exp:distribution:Breakout}, we show the \texttt{Breakout} environment distributions on the best evaluation run for both models, with final scores of $11$ for \texttt{C51}, and $12$ for \texttt{QnRL} respectively. We see that the return distributions of both models are very similarly Gaussian shaped between $[z_{17}, z_{27}]$, and both exhibit a double impulse specifically at $z_{15}$ and $z_{16}$ which represents a ``poor'' action choice. We can see from the ball trajectory in the frame data that both models correctly identify this poor action through the return distributions ($a_{0}=\texttt{NOOP}$ for \texttt{C51}, and $a_{2}=\texttt{right}$ for \texttt{QnRL}), because choosing it would result in the paddle missing the ball, thus losing a player life and resulting in a lower episode reward. Conversely, both models also correctly identify and choose the ``best'' action from their Gaussian return distributions to hit the ball and continue the episode, which are $a_{2}$ for \texttt{C51}, and $a_{0}$ for \texttt{QnRL} respectively. The density matrix in \cref{fig:qnrl:exp:Breakout:distribution:qnrl} shows a clear clustering of amplitudes in the lower-right quadrant and contains both real and imaginary components, which similar to the other environments indicates a complex relationship between the return atoms exists and is captured by the quantum-native model in Hilbert space.

\subsection{Ablation study}\label{sec:qnrl:exp:ablation}

\begin{figure*}[t!] 
\centering
\subcaptionbox{\texttt{CartPole}, \texttt{C51}, Reward \label{fig:qnrl:exp:ablation:c51:CartPole:eval-episode-reward}}{\includegraphics[width=0.48\columnwidth]{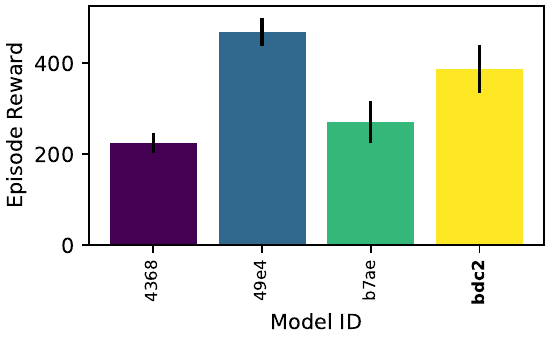}}
\hfil
\subcaptionbox{\texttt{CartPole}, \texttt{QnRL}, Reward \label{fig:qnrl:exp:ablation:qnrl:CartPole:eval-episode-reward}}{\includegraphics[width=0.48\columnwidth]{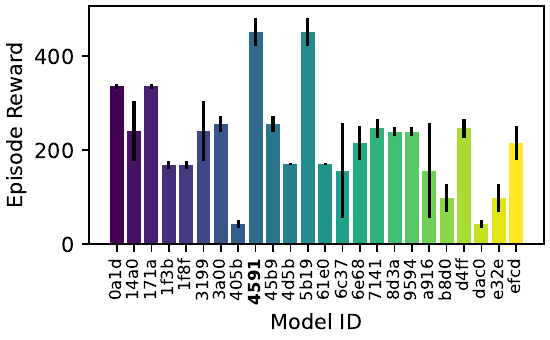}}
\hfil
\subcaptionbox{\texttt{CartPole}, \texttt{C51}, Size \label{fig:qnrl:exp:ablation:c51:CartPole:total-model-parameter-count}}{\includegraphics[width=0.48\columnwidth]{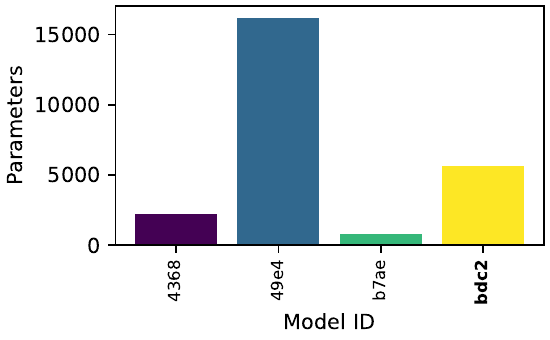}}
\hfil
\subcaptionbox{\texttt{CartPole}, \texttt{QnRL}, Size \label{fig:qnrl:exp:ablation:qnrl:CartPole:total-model-parameter-count}}{\includegraphics[width=0.48\columnwidth]{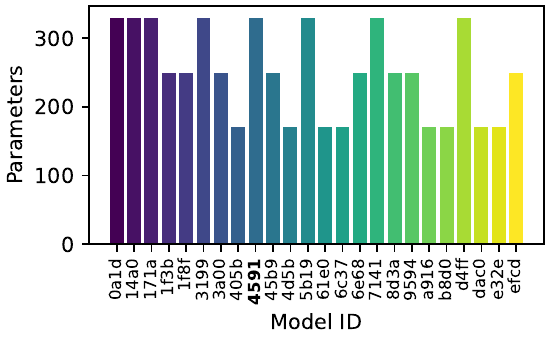}}
\hfil
\subcaptionbox{\texttt{CliffWalking}, \texttt{C51}, Reward \label{fig:qnrl:exp:ablation:c51:CliffWalking:eval-episode-reward}}{\includegraphics[width=0.48\columnwidth]{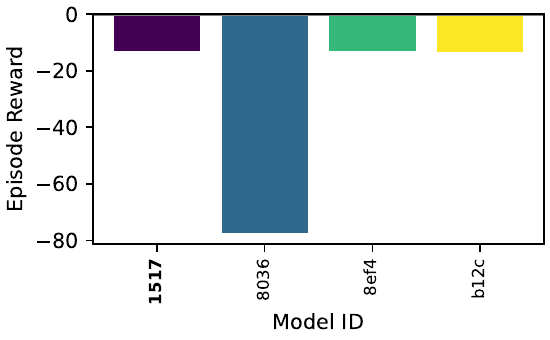}}
\hfil
\subcaptionbox{\texttt{CliffWalking}, \texttt{QnRL}, Reward \label{fig:qnrl:exp:ablation:qnrl:CliffWalking:eval-episode-reward}}{\includegraphics[width=0.48\columnwidth]{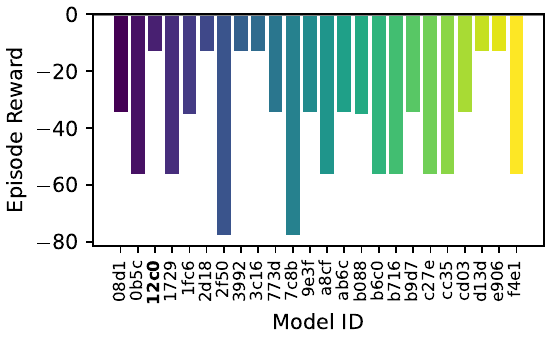}}
\hfil
\subcaptionbox{\texttt{CliffWalking}, \texttt{C51}, Size \label{fig:qnrl:exp:ablation:c51:CliffWalking:total-model-parameter-count}}{\includegraphics[width=0.48\columnwidth]{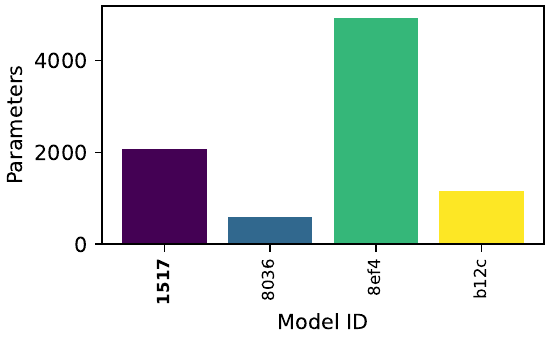}}
\hfil
\subcaptionbox{\texttt{CliffWalking}, \texttt{QnRL}, Size \label{fig:qnrl:exp:ablation:qnrl:CliffWalking:total-model-parameter-count}}{\includegraphics[width=0.48\columnwidth]{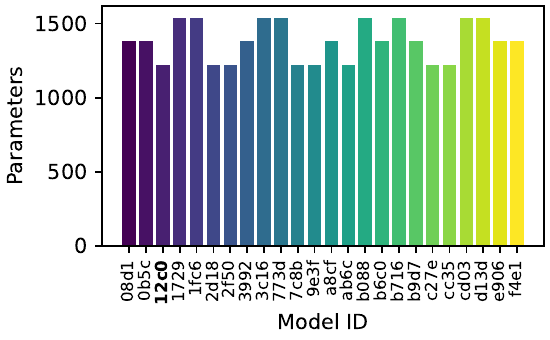}}
\hfil
\subcaptionbox{\texttt{FrozenLake}, \texttt{C51}, Reward \label{fig:qnrl:exp:ablation:c51:FrozenLake:eval-episode-reward}}{\includegraphics[width=0.48\columnwidth]{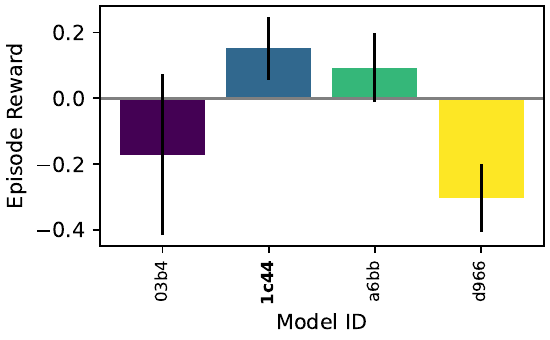}}
\hfil
\subcaptionbox{\texttt{FrozenLake}, \texttt{QnRL}, Reward \label{fig:qnrl:exp:ablation:qnrl:FrozenLake:eval-episode-reward}}{\includegraphics[width=0.48\columnwidth]{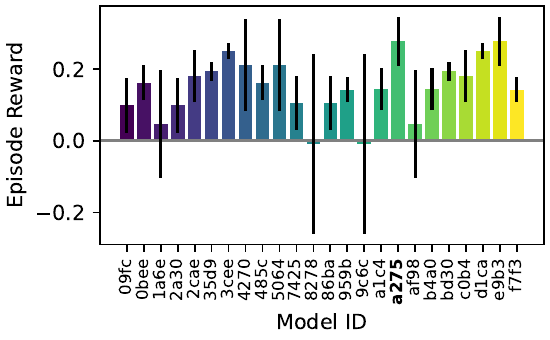}}
\hfil
\subcaptionbox{\texttt{FrozenLake}, \texttt{C51}, Size \label{fig:qnrl:exp:ablation:c51:FrozenLake:total-model-parameter-count}}{\includegraphics[width=0.48\columnwidth]{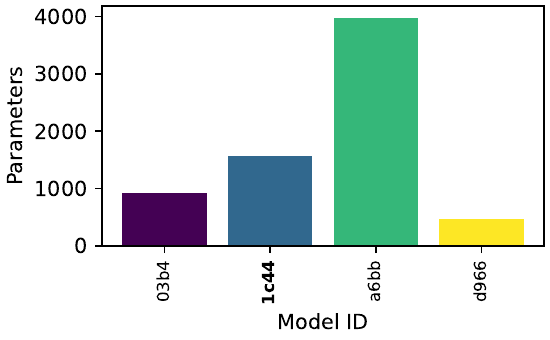}}
\hfil
\subcaptionbox{\texttt{FrozenLake}, \texttt{QnRL}, Size \label{fig:qnrl:exp:ablation:qnrl:FrozenLake:total-model-parameter-count}}{\includegraphics[width=0.48\columnwidth]{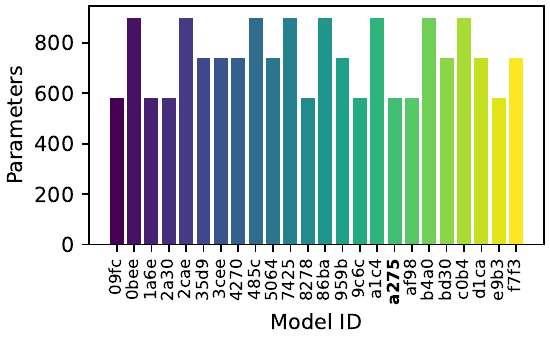}}
\hfil
\subcaptionbox{\texttt{Acrobot}, \texttt{C51}, Reward \label{fig:qnrl:exp:ablation:c51:Acrobot:eval-episode-reward}}{\includegraphics[width=0.48\columnwidth]{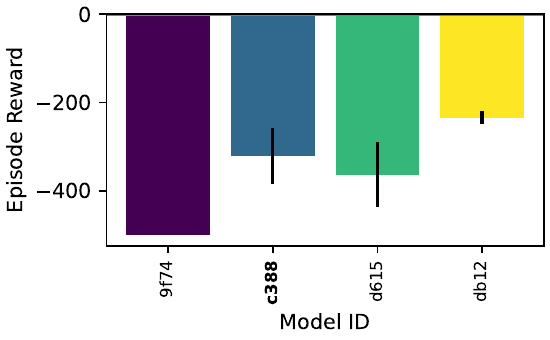}}
\hfil
\subcaptionbox{\texttt{Acrobot}, \texttt{QnRL}, Reward \label{fig:qnrl:exp:ablation:qnrl:Acrobot:eval-episode-reward}}{\includegraphics[width=0.48\columnwidth]{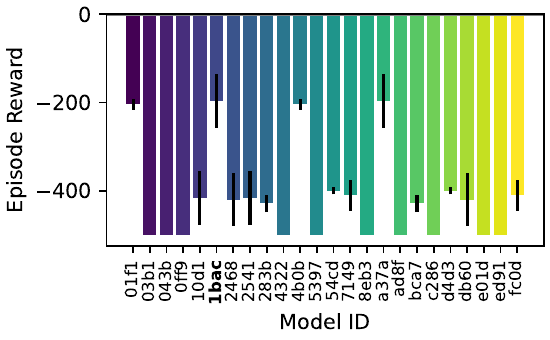}}
\hfil
\subcaptionbox{\texttt{Acrobot}, \texttt{C51}, Size \label{fig:qnrl:exp:ablation:c51:Acrobot:total-model-parameter-count}}{\includegraphics[width=0.48\columnwidth]{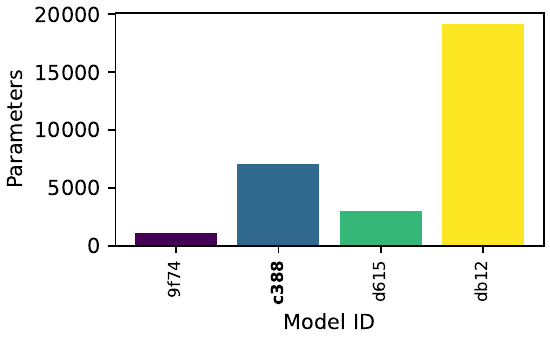}}
\hfil
\subcaptionbox{\texttt{Acrobot}, \texttt{QnRL}, Size \label{fig:qnrl:exp:ablation:qnrl:Acrobot:total-model-parameter-count}}{\includegraphics[width=0.48\columnwidth]{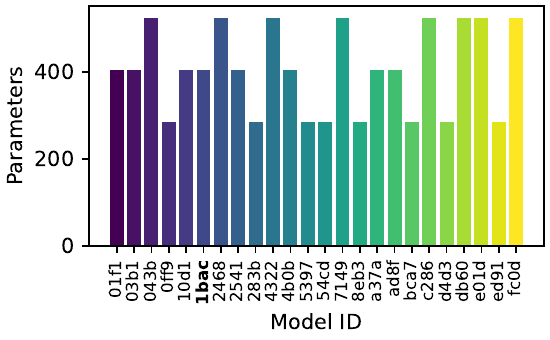}}
\hfil
\subcaptionbox{\texttt{SpaceInvaders}, \texttt{C51}, Reward \label{fig:qnrl:exp:ablation:c51:SpaceInvaders:eval-episode-reward}}{\includegraphics[width=0.48\columnwidth]{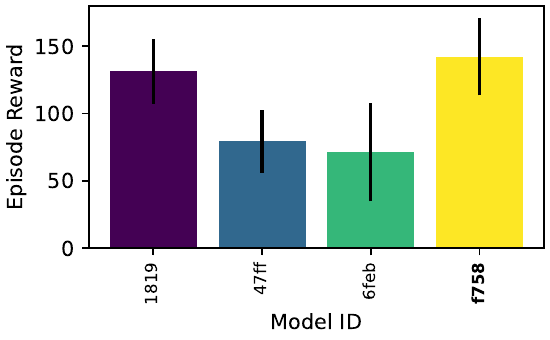}}
\hfil
\subcaptionbox{\texttt{SpaceInvaders}, \texttt{QnRL}, Reward \label{fig:qnrl:exp:ablation:qnrl:SpaceInvaders:eval-episode-reward}}{\includegraphics[width=0.48\columnwidth]{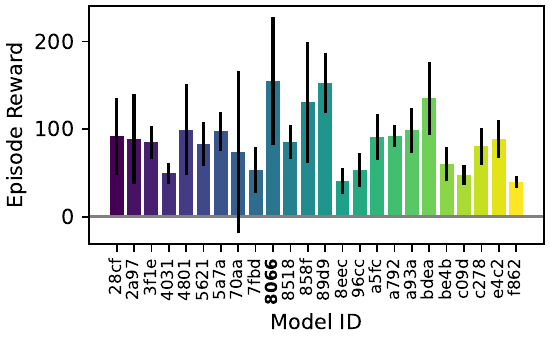}}
\hfil
\subcaptionbox{\texttt{SpaceInvaders}, \texttt{C51}, Size \label{fig:qnrl:exp:ablation:c51:SpaceInvaders:total-model-parameter-count}}{\includegraphics[width=0.48\columnwidth]{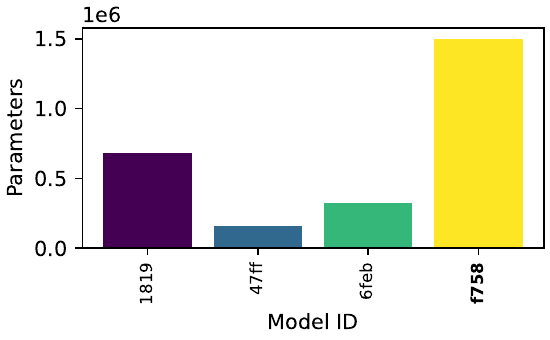}}
\hfil
\subcaptionbox{\texttt{SpaceInvaders}, \texttt{QnRL}, Size \label{fig:qnrl:exp:ablation:qnrl:SpaceInvaders:total-model-parameter-count}}{\includegraphics[width=0.48\columnwidth]{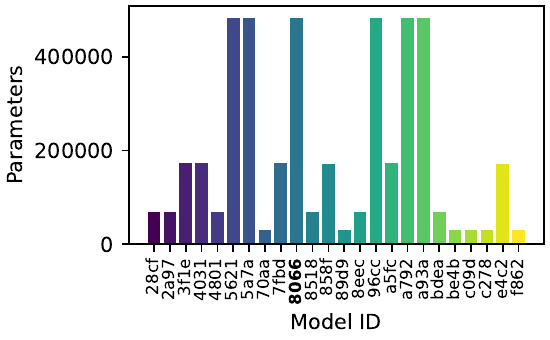}}
\hfil
\subcaptionbox{\texttt{Breakout}, \texttt{C51}, Reward \label{fig:qnrl:exp:ablation:c51:Breakout:eval-episode-reward}}{\includegraphics[width=0.48\columnwidth]{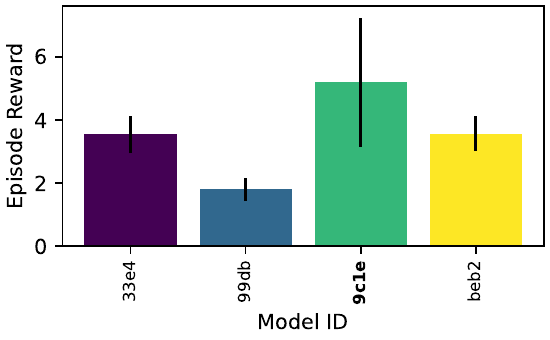}}
\hfil
\subcaptionbox{\texttt{Breakout}, \texttt{QnRL}, Reward \label{fig:qnrl:exp:ablation:qnrl:Breakout:eval-episode-reward}}{\includegraphics[width=0.48\columnwidth]{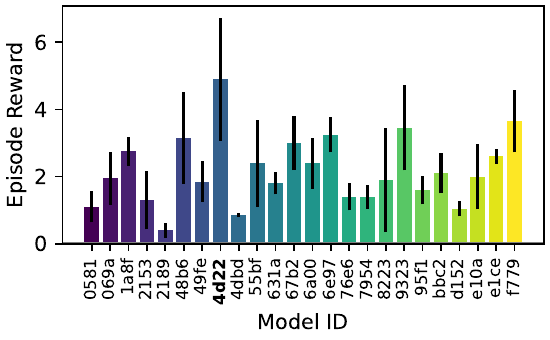}}
\hfil
\subcaptionbox{\texttt{Breakout}, \texttt{C51}, Size \label{fig:qnrl:exp:ablation:c51:Breakout:total-model-parameter-count}}{\includegraphics[width=0.48\columnwidth]{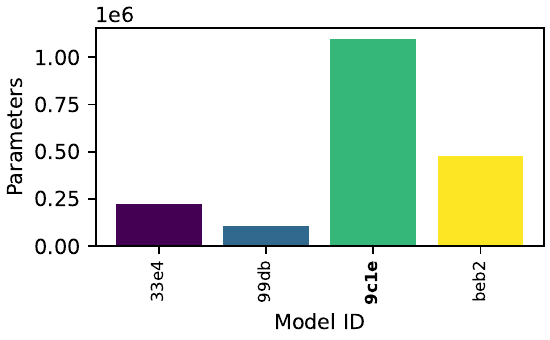}}
\hfil
\subcaptionbox{\texttt{Breakout}, \texttt{QnRL}, Size \label{fig:qnrl:exp:ablation:qnrl:Breakout:total-model-parameter-count}}{\includegraphics[width=0.48\columnwidth]{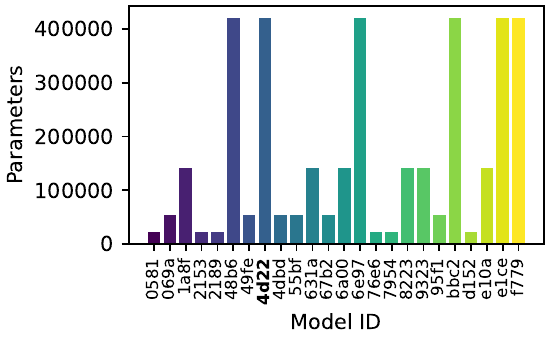}}
\caption[Ablation study for \texttt{QnRL} and \texttt{C51}.]{Ablation study for models (a,c,e,g,i,k,m,o,q,s,u,w)~\texttt{C51} and (b,d,f,h,j,l,n,p,r,t,v,x)~\texttt{QnRL} on environments (a)-(d)~\texttt{CartPole}, (e)-(h)~\texttt{CliffWalking}, (i)-(l)~\texttt{FrozenLake}, (m)-(p)~\texttt{Acrobot}, (q)-(t)~\texttt{SpaceInvaders}, and (u)-(x)~\texttt{Breakout} showing metrics (a,b,e,f,i,j,m,n,q,r,u,v)~evaluation reward and (c,d,g,h,k,l,o,p,s,t,w,x)~model size. Plots are organized by the first four characters of the model configuration hash identifier. Each model configuration was trained on 4 unique seeds, and each training seed was evaluated on 10 unique seeds. The best configuration for each model is highlighted in \textbf{bold}.}
\label{fig:qnrl:exp:ablation}
\end{figure*}

\begin{table*}[t!]
\centering
\footnotesize
\caption[Ablation study for \texttt{C51}.]{Ablation study for \texttt{C51} with varying hidden layer units across the classic control and grid-world environments. Compares evaluation performance results, total model parameters, and all hyperparameters. Selected best model configuration is highlighted in \textbf{bold}. Rows are sorted by model configuration identifier for each respective environment.\label{tab:qnrl:ablation:c51:Gym}}
\begin{tabular}{llllllllll}
\toprule
Environment & Model ID & Mean Eval Reward & Parameters & $\abs{x}$ & $\abs{\setaction}$ & $\abs{\setreturn}$ & $z_{\textrm{min}}$ & $z_{\textrm{max}}$ & Hidden Layers \\
\midrule
\multirow{4}{*}{\texttt{CartPole}} 
& \texttt{4368ff55} & $224.6$ & $2209$ & $4$ & $2$ & $32$ & $-100$ & $100$ & $[30,21]$ \\
& \texttt{49e46a87} & $469.3$ & $16204$ & $4$ & $2$ & $32$ & $-100$ & $100$ & $[120,84]$ \\
& \texttt{b7aea040} & $270.15$ & $792$ & $4$ & $2$ & $32$ & $-100$ & $100$ & $[16,8]$ \\
& \textbf{\texttt{bdc26ac3}} & $387.2250$ & $5614$ & $4$ & $2$ & $32$ & $-100$ & $100$ & $[60,42]$ \\
\cmidrule(lr){1-10}
\multirow{4}{*}{\texttt{CliffWalking}} 
& \textbf{\texttt{15173423}} & $-13$ & $2072$ & $48$ & $4$ & $32$ & $-100$ & $100$ & $[16,8]$ \\
& \texttt{8036b0d0} & $-77.5$ & $590$ & $48$ & $4$ & $32$ & $-100$ & $100$ & $[4,2]$ \\
& \texttt{8ef48998} & $-13$ & $4937$ & $48$ & $4$ & $32$ & $-100$ & $100$ & $[30,21]$ \\
& \texttt{b12ce064} & $-13.5$ & $1151$ & $48$ & $4$ & $32$ & $-100$ & $100$ & $[7,5]$ \\
\cmidrule(lr){1-10}
\multirow{4}{*}{\texttt{FrozenLake}} 
& \texttt{03b4d6ac} & $-0.1708$ & $927$ & $16$ & $4$ & $32$ & $-2$ & $2$ & $[7,5]$ \\
& \textbf{\texttt{1c44782b}} & $0.1517$ & $1560$ & $16$ & $4$ & $32$ & $-2$ & $2$ & $[16,8]$ \\
& \texttt{a6bb645e} & $0.0925$ & $3977$ & $16$ & $4$ & $32$ & $-2$ & $2$ & $[30,21]$ \\
& \texttt{d9663ea7} & $-0.3035$ & $462$ & $16$ & $4$ & $32$ & $-2$ & $2$ & $[4,2]$ \\
\cmidrule(lr){1-10}
\multirow{4}{*}{\texttt{Acrobot}} 
& \texttt{9f7469c2} & $-500$ & $1112$ & $6$ & $3$ & $32$ & $-100$ & $100$ & $[16,8]$ \\
& \textbf{\texttt{c3881283}} & $-320.9$ & $7110$ & $6$ & $3$ & $32$ & $-100$ & $100$ & $[60,42]$ \\
& \texttt{d615efcf} & $-363.35$ & $2973$ & $6$ & $3$ & $32$ & $-100$ & $100$ & $[30,21]$ \\
& \texttt{db125df2} & $-234.075$ & $19164$ & $6$ & $3$ & $32$ & $-100$ & $100$ & $[120,84]$ \\
\bottomrule
\end{tabular}
\end{table*}

\begin{table*}[t!]
\centering
\caption[Ablation study for \texttt{C51} on Atari environments.]{Ablation study for \texttt{C51} with varying convolution features across the Atari environments. Compares evaluation performance results, total model parameters, and all hyperparameters. Selected best model configuration is highlighted in \textbf{bold}. Rows are sorted by model configuration identifier for each respective environment.\label{tab:qnrl:ablation:c51:Atari}}
\resizebox{\linewidth}{!}{
\begin{tabular}{llllllllllll}
\toprule
Environment & Model ID & Mean Eval Reward & Parameters & Observation dim. & $\abs{\setaction}$ & $\abs{\setreturn}$ & $z_{\textrm{min}}$ & $z_{\textrm{max}}$ & Conv. Features & Kernels & Strides \\
\midrule
\multirow{4}{*}{\texttt{SpaceInvaders}} 
& \texttt{1819f2cb} & $131.5$ & $680288$ & $4\times84\times84$ & $6$ & $32$ & $-10$ & $10$ & $[32,64,64]$ & $[8, 4, 3]$ & $[4, 2, 1]$ \\
& \texttt{47ffe1c4} & $79.25$ & $157160$ & $4\times84\times84$ & $6$ & $32$ & $-10$ & $10$ & $[8,16,16]$ & $[8, 4, 3]$ & $[4, 2, 1]$ \\
& \texttt{6feb00c2} & $71.25$ & $322832$ & $4\times84\times84$ & $6$ & $32$ & $-10$ & $10$ & $[16,32,32]$ & $[8, 4, 3]$ & $[4, 2, 1]$ \\
& \textbf{\texttt{f758df04}} & $142.25$ & $1499648$ & $4\times84\times84$ & $6$ & $32$ & $-10$ & $10$ & $[64,128,128]$ & $[8, 4, 3]$ & $[4, 2, 1]$ \\
\cmidrule(lr){1-12}
\multirow{4}{*}{\texttt{Breakout}} 
& \texttt{33e43f68} & $3.55$ & $222416$ & $4\times84\times84$ & $4$ & $32$ & $-10$ & $10$ & $[16,32,32]$ & $[8, 4, 3]$ & $[4, 2, 1]$ \\
& \texttt{99dbcbf7} & $1.05$ & $106920$ & $4\times84\times84$ & $4$ & $32$ & $-10$ & $10$ & $[8,16,16]$ & $[8, 4, 3]$ & $[4, 2, 1]$ \\
& \textbf{\texttt{9c1e6421}} & $5.2$ & $1098176$ & $4\times84\times84$ & $4$ & $32$ & $-10$ & $10$ & $[64,128,128]$ & $[8, 4, 3]$ & $[4, 2, 1]$ \\
& \texttt{beb2a303} & $3.55$ & $479520$ & $4\times84\times84$ & $4$ & $32$ & $-10$ & $10$ & $[32,64,64]$ & $[8, 4, 3]$ & $[4, 2, 1]$ \\
\bottomrule
\end{tabular}
}
\end{table*}
\begin{table*}[t!]
\centering
\tiny
\caption[Ablation study for \texttt{QnRL}.]{Ablation study for \texttt{QnRL} with varying circuit depth $L$, expectation power $n$, moment index $m$, and entanglement style $U_{\textrm{ent}}$ across the classic control and grid-world environments. Compares evaluation performance results, total model parameters, and all hyperparameters. Selected best model configuration is highlighted in \textbf{bold}. Rows are sorted by model configuration identifier for each respective environment.\label{tab:qnrl:ablation:qnrl:Gym}}
\begin{tabular}{llllllllllllllll}
\toprule
Environment & Model ID & Mean Eval Reward & Parameters & $\abs{x}$ & $\abs{\setaction}$ & $\abs{\setreturn}$ & $z_{\textrm{min}}$ & $z_{\textrm{max}}$ & $n$ & $m$ & $L$ & $U_{\textrm{enc}}$ & $U_{\textrm{var}}$ & $U_{\textrm{ent}}$ & $\phi$ \\
\midrule
\multirow{24}{*}{\texttt{CartPole}} 
& \texttt{0a1d6db2} & $336.15$ & $330$ & $4$ & $2$ & $32$ & $-100$ & $100$ & $2$ & $2$ & $7$ & $[RX]$ & $[RZ,RY,RZ]$ & \texttt{offset-CZ} & \texttt{tanh} \\
& \texttt{14a05a3d} & $241.625$ & $330$ & $4$ & $2$ & $32$ & $-100$ & $100$ & $2$ & $1$ & $7$ & $[RX]$ & $[RZ,RY,RZ]$ & \texttt{circ-CZ} & \texttt{tanh} \\
& \texttt{171a91d1} & $336.15$ & $330$ & $4$ & $2$ & $32$ & $-100$ & $100$ & $1$ & $2$ & $7$ & $[RX]$ & $[RZ,RY,RZ]$ & \texttt{offset-CZ} & \texttt{tanh} \\
& \texttt{1f3b03a6} & $168.15$ & $250$ & $4$ & $2$ & $32$ & $-100$ & $100$ & $2$ & $2$ & $5$ & $[RX]$ & $[RZ,RY,RZ]$ & \texttt{offset-CZ} & \texttt{tanh} \\
& \texttt{1f8ffc6f} & $168.15$ & $250$ & $4$ & $2$ & $32$ & $-100$ & $100$ & $1$ & $2$ & $5$ & $[RX]$ & $[RZ,RY,RZ]$ & \texttt{offset-CZ} & \texttt{tanh} \\
& \texttt{31994c7a} & $241.625$ & $330$ & $4$ & $2$ & $32$ & $-100$ & $100$ & $1$ & $1$ & $7$ & $[RX]$ & $[RZ,RY,RZ]$ & \texttt{circ-CZ} & \texttt{tanh} \\
& \texttt{3a00cecd} & $256.475$ & $250$ & $4$ & $2$ & $32$ & $-100$ & $100$ & $1$ & $1$ & $5$ & $[RX]$ & $[RZ,RY,RZ]$ & \texttt{circ-CZ} & \texttt{tanh} \\
& \texttt{405bd6e3} & $43.025$ & $170$ & $4$ & $2$ & $32$ & $-100$ & $100$ & $2$ & $2$ & $3$ & $[RX]$ & $[RZ,RY,RZ]$ & \texttt{offset-CZ} & \texttt{tanh} \\
& \textbf{\texttt{45913de1}} & $452.3$ & $330$ & $4$ & $2$ & $32$ & $-100$ & $100$ & $1$ & $1$ & $7$ & $[RX]$ & $[RZ,RY,RZ]$ & \texttt{offset-CZ} & \texttt{tanh} \\
& \texttt{45b9d666} & $256.475$ & $250$ & $4$ & $2$ & $32$ & $-100$ & $100$ & $2$ & $1$ & $5$ & $[RX]$ & $[RZ,RY,RZ]$ & \texttt{circ-CZ} & \texttt{tanh} \\
& \texttt{4d5bc0c3} & $170.175$ & $170$ & $4$ & $2$ & $32$ & $-100$ & $100$ & $2$ & $2$ & $3$ & $[RX]$ & $[RZ,RY,RZ]$ & \texttt{circ-CZ} & \texttt{tanh} \\
& \texttt{5b19a4fb} & $452.3$ & $330$ & $4$ & $2$ & $32$ & $-100$ & $100$ & $2$ & $1$ & $7$ & $[RX]$ & $[RZ,RY,RZ]$ & \texttt{offset-CZ} & \texttt{tanh} \\
& \texttt{61e03e86} & $170.175$ & $170$ & $4$ & $2$ & $32$ & $-100$ & $100$ & $1$ & $2$ & $3$ & $[RX]$ & $[RZ,RY,RZ]$ & \texttt{circ-CZ} & \texttt{tanh} \\
& \texttt{6c378092} & $156.375$ & $170$ & $4$ & $2$ & $32$ & $-100$ & $100$ & $2$ & $1$ & $3$ & $[RX]$ & $[RZ,RY,RZ]$ & \texttt{circ-CZ} & \texttt{tanh} \\
& \texttt{6e68ed3c} & $215.175$ & $250$ & $4$ & $2$ & $32$ & $-100$ & $100$ & $2$ & $1$ & $5$ & $[RX]$ & $[RZ,RY,RZ]$ & \texttt{offset-CZ} & \texttt{tanh} \\
& \texttt{71415254} & $246.425$ & $330$ & $4$ & $2$ & $32$ & $-100$ & $100$ & $2$ & $2$ & $7$ & $[RX]$ & $[RZ,RY,RZ]$ & \texttt{circ-CZ} & \texttt{tanh} \\
& \texttt{8d3a1eb9} & $239.15$ & $250$ & $4$ & $2$ & $32$ & $-100$ & $100$ & $1$ & $2$ & $5$ & $[RX]$ & $[RZ,RY,RZ]$ & \texttt{circ-CZ} & \texttt{tanh} \\
& \texttt{9594b39b} & $239.15$ & $250$ & $4$ & $2$ & $32$ & $-100$ & $100$ & $2$ & $2$ & $5$ & $[RX]$ & $[RZ,RY,RZ]$ & \texttt{circ-CZ} & \texttt{tanh} \\
& \texttt{a916b178} & $156.375$ & $170$ & $4$ & $2$ & $32$ & $-100$ & $100$ & $1$ & $1$ & $3$ & $[RX]$ & $[RZ,RY,RZ]$ & \texttt{circ-CZ} & \texttt{tanh} \\
& \texttt{b8d03cbf} & $98.05$ & $170$ & $4$ & $2$ & $32$ & $-100$ & $100$ & $1$ & $1$ & $3$ & $[RX]$ & $[RZ,RY,RZ]$ & \texttt{offset-CZ} & \texttt{tanh} \\
& \texttt{d4ff21cb} & $246.425$ & $330$ & $4$ & $2$ & $32$ & $-100$ & $100$ & $1$ & $2$ & $7$ & $[RX]$ & $[RZ,RY,RZ]$ & \texttt{circ-CZ} & \texttt{tanh} \\
& \texttt{dac0fe4a} & $43.025$ & $170$ & $4$ & $2$ & $32$ & $-100$ & $100$ & $1$ & $2$ & $3$ & $[RX]$ & $[RZ,RY,RZ]$ & \texttt{offset-CZ} & \texttt{tanh} \\
& \texttt{e32e8801} & $98.05$ & $170$ & $4$ & $2$ & $32$ & $-100$ & $100$ & $2$ & $1$ & $3$ & $[RX]$ & $[RZ,RY,RZ]$ & \texttt{offset-CZ} & \texttt{tanh} \\
& \texttt{efcd9235} & $215.175$ & $250$ & $4$ & $2$ & $32$ & $-100$ & $100$ & $1$ & $1$ & $5$ & $[RX]$ & $[RZ,RY,RZ]$ & \texttt{offset-CZ} & \texttt{tanh} \\
\cmidrule(lr){1-16}
\multirow{24}{*}{\texttt{CliffWalking}} 
& \texttt{08d163cb} & $-34.5$ & $1380$ & $48$ & $4$ & $32$ & $-100$ & $100$ & $1$ & $1$ & $5$ & $[RX]$ & $[RZ,RY,RZ]$ & \texttt{offset-CZ} & \texttt{tanh} \\
& \texttt{0b5c3b76} & $-56.0$ & $1380$ & $48$ & $4$ & $32$ & $-100$ & $100$ & $1$ & $2$ & $5$ & $[RX]$ & $[RZ,RY,RZ]$ & \texttt{circ-CZ} & \texttt{tanh} \\
& \textbf{\texttt{12c045e6}} & $-13.0$ & $1220$ & $48$ & $4$ & $32$ & $-100$ & $100$ & $1$ & $1$ & $3$ & $[RX]$ & $[RZ,RY,RZ]$ & \texttt{circ-CZ} & \texttt{tanh} \\
& \texttt{1729a977} & $-56.0$ & $1540$ & $48$ & $4$ & $32$ & $-100$ & $100$ & $1$ & $2$ & $7$ & $[RX]$ & $[RZ,RY,RZ]$ & \texttt{circ-CZ} & \texttt{tanh} \\
& \texttt{1fc6943d} & $-35.0$ & $1540$ & $48$ & $4$ & $32$ & $-100$ & $100$ & $4$ & $1$ & $7$ & $[RX]$ & $[RZ,RY,RZ]$ & \texttt{offset-CZ} & \texttt{tanh} \\
& \texttt{2d18e0c2} & $-13.0$ & $1220$ & $48$ & $4$ & $32$ & $-100$ & $100$ & $4$ & $1$ & $3$ & $[RX]$ & $[RZ,RY,RZ]$ & \texttt{circ-CZ} & \texttt{tanh} \\
& \texttt{2f5083d1} & $-77.5$ & $1220$ & $48$ & $4$ & $32$ & $-100$ & $100$ & $1$ & $2$ & $3$ & $[RX]$ & $[RZ,RY,RZ]$ & \texttt{circ-CZ} & \texttt{tanh} \\
& \texttt{3992cb09} & $-13.0$ & $1380$ & $48$ & $4$ & $32$ & $-100$ & $100$ & $1$ & $1$ & $5$ & $[RX]$ & $[RZ,RY,RZ]$ & \texttt{circ-CZ} & \texttt{tanh} \\
& \texttt{3c16ef96} & $-13.0$ & $1540$ & $48$ & $4$ & $32$ & $-100$ & $100$ & $1$ & $2$ & $7$ & $[RX]$ & $[RZ,RY,RZ]$ & \texttt{offset-CZ} & \texttt{tanh} \\
& \texttt{773dd844} & $-34.5$ & $1540$ & $48$ & $4$ & $32$ & $-100$ & $100$ & $1$ & $1$ & $7$ & $[RX]$ & $[RZ,RY,RZ]$ & \texttt{circ-CZ} & \texttt{tanh} \\
& \texttt{7c8bba34} & $-77.5$ & $1220$ & $48$ & $4$ & $32$ & $-100$ & $100$ & $4$ & $2$ & $3$ & $[RX]$ & $[RZ,RY,RZ]$ & \texttt{circ-CZ} & \texttt{tanh} \\
& \texttt{9e3f8627} & $-34.5$ & $1220$ & $48$ & $4$ & $32$ & $-100$ & $100$ & $1$ & $1$ & $3$ & $[RX]$ & $[RZ,RY,RZ]$ & \texttt{offset-CZ} & \texttt{tanh} \\
& \texttt{a8cf904d} & $-56.0$ & $1380$ & $48$ & $4$ & $32$ & $-100$ & $100$ & $4$ & $2$ & $5$ & $[RX]$ & $[RZ,RY,RZ]$ & \texttt{circ-CZ} & \texttt{tanh} \\
& \texttt{ab6c0436} & $-34.5$ & $1220$ & $48$ & $4$ & $32$ & $-100$ & $100$ & $4$ & $1$ & $3$ & $[RX]$ & $[RZ,RY,RZ]$ & \texttt{offset-CZ} & \texttt{tanh} \\
& \texttt{b0883a8f} & $-35.0$ & $1540$ & $48$ & $4$ & $32$ & $-100$ & $100$ & $1$ & $1$ & $7$ & $[RX]$ & $[RZ,RY,RZ]$ & \texttt{offset-CZ} & \texttt{tanh} \\
& \texttt{b6c0c9f7} & $-56.0$ & $1380$ & $48$ & $4$ & $32$ & $-100$ & $100$ & $1$ & $2$ & $5$ & $[RX]$ & $[RZ,RY,RZ]$ & \texttt{offset-CZ} & \texttt{tanh} \\
& \texttt{b716ab69} & $-56.0$ & $1540$ & $48$ & $4$ & $32$ & $-100$ & $100$ & $4$ & $2$ & $7$ & $[RX]$ & $[RZ,RY,RZ]$ & \texttt{circ-CZ} & \texttt{tanh} \\
& \texttt{b9d762f3} & $-34.5$ & $1380$ & $48$ & $4$ & $32$ & $-100$ & $100$ & $4$ & $1$ & $5$ & $[RX]$ & $[RZ,RY,RZ]$ & \texttt{offset-CZ} & \texttt{tanh} \\
& \texttt{c27efbf3} & $-56.0$ & $1220$ & $48$ & $4$ & $32$ & $-100$ & $100$ & $1$ & $2$ & $3$ & $[RX]$ & $[RZ,RY,RZ]$ & \texttt{offset-CZ} & \texttt{tanh} \\
& \texttt{cc3545d4} & $-56.0$ & $1220$ & $48$ & $4$ & $32$ & $-100$ & $100$ & $4$ & $2$ & $3$ & $[RX]$ & $[RZ,RY,RZ]$ & \texttt{offset-CZ} & \texttt{tanh} \\
& \texttt{cd036ad7} & $-34.5$ & $1540$ & $48$ & $4$ & $32$ & $-100$ & $100$ & $4$ & $1$ & $7$ & $[RX]$ & $[RZ,RY,RZ]$ & \texttt{circ-CZ} & \texttt{tanh} \\
& \texttt{d13d8298} & $-13.0$ & $1540$ & $48$ & $4$ & $32$ & $-100$ & $100$ & $4$ & $2$ & $7$ & $[RX]$ & $[RZ,RY,RZ]$ & \texttt{offset-CZ} & \texttt{tanh} \\
& \texttt{e90620d5} & $-13.0$ & $1380$ & $48$ & $4$ & $32$ & $-100$ & $100$ & $4$ & $1$ & $5$ & $[RX]$ & $[RZ,RY,RZ]$ & \texttt{circ-CZ} & \texttt{tanh} \\
& \texttt{f4e1e3f2} & $-56.0$ & $1380$ & $48$ & $4$ & $32$ & $-100$ & $100$ & $4$ & $2$ & $5$ & $[RX]$ & $[RZ,RY,RZ]$ & \texttt{offset-CZ} & \texttt{tanh} \\
\cmidrule(lr){1-16}
\multirow{24}{*}{\texttt{FrozenLake}} 
& \texttt{09fc2cfb} & $0.0990$ & $580$ & $16$ & $4$ & $32$ & $-2$ & $2$ & $4$ & $2$ & $3$ & $[RX]$ & $[RZ,RY,RZ]$ & \texttt{circ-CZ} & \texttt{tanh} \\
& \texttt{0beeeeea} & $0.1623$ & $900$ & $16$ & $4$ & $32$ & $-2$ & $2$ & $1$ & $2$ & $7$ & $[RX]$ & $[RZ,RY,RZ]$ & \texttt{offset-CZ} & \texttt{tanh} \\
& \texttt{1a6e23d3} & $0.0480$ & $580$ & $16$ & $4$ & $32$ & $-2$ & $2$ & $1$ & $1$ & $3$ & $[RX]$ & $[RZ,RY,RZ]$ & \texttt{offset-CZ} & \texttt{tanh} \\
& \texttt{2a30337c} & $0.0990$ & $580$ & $16$ & $4$ & $32$ & $-2$ & $2$ & $1$ & $2$ & $3$ & $[RX]$ & $[RZ,RY,RZ]$ & \texttt{circ-CZ} & \texttt{tanh} \\
& \texttt{2cae8b90} & $0.1815$ & $900$ & $16$ & $4$ & $32$ & $-2$ & $2$ & $4$ & $1$ & $7$ & $[RX]$ & $[RZ,RY,RZ]$ & \texttt{circ-CZ} & \texttt{tanh} \\
& \texttt{35d94c95} & $0.1937$ & $740$ & $16$ & $4$ & $32$ & $-2$ & $2$ & $4$ & $2$ & $5$ & $[RX]$ & $[RZ,RY,RZ]$ & \texttt{offset-CZ} & \texttt{tanh} \\
& \texttt{3cee6e98} & $0.2500$ & $740$ & $16$ & $4$ & $32$ & $-2$ & $2$ & $1$ & $2$ & $5$ & $[RX]$ & $[RZ,RY,RZ]$ & \texttt{circ-CZ} & \texttt{tanh} \\
& \texttt{4270289a} & $0.2120$ & $740$ & $16$ & $4$ & $32$ & $-2$ & $2$ & $4$ & $1$ & $5$ & $[RX]$ & $[RZ,RY,RZ]$ & \texttt{offset-CZ} & \texttt{tanh} \\
& \texttt{485c0bf8} & $0.1622$ & $900$ & $16$ & $4$ & $32$ & $-2$ & $2$ & $4$ & $2$ & $7$ & $[RX]$ & $[RZ,RY,RZ]$ & \texttt{offset-CZ} & \texttt{tanh} \\
& \texttt{5064f085} & $0.2120$ & $740$ & $16$ & $4$ & $32$ & $-2$ & $2$ & $1$ & $1$ & $5$ & $[RX]$ & $[RZ,RY,RZ]$ & \texttt{offset-CZ} & \texttt{tanh} \\
& \texttt{7425f8af} & $0.1055$ & $900$ & $16$ & $4$ & $32$ & $-2$ & $2$ & $1$ & $2$ & $7$ & $[RX]$ & $[RZ,RY,RZ]$ & \texttt{circ-CZ} & \texttt{tanh} \\
& \texttt{82786234} & $-0.0078$ & $580$ & $16$ & $4$ & $32$ & $-2$ & $2$ & $4$ & $1$ & $3$ & $[RX]$ & $[RZ,RY,RZ]$ & \texttt{circ-CZ} & \texttt{tanh} \\
& \texttt{86baa0b5} & $0.1055$ & $900$ & $16$ & $4$ & $32$ & $-2$ & $2$ & $4$ & $2$ & $7$ & $[RX]$ & $[RZ,RY,RZ]$ & \texttt{circ-CZ} & \texttt{tanh} \\
& \texttt{959b1d9b} & $0.1427$ & $740$ & $16$ & $4$ & $32$ & $-2$ & $2$ & $4$ & $1$ & $5$ & $[RX]$ & $[RZ,RY,RZ]$ & \texttt{circ-CZ} & \texttt{tanh} \\
& \texttt{9c6ce5eb} & $-0.0078$ & $580$ & $16$ & $4$ & $32$ & $-2$ & $2$ & $1$ & $1$ & $3$ & $[RX]$ & $[RZ,RY,RZ]$ & \texttt{circ-CZ} & \texttt{tanh} \\
& \texttt{a1c4fa8c} & $0.1443$ & $900$ & $16$ & $4$ & $32$ & $-2$ & $2$ & $1$ & $1$ & $7$ & $[RX]$ & $[RZ,RY,RZ]$ & \texttt{offset-CZ} & \texttt{tanh} \\
& \textbf{\texttt{a275ee45}} & $0.2775$ & $580$ & $16$ & $4$ & $32$ & $-2$ & $2$ & $1$ & $2$ & $3$ & $[RX]$ & $[RZ,RY,RZ]$ & \texttt{offset-CZ} & \texttt{tanh} \\
& \texttt{af98683f} & $0.0480$ & $580$ & $16$ & $4$ & $32$ & $-2$ & $2$ & $4$ & $1$ & $3$ & $[RX]$ & $[RZ,RY,RZ]$ & \texttt{offset-CZ} & \texttt{tanh} \\
& \texttt{b4a0f93c} & $0.1443$ & $900$ & $16$ & $4$ & $32$ & $-2$ & $2$ & $4$ & $1$ & $7$ & $[RX]$ & $[RZ,RY,RZ]$ & \texttt{offset-CZ} & \texttt{tanh} \\
& \texttt{bd30b45a} & $0.1937$ & $740$ & $16$ & $4$ & $32$ & $-2$ & $2$ & $1$ & $2$ & $5$ & $[RX]$ & $[RZ,RY,RZ]$ & \texttt{offset-CZ} & \texttt{tanh} \\
& \texttt{c0b410e0} & $0.1815$ & $900$ & $16$ & $4$ & $32$ & $-2$ & $2$ & $1$ & $1$ & $7$ & $[RX]$ & $[RZ,RY,RZ]$ & \texttt{circ-CZ} & \texttt{tanh} \\
& \texttt{d1ca5afe} & $0.2500$ & $740$ & $16$ & $4$ & $32$ & $-2$ & $2$ & $4$ & $2$ & $5$ & $[RX]$ & $[RZ,RY,RZ]$ & \texttt{circ-CZ} & \texttt{tanh} \\
& \texttt{e9b33425} & $0.2775$ & $580$ & $16$ & $4$ & $32$ & $-2$ & $2$ & $4$ & $2$ & $3$ & $[RX]$ & $[RZ,RY,RZ]$ & \texttt{offset-CZ} & \texttt{tanh} \\
& \texttt{f7f319ca} & $0.1427$ & $740$ & $16$ & $4$ & $32$ & $-2$ & $2$ & $1$ & $1$ & $5$ & $[RX]$ & $[RZ,RY,RZ]$ & \texttt{circ-CZ} & \texttt{tanh} \\
\cmidrule(lr){1-16}
\multirow{24}{*}{\texttt{Acrobot}} 
& \texttt{01f1b2a8} & $-204.2750$ & $405$ & $6$ & $3$ & $32$ & $-100$ & $100$ & $3$ & $1$ & $5$ & $[RX]$ & $[RZ,RY,RZ]$ & \texttt{circ-CZ} & \texttt{tanh} \\
& \texttt{03b176ff} & $-500$ & $405$ & $6$ & $3$ & $32$ & $-100$ & $100$ & $3$ & $2$ & $5$ & $[RX]$ & $[RZ,RY,RZ]$ & \texttt{offset-CZ} & \texttt{tanh} \\
& \texttt{043b80e2} & $-500$ & $525$ & $6$ & $3$ & $32$ & $-100$ & $100$ & $3$ & $1$ & $7$ & $[RX]$ & $[RZ,RY,RZ]$ & \texttt{circ-CZ} & \texttt{tanh} \\
& \texttt{0ff9fb8e} & $-500$ & $285$ & $6$ & $3$ & $32$ & $-100$ & $100$ & $3$ & $2$ & $3$ & $[RX]$ & $[RZ,RY,RZ]$ & \texttt{offset-CZ} & \texttt{tanh} \\
& \texttt{10d163eb} & $-415.7750$ & $405$ & $6$ & $3$ & $32$ & $-100$ & $100$ & $3$ & $1$ & $5$ & $[RX]$ & $[RZ,RY,RZ]$ & \texttt{offset-CZ} & \texttt{tanh} \\
& \textbf{\texttt{1bacc532}} & $-196.4250$ & $405$ & $6$ & $3$ & $32$ & $-100$ & $100$ & $1$ & $2$ & $5$ & $[RX]$ & $[RZ,RY,RZ]$ & \texttt{circ-CZ} & \texttt{tanh} \\
& \texttt{24689683} & $-419.9000$ & $525$ & $6$ & $3$ & $32$ & $-100$ & $100$ & $1$ & $2$ & $7$ & $[RX]$ & $[RZ,RY,RZ]$ & \texttt{offset-CZ} & \texttt{tanh} \\
& \texttt{25417497} & $-415.7750$ & $405$ & $6$ & $3$ & $32$ & $-100$ & $100$ & $1$ & $1$ & $5$ & $[RX]$ & $[RZ,RY,RZ]$ & \texttt{offset-CZ} & \texttt{tanh} \\
& \texttt{283bce61} & $-428.6750$ & $285$ & $6$ & $3$ & $32$ & $-100$ & $100$ & $3$ & $2$ & $3$ & $[RX]$ & $[RZ,RY,RZ]$ & \texttt{circ-CZ} & \texttt{tanh} \\
& \texttt{43229ab1} & $-500$ & $525$ & $6$ & $3$ & $32$ & $-100$ & $100$ & $1$ & $1$ & $7$ & $[RX]$ & $[RZ,RY,RZ]$ & \texttt{circ-CZ} & \texttt{tanh} \\
& \texttt{4b0b88be} & $-204.2750$ & $405$ & $6$ & $3$ & $32$ & $-100$ & $100$ & $1$ & $1$ & $5$ & $[RX]$ & $[RZ,RY,RZ]$ & \texttt{circ-CZ} & \texttt{tanh} \\
& \texttt{5397ad81} & $-500$ & $285$ & $6$ & $3$ & $32$ & $-100$ & $100$ & $3$ & $1$ & $3$ & $[RX]$ & $[RZ,RY,RZ]$ & \texttt{offset-CZ} & \texttt{tanh} \\
& \texttt{54cda79e} & $-400.0750$ & $285$ & $6$ & $3$ & $32$ & $-100$ & $100$ & $1$ & $1$ & $3$ & $[RX]$ & $[RZ,RY,RZ]$ & \texttt{circ-CZ} & \texttt{tanh} \\
& \texttt{714952d8} & $-410.2500$ & $525$ & $6$ & $3$ & $32$ & $-100$ & $100$ & $3$ & $2$ & $7$ & $[RX]$ & $[RZ,RY,RZ]$ & \texttt{circ-CZ} & \texttt{tanh} \\
& \texttt{8eb3be2d} & $-500$ & $285$ & $6$ & $3$ & $32$ & $-100$ & $100$ & $1$ & $1$ & $3$ & $[RX]$ & $[RZ,RY,RZ]$ & \texttt{offset-CZ} & \texttt{tanh} \\
& \texttt{a37af56a} & $-196.4250$ & $405$ & $6$ & $3$ & $32$ & $-100$ & $100$ & $3$ & $2$ & $5$ & $[RX]$ & $[RZ,RY,RZ]$ & \texttt{circ-CZ} & \texttt{tanh} \\
& \texttt{ad8ff535} & $-500$ & $405$ & $6$ & $3$ & $32$ & $-100$ & $100$ & $1$ & $2$ & $5$ & $[RX]$ & $[RZ,RY,RZ]$ & \texttt{offset-CZ} & \texttt{tanh} \\
& \texttt{bca7063e} & $-428.6750$ & $285$ & $6$ & $3$ & $32$ & $-100$ & $100$ & $1$ & $2$ & $3$ & $[RX]$ & $[RZ,RY,RZ]$ & \texttt{circ-CZ} & \texttt{tanh} \\
& \texttt{c2866eeb} & $-500$ & $525$ & $6$ & $3$ & $32$ & $-100$ & $100$ & $1$ & $1$ & $7$ & $[RX]$ & $[RZ,RY,RZ]$ & \texttt{offset-CZ} & \texttt{tanh} \\
& \texttt{d4d301d4} & $-400.0750$ & $285$ & $6$ & $3$ & $32$ & $-100$ & $100$ & $3$ & $1$ & $3$ & $[RX]$ & $[RZ,RY,RZ]$ & \texttt{circ-CZ} & \texttt{tanh} \\
& \texttt{db609a2a} & $-419.9000$ & $525$ & $6$ & $3$ & $32$ & $-100$ & $100$ & $3$ & $2$ & $7$ & $[RX]$ & $[RZ,RY,RZ]$ & \texttt{offset-CZ} & \texttt{tanh} \\
& \texttt{e01d8771} & $-500$ & $525$ & $6$ & $3$ & $32$ & $-100$ & $100$ & $3$ & $1$ & $7$ & $[RX]$ & $[RZ,RY,RZ]$ & \texttt{offset-CZ} & \texttt{tanh} \\
& \texttt{ed91a81d} & $-500$ & $285$ & $6$ & $3$ & $32$ & $-100$ & $100$ & $1$ & $2$ & $3$ & $[RX]$ & $[RZ,RY,RZ]$ & \texttt{offset-CZ} & \texttt{tanh} \\
& \texttt{fc0dd296} & $-410.2500$ & $525$ & $6$ & $3$ & $32$ & $-100$ & $100$ & $1$ & $2$ & $7$ & $[RX]$ & $[RZ,RY,RZ]$ & \texttt{circ-CZ} & \texttt{tanh} \\
\bottomrule
\end{tabular}
\end{table*}

\begin{table*}[t!]
\centering
\caption[Ablation study for \texttt{QnRL} on the Atari environments.]{Ablation study for \texttt{QnRL} with varying circuit depth $L$, entanglement style $U_{\textrm{ent}}$, and convolutional features across the Atari environments. Compares evaluation performance results, total model parameters, and all hyperparameters. Selected best model configuration is highlighted in \textbf{bold}. Rows are sorted by model configuration identifier for each respective environment.\label{tab:qnrl:ablation:qnrl:Atari}}
\resizebox{\linewidth}{!}{
\begin{tabular}{lllllllllllllllllll}
\toprule
Environment & Model ID & Mean Eval Reward & Parameters & Observation dim. & $\abs{\setaction}$ & $\abs{\setreturn}$ & $z_{\textrm{min}}$ & $z_{\textrm{max}}$ & $n$ & $m$ & $L$ & $U_{\textrm{enc}}$ & $U_{\textrm{var}}$ & $U_{\textrm{ent}}$ & $\phi$ & Conv. Features & Kernels & Strides \\
\midrule
\multirow{24}{*}{\texttt{SpaceInvaders}} 
& \texttt{28cf3201} & $91.5$ & $69494$ & $4\times84\times84$ & $4$ & $32$ & $-10$ & $10$ & $1$ & $1$ & $7$ & $[RX]$ & $[RZ,RY,RZ]$ & \texttt{offset-CZ} & \texttt{tanh} & $[16,32,32]$ & $[8, 4, 3]$ & $[4, 2, 1]$ \\
& \texttt{2a976698} & $88.75$ & $69494$ & $4\times84\times84$ & $4$ & $32$ & $-10$ & $10$ & $1$ & $1$ & $7$ & $[RX]$ & $[RZ,RY,RZ]$ & \texttt{circ-CZ} & \texttt{tanh} & $[16,32,32]$ & $[8, 4, 3]$ & $[4, 2, 1]$ \\
& \texttt{3f1e580c} & $84.75$ & $172694$ & $4\times84\times84$ & $4$ & $32$ & $-10$ & $10$ & $1$ & $1$ & $5$ & $[RX]$ & $[RZ,RY,RZ]$ & \texttt{offset-CZ} & \texttt{tanh} & $[32,64,64]$ & $[8, 4, 3]$ & $[4, 2, 1]$ \\
& \texttt{4031053e} & $49.5$ & $172934$ & $4\times84\times84$ & $4$ & $32$ & $-10$ & $10$ & $1$ & $1$ & $7$ & $[RX]$ & $[RZ,RY,RZ]$ & \texttt{offset-CZ} & \texttt{tanh} & $[32,64,64]$ & $[8, 4, 3]$ & $[4, 2, 1]$ \\
& \texttt{4801d71d} & $99.25$ & $69254$ & $4\times84\times84$ & $4$ & $32$ & $-10$ & $10$ & $1$ & $1$ & $5$ & $[RX]$ & $[RZ,RY,RZ]$ & \texttt{circ-CZ} & \texttt{tanh} & $[16,32,32]$ & $[8, 4, 3]$ & $[4, 2, 1]$ \\
& \texttt{56217398} & $83.25$ & $484262$ & $4\times84\times84$ & $4$ & $32$ & $-10$ & $10$ & $1$ & $1$ & $7$ & $[RX]$ & $[RZ,RY,RZ]$ & \texttt{circ-CZ} & \texttt{tanh} & $[64,128,128]$ & $[8, 4, 3]$ & $[4, 2, 1]$ \\
& \texttt{5a7a5c51} & $97.25$ & $484022$ & $4\times84\times84$ & $4$ & $32$ & $-10$ & $10$ & $1$ & $1$ & $5$ & $[RX]$ & $[RZ,RY,RZ]$ & \texttt{circ-CZ} & \texttt{tanh} & $[64,128,128]$ & $[8, 4, 3]$ & $[4, 2, 1]$ \\
& \texttt{70aae234} & $73.5$ & $30830$ & $4\times84\times84$ & $4$ & $32$ & $-10$ & $10$ & $1$ & $1$ & $7$ & $[RX]$ & $[RZ,RY,RZ]$ & \texttt{offset-CZ} & \texttt{tanh} & $[8,16,16]$ & $[8, 4, 3]$ & $[4, 2, 1]$ \\
& \texttt{7fbdd5b1} & $53.5$ & $172934$ & $4\times84\times84$ & $4$ & $32$ & $-10$ & $10$ & $1$ & $1$ & $7$ & $[RX]$ & $[RZ,RY,RZ]$ & \texttt{circ-CZ} & \texttt{tanh} & $[32,64,64]$ & $[8, 4, 3]$ & $[4, 2, 1]$ \\
& \textbf{\texttt{8066810a}} & $154.5$ & $484262$ & $4\times84\times84$ & $4$ & $32$ & $-10$ & $10$ & $1$ & $1$ & $7$ & $[RX]$ & $[RZ,RY,RZ]$ & \texttt{offset-CZ} & \texttt{tanh} & $[64,128,128]$ & $[8, 4, 3]$ & $[4, 2, 1]$ \\
& \texttt{851875f5} & $85$ & $69014$ & $4\times84\times84$ & $4$ & $32$ & $-10$ & $10$ & $1$ & $1$ & $3$ & $[RX]$ & $[RZ,RY,RZ]$ & \texttt{offset-CZ} & \texttt{tanh} & $[16,32,32]$ & $[8, 4, 3]$ & $[4, 2, 1]$ \\
& \texttt{858f9b36} & $130.5$ & $172454$ & $4\times84\times84$ & $4$ & $32$ & $-10$ & $10$ & $1$ & $1$ & $3$ & $[RX]$ & $[RZ,RY,RZ]$ & \texttt{circ-CZ} & \texttt{tanh} & $[32,64,64]$ & $[8, 4, 3]$ & $[4, 2, 1]$ \\
& \texttt{89d9fdcc} & $152.5$ & $30350$ & $4\times84\times84$ & $4$ & $32$ & $-10$ & $10$ & $1$ & $1$ & $3$ & $[RX]$ & $[RZ,RY,RZ]$ & \texttt{circ-CZ} & \texttt{tanh} & $[8,16,16]$ & $[8, 4, 3]$ & $[4, 2, 1]$ \\
& \texttt{8eece259} & $40.75$ & $69014$ & $4\times84\times84$ & $4$ & $32$ & $-10$ & $10$ & $1$ & $1$ & $3$ & $[RX]$ & $[RZ,RY,RZ]$ & \texttt{circ-CZ} & \texttt{tanh} & $[16,32,32]$ & $[8, 4, 3]$ & $[4, 2, 1]$ \\
& \texttt{96ccd925} & $53.75$ & $483782$ & $4\times84\times84$ & $4$ & $32$ & $-10$ & $10$ & $1$ & $1$ & $3$ & $[RX]$ & $[RZ,RY,RZ]$ & \texttt{offset-CZ} & \texttt{tanh} & $[64,128,128]$ & $[8, 4, 3]$ & $[4, 2, 1]$ \\
& \texttt{a5fc5354} & $91$ & $172694$ & $4\times84\times84$ & $4$ & $32$ & $-10$ & $10$ & $1$ & $1$ & $5$ & $[RX]$ & $[RZ,RY,RZ]$ & \texttt{circ-CZ} & \texttt{tanh} & $[32,64,64]$ & $[8, 4, 3]$ & $[4, 2, 1]$ \\
& \texttt{a79207f7} & $92$ & $484022$ & $4\times84\times84$ & $4$ & $32$ & $-10$ & $10$ & $1$ & $1$ & $5$ & $[RX]$ & $[RZ,RY,RZ]$ & \texttt{offset-CZ} & \texttt{tanh} & $[64,128,128]$ & $[8, 4, 3]$ & $[4, 2, 1]$ \\
& \texttt{a93af179} & $98.5$ & $483782$ & $4\times84\times84$ & $4$ & $32$ & $-10$ & $10$ & $1$ & $1$ & $3$ & $[RX]$ & $[RZ,RY,RZ]$ & \texttt{circ-CZ} & \texttt{tanh} & $[64,128,128]$ & $[8, 4, 3]$ & $[4, 2, 1]$ \\
& \texttt{bdeace87} & $135.25$ & $69254$ & $4\times84\times84$ & $4$ & $32$ & $-10$ & $10$ & $1$ & $1$ & $5$ & $[RX]$ & $[RZ,RY,RZ]$ & \texttt{offset-CZ} & \texttt{tanh} & $[16,32,32]$ & $[8, 4, 3]$ & $[4, 2, 1]$ \\
& \texttt{be4b3f0b} & $60$ & $30350$ & $4\times84\times84$ & $4$ & $32$ & $-10$ & $10$ & $1$ & $1$ & $3$ & $[RX]$ & $[RZ,RY,RZ]$ & \texttt{offset-CZ} & \texttt{tanh} & $[8,16,16]$ & $[8, 4, 3]$ & $[4, 2, 1]$ \\
& \texttt{c09dc459} & $47.5$ & $30590$ & $4\times84\times84$ & $4$ & $32$ & $-10$ & $10$ & $1$ & $1$ & $5$ & $[RX]$ & $[RZ,RY,RZ]$ & \texttt{offset-CZ} & \texttt{tanh} & $[8,16,16]$ & $[8, 4, 3]$ & $[4, 2, 1]$ \\
& \texttt{c2786360} & $80.25$ & $30830$ & $4\times84\times84$ & $4$ & $32$ & $-10$ & $10$ & $1$ & $1$ & $7$ & $[RX]$ & $[RZ,RY,RZ]$ & \texttt{circ-CZ} & \texttt{tanh} & $[8,16,16]$ & $[8, 4, 3]$ & $[4, 2, 1]$ \\
& \texttt{e4c22ded} & $89$ & $172454$ & $4\times84\times84$ & $4$ & $32$ & $-10$ & $10$ & $1$ & $1$ & $3$ & $[RX]$ & $[RZ,RY,RZ]$ & \texttt{offset-CZ} & \texttt{tanh} & $[32,64,64]$ & $[8, 4, 3]$ & $[4, 2, 1]$ \\
& \texttt{f86255b5} & $39.5$ & $30590$ & $4\times84\times84$ & $4$ & $32$ & $-10$ & $10$ & $1$ & $1$ & $5$ & $[RX]$ & $[RZ,RY,RZ]$ & \texttt{circ-CZ} & \texttt{tanh} & $[8,16,16]$ & $[8, 4, 3]$ & $[4, 2, 1]$ \\
\cmidrule(lr){1-19}
\multirow{24}{*}{\texttt{Breakout}} 
& \texttt{05815f17} & $1.1$ & $22700$ & $4\times84\times84$ & $4$ & $32$ & $-10$ & $10$ & $1$ & $1$ & $7$ & $[RX]$ & $[RZ,RY,RZ]$ & \texttt{circ-CZ} & \texttt{tanh} & $[8,16,16]$ & $[8, 4, 3]$ & $[4, 2, 1]$ \\
& \texttt{069ad0e5} & $1.95$ & $53204$ & $4\times84\times84$ & $4$ & $32$ & $-10$ & $10$ & $1$ & $1$ & $3$ & $[RX]$ & $[RZ,RY,RZ]$ & \texttt{circ-CZ} & \texttt{tanh} & $[16,32,32]$ & $[8, 4, 3]$ & $[4, 2, 1]$ \\
& \texttt{1a8fc558} & $2.75$ & $141284$ & $4\times84\times84$ & $4$ & $32$ & $-10$ & $10$ & $1$ & $1$ & $7$ & $[RX]$ & $[RZ,RY,RZ]$ & \texttt{circ-CZ} & \texttt{tanh} & $[32,64,64]$ & $[8, 4, 3]$ & $[4, 2, 1]$ \\
& \texttt{2153e7e5} & $1.3$ & $22380$ & $4\times84\times84$ & $4$ & $32$ & $-10$ & $10$ & $1$ & $1$ & $3$ & $[RX]$ & $[RZ,RY,RZ]$ & \texttt{offset-CZ} & \texttt{tanh} & $[8,16,16]$ & $[8, 4, 3]$ & $[4, 2, 1]$ \\
& \texttt{21892ef5} & $0.4$ & $22380$ & $4\times84\times84$ & $4$ & $32$ & $-10$ & $10$ & $1$ & $1$ & $3$ & $[RX]$ & $[RZ,RY,RZ]$ & \texttt{circ-CZ} & \texttt{tanh} & $[8,16,16]$ & $[8, 4, 3]$ & $[4, 2, 1]$ \\
& \texttt{48b698e2} & $3.15$ & $421092$ & $4\times84\times84$ & $4$ & $32$ & $-10$ & $10$ & $1$ & $1$ & $5$ & $[RX]$ & $[RZ,RY,RZ]$ & \texttt{circ-CZ} & \texttt{tanh} & $[64,128,128]$ & $[8, 4, 3]$ & $[4, 2, 1]$ \\
& \texttt{49fee01c} & $1.85$ & $53204$ & $4\times84\times84$ & $4$ & $32$ & $-10$ & $10$ & $1$ & $1$ & $3$ & $[RX]$ & $[RZ,RY,RZ]$ & \texttt{offset-CZ} & \texttt{tanh} & $[16,32,32]$ & $[8, 4, 3]$ & $[4, 2, 1]$ \\
& \textbf{\texttt{4d2222f7}} & $4.9$ & $421092$ & $4\times84\times84$ & $4$ & $32$ & $-10$ & $10$ & $1$ & $1$ & $5$ & $[RX]$ & $[RZ,RY,RZ]$ & \texttt{offset-CZ} & \texttt{tanh} & $[64,128,128]$ & $[8, 4, 3]$ & $[4, 2, 1]$ \\
& \texttt{4dbdc680} & $0.85$ & $53364$ & $4\times84\times84$ & $4$ & $32$ & $-10$ & $10$ & $1$ & $1$ & $5$ & $[RX]$ & $[RZ,RY,RZ]$ & \texttt{circ-CZ} & \texttt{tanh} & $[16,32,32]$ & $[8, 4, 3]$ & $[4, 2, 1]$ \\
& \texttt{55bfe7f3} & $2.4$ & $53364$ & $4\times84\times84$ & $4$ & $32$ & $-10$ & $10$ & $1$ & $1$ & $5$ & $[RX]$ & $[RZ,RY,RZ]$ & \texttt{offset-CZ} & \texttt{tanh} & $[16,32,32]$ & $[8, 4, 3]$ & $[4, 2, 1]$ \\
& \texttt{631a6432} & $1.8$ & $140964$ & $4\times84\times84$ & $4$ & $32$ & $-10$ & $10$ & $1$ & $1$ & $3$ & $[RX]$ & $[RZ,RY,RZ]$ & \texttt{offset-CZ} & \texttt{tanh} & $[32,64,64]$ & $[8, 4, 3]$ & $[4, 2, 1]$ \\
& \texttt{67b28b7e} & $3$ & $53524$ & $4\times84\times84$ & $4$ & $32$ & $-10$ & $10$ & $1$ & $1$ & $7$ & $[RX]$ & $[RZ,RY,RZ]$ & \texttt{offset-CZ} & \texttt{tanh} & $[16,32,32]$ & $[8, 4, 3]$ & $[4, 2, 1]$ \\
& \texttt{6a003579} & $2.4$ & $141124$ & $4\times84\times84$ & $4$ & $32$ & $-10$ & $10$ & $1$ & $1$ & $5$ & $[RX]$ & $[RZ,RY,RZ]$ & \texttt{circ-CZ} & \texttt{tanh} & $[32,64,64]$ & $[8, 4, 3]$ & $[4, 2, 1]$ \\
& \texttt{6e970d6b} & $3.25$ & $421252$ & $4\times84\times84$ & $4$ & $32$ & $-10$ & $10$ & $1$ & $1$ & $7$ & $[RX]$ & $[RZ,RY,RZ]$ & \texttt{circ-CZ} & \texttt{tanh} & $[64,128,128]$ & $[8, 4, 3]$ & $[4, 2, 1]$ \\
& \texttt{76e665b4} & $1.4$ & $22540$ & $4\times84\times84$ & $4$ & $32$ & $-10$ & $10$ & $1$ & $1$ & $5$ & $[RX]$ & $[RZ,RY,RZ]$ & \texttt{offset-CZ} & \texttt{tanh} & $[8,16,16]$ & $[8, 4, 3]$ & $[4, 2, 1]$ \\
& \texttt{79548fdc} & $1.4$ & $22540$ & $4\times84\times84$ & $4$ & $32$ & $-10$ & $10$ & $1$ & $1$ & $5$ & $[RX]$ & $[RZ,RY,RZ]$ & \texttt{circ-CZ} & \texttt{tanh} & $[8,16,16]$ & $[8, 4, 3]$ & $[4, 2, 1]$ \\
& \texttt{82230a78} & $1.90$ & $140964$ & $4\times84\times84$ & $4$ & $32$ & $-10$ & $10$ & $1$ & $1$ & $3$ & $[RX]$ & $[RZ,RY,RZ]$ & \texttt{circ-CZ} & \texttt{tanh} & $[32,64,64]$ & $[8, 4, 3]$ & $[4, 2, 1]$ \\
& \texttt{932345d1} & $3.45$ & $141284$ & $4\times84\times84$ & $4$ & $32$ & $-10$ & $10$ & $1$ & $1$ & $7$ & $[RX]$ & $[RZ,RY,RZ]$ & \texttt{offset-CZ} & \texttt{tanh} & $[32,64,64]$ & $[8, 4, 3]$ & $[4, 2, 1]$ \\
& \texttt{95f17e2b} & $1.6$ & $53524$ & $4\times84\times84$ & $4$ & $32$ & $-10$ & $10$ & $1$ & $1$ & $7$ & $[RX]$ & $[RZ,RY,RZ]$ & \texttt{circ-CZ} & \texttt{tanh} & $[16,32,32]$ & $[8, 4, 3]$ & $[4, 2, 1]$ \\
& \texttt{bbc2fbfe} & $2.1$ & $420932$ & $4\times84\times84$ & $4$ & $32$ & $-10$ & $10$ & $1$ & $1$ & $3$ & $[RX]$ & $[RZ,RY,RZ]$ & \texttt{circ-CZ} & \texttt{tanh} & $[64,128,128]$ & $[8, 4, 3]$ & $[4, 2, 1]$ \\
& \texttt{d1528e08} & $1.05$ & $22700$ & $4\times84\times84$ & $4$ & $32$ & $-10$ & $10$ & $1$ & $1$ & $7$ & $[RX]$ & $[RZ,RY,RZ]$ & \texttt{offset-CZ} & \texttt{tanh} & $[8,16,16]$ & $[8, 4, 3]$ & $[4, 2, 1]$ \\
& \texttt{e10a9b8e} & $2$ & $141124$ & $4\times84\times84$ & $4$ & $32$ & $-10$ & $10$ & $1$ & $1$ & $5$ & $[RX]$ & $[RZ,RY,RZ]$ & \texttt{offset-CZ} & \texttt{tanh} & $[32,64,64]$ & $[8, 4, 3]$ & $[4, 2, 1]$ \\
& \texttt{e1ce14b5} & $2.6$ & $421252$ & $4\times84\times84$ & $4$ & $32$ & $-10$ & $10$ & $1$ & $1$ & $7$ & $[RX]$ & $[RZ,RY,RZ]$ & \texttt{offset-CZ} & \texttt{tanh} & $[64,128,128]$ & $[8, 4, 3]$ & $[4, 2, 1]$ \\
& \texttt{f779270b} & $3.65$ & $420932$ & $4\times84\times84$ & $4$ & $32$ & $-10$ & $10$ & $1$ & $1$ & $3$ & $[RX]$ & $[RZ,RY,RZ]$ & \texttt{offset-CZ} & \texttt{tanh} & $[64,128,128]$ & $[8, 4, 3]$ & $[4, 2, 1]$ \\
\bottomrule
\end{tabular}
}
\end{table*}

The last set of experiments are an ablation study across all the aforementioned environments, from which our ``best'' models are derived. The empirical results for our ablation study are shown in \cref{fig:qnrl:exp:ablation} and \cref{tab:qnrl:ablation:c51:Gym,tab:qnrl:ablation:c51:Atari,tab:qnrl:ablation:qnrl:Gym,tab:qnrl:ablation:qnrl:Atari}.
Looking at the evaluation performance for \texttt{C51} in \cref{tab:qnrl:ablation:c51:Gym,tab:qnrl:ablation:c51:Atari} and \cref{fig:qnrl:exp:ablation:c51:CartPole:eval-episode-reward,fig:qnrl:exp:ablation:c51:CliffWalking:eval-episode-reward,fig:qnrl:exp:ablation:c51:FrozenLake:eval-episode-reward,fig:qnrl:exp:ablation:c51:Acrobot:eval-episode-reward,fig:qnrl:exp:ablation:c51:SpaceInvaders:eval-episode-reward,fig:qnrl:exp:ablation:c51:Breakout:eval-episode-reward} and comparing it with the total model size in \cref{fig:qnrl:exp:ablation:c51:CartPole:total-model-parameter-count,fig:qnrl:exp:ablation:c51:CliffWalking:total-model-parameter-count,fig:qnrl:exp:ablation:c51:FrozenLake:total-model-parameter-count,fig:qnrl:exp:ablation:c51:Acrobot:total-model-parameter-count,fig:qnrl:exp:ablation:c51:SpaceInvaders:total-model-parameter-count,fig:qnrl:exp:ablation:c51:Breakout:total-model-parameter-count} we can see that our selection of the best models is fair because of either the significant reduction in evaluation score for reducing the hidden layer dimensions, or the dramatic increase in total parameter count by increasing the hidden layer dimensions. In particular, we see that in \texttt{CartPole} while the \texttt{C51} baseline can indeed perform slightly better than our \texttt{QnRL} by increasing the number of hidden dimensions, this results in over triple the model size and as such is not a fair comparison.
Looking at the evaluation performance for \texttt{QnRL} in \cref{tab:qnrl:ablation:qnrl:Gym,tab:qnrl:ablation:qnrl:Atari} and \cref{fig:qnrl:exp:ablation:qnrl:CartPole:eval-episode-reward,fig:qnrl:exp:ablation:qnrl:CliffWalking:eval-episode-reward,fig:qnrl:exp:ablation:qnrl:FrozenLake:eval-episode-reward,fig:qnrl:exp:ablation:qnrl:Acrobot:eval-episode-reward,fig:qnrl:exp:ablation:qnrl:SpaceInvaders:eval-episode-reward,fig:qnrl:exp:ablation:qnrl:Breakout:eval-episode-reward} and comparing it with the total model size in \cref{fig:qnrl:exp:ablation:qnrl:CartPole:total-model-parameter-count,fig:qnrl:exp:ablation:qnrl:CliffWalking:total-model-parameter-count,fig:qnrl:exp:ablation:qnrl:FrozenLake:total-model-parameter-count,fig:qnrl:exp:ablation:qnrl:Acrobot:total-model-parameter-count,fig:qnrl:exp:ablation:qnrl:SpaceInvaders:total-model-parameter-count,fig:qnrl:exp:ablation:qnrl:Breakout:total-model-parameter-count} we also see that our selection for the best models based on the combination of circuit layer depth $L$, expectation power $n$, moment index $m$, and entanglement style $U_{\textrm{ent}}$ is fair, resulting in the highest possible evaluation performance with the lowest possible model size and quantum circuit complexity.
\section{Conclusion}\label{sec:qnrl:conclusion}

In this paper, we have proposed \ac{qnrl}, a novel quantum framework leveraging the fundamental properties of quantum superposition and quantum entanglement for modeling and manipulating stochastic distributions natively on quantum hardware. We have also proposed \ac{quak}, a novel algorithm for simultaneously computing and comparing the $n$-th power of the $m$-th moments of multiple superimposed complex distributions entirely in quantum Hilbert space using nonlinear trace non-increasing Kraus operators. 
We have proven that through quantum superposition and quantum entanglement \ac{quak} parallelizes both the computation and comparison of distributional moments more efficiently, with fewer computational resources, and without sampling the distribution into classical space.
We have demonstrated that \ac{qnrl} can be used in \ac{drl} settings to learn quantum distributions of multiple random returns conditioned on given input observations, and that \ac{quak} can be used to form a purely-quantum policy from these quantum return distributions, thus distilling the action selection process down to measurement sampling of the quantum circuit.
Simply put, in \ac{qnrl} we learn conditional return distributions as quantum states, and we manipulate them within their Hilbert space to form an action distribution in situ. In other words, there are no intermediary steps to derive the policy, because the quantum return distribution is the policy.
We have shown that these quantum policies derived in situ from the quantum return distributions achieve up to $82.9\%$ higher evaluation scores, and with up to $94.3\%$ fewer parameters on average compared to the classical baseline. Further, we have demonstrated that the learned quantum distribution more accurately estimates the expected return for unseen observations and under varying stochastic conditions.
Lastly, we have shown that our \ac{qnrl} learns a mapping between input observations and their corresponding return distributions across the full range of the quantum Hilbert space. This realizes complex relationships between the observation and returns as amplitude and phase components of the learned distribution, which are not captured by purely classical nor classically-sampled quantum distributional approaches.

\appendices

\zcsetup{countertype={section=appendix}} 

\renewcommand\thefigure{\thesection.\arabic{figure}} 
\renewcommand\thetable{\thesection.\arabic{table}} 
\renewcommand\theequation{\thesection.\arabic{equation}} 
\renewcommand\thealgocf{\thesection.\arabic{algocf}}
\renewcommand\thesubsection{\thesection.\Alph{subsection}} 

\section{Amplitude kickback proof}\label{app:qnrl:proof:quak:moment}
\setcounter{figure}{0}
\setcounter{table}{0}
\setcounter{equation}{0}
\setcounter{algocf}{0}

\begin{proof}[Combined proof of \cref{thm:qnrl:quak:moment,thm:qnrl:quak:policy}]
A comparison of the $n$-th power of the $m$-th moments of the return distribution can be done entirely in quantum space by first applying a cascade of return-atom preparations conditioned on the action space, followed by a reset over the return-atoms space, producing a weighted sum of $m$-th moment expectations over the return-atoms raised to the power of the number of cascades $n$.
Here we note that it will be convenient to use the density matrix form for representing and manipulating quantum states, as will become evident during the proof.

We begin with an initial state of all zeros across the two combined spaces $\hilbert_{\setaction} \otimes \hilbert_{\setreturn}$ in density matrix form as follows:
\begin{equation}
\rho_{{\textrm{in}}} = \ket{0^{\otimes q_{\mathcal{A}} + q_{\mathcal{Z}}}}\bra{0^{\otimes q_{\mathcal{A}} + q_{\mathcal{Z}}}}.
\end{equation}
We then prepare a quantum state representing the initial distribution of actions according to \cref{eq:qnrl:U_a_x}:
\begin{align}
& \paren{U_{a \given x} \otimes \mathbb{I}^{\otimes q_{\setreturn}}} \rho_{{\textrm{in}}} \paren{U_{a \given x} \otimes \mathbb{I}^{\otimes q_{\setreturn}}}^{\dagger} \nonumber
\\ & = \sum_{a_i, a_j \in \setaction} \sqrt{p(a_i \given x) p(a_j \given x)} \ket{a_i, 0^{\otimes q_{\mathcal{Z}}}} \bra{a_j, 0^{\otimes q_{\mathcal{Z}}}} 
\\ & = \rho_{A \given x}, \label{eq:qnrl:proof:quak:rho_A_x}
\end{align}
which we refer to as the density matrix $\rho_{A \given x}$.
Next, we prepare a quantum state representing the conditional probability distribution of return-atoms for every action according to \cref{eq:qnrl:CU_z_x_a}:
\begin{align}
& CU^{\pi}_{z \given x, a} \rho_{A \given x} \paren{CU^{\pi}_{z \given x, a}}^{\dagger} \nonumber
\\ & = \sum_{\substack{a_i, a_k \in \mathcal{A} \\ z_j, z_l \in \mathcal{Z}}} \sqrt{g_{x}(a_i, z_j) g_{x}(a_k, z_l)} \ket{a_i, z_j} \bra{a_k, z_l}
\\ & = \rho_{Z^{\pi} \given x,A}, \label{eq:qnrl:proof:quak:rho_z_x_a}
\end{align}
which we refer to as the density matrix $\rho_{Z^{\pi} \given x,A}$, where $g_{x}(a, z) = p(a \given x) p(z \given x, a)$ for brevity.

Now, in order to evaluate the expectation of return atoms we need to represent the \emph{values} of the return atom distribution $\set{z}_{z \in \mathcal{Z}}$ in quantum space. In typical \ac{rl} settings, the \emph{raw values} of the returns are unbounded. Due to this explosive upper bound, the raw return values cannot be linearly mapped to a quantum operation.
In our setting, however, we are interested in the \emph{comparison} between expectation of returns amongst the action set. Because the set of return atoms is both discrete and explicitly defined in our setting, the range of the return-atom values is known a priori. In light of this, we can apply a normalization to the raw return-atom values to preserve their relative magnitudes, which allows us to represent them in quantum space as a bounded quantum operator.
Here is where we introduce our proposed \cref{eq:qnrl:U_v_z}, which defines a completely positive, Hermitian, and trace non-increasing operator $U_{v \given z}$ over the return-atoms Hilbert space $\hilbert_{\mathcal{Z}}$ that encodes the return-atom values. Recall that although $U_{v \given z}$ is not unitary, the state normalization will be resolved by applying the reset operation \cref{eq:qnrl:reset} on the return-atom space, as we will see shortly.

Applying $m$ repetitions of $U_{v \given z}$ \cref{eq:qnrl:U_v_z} on \cref{eq:qnrl:proof:quak:rho_z_x_a} produces the density matrix
\begin{align}
& \prod_{m} \bigl[ \mathbb{I}^{\otimes q_{\setaction}} \otimes U_{v \given z} \bigr]  \rho_{Z^{\pi} \given x,A} {\prod_{m} \bigl[ \mathbb{I}^{\otimes q_{\setaction}} \otimes U_{v \given z} \bigr]}^{\dagger} \nonumber
\\ & = \sum_{\substack{a_i, a_k \in \mathcal{A} \\ z_j, z_l \in \mathcal{Z}}} \sqrt{g_{x}(a_i, z_j) v_{z_j}^{m} g_{x}(a_k, z_l) v_{z_l}^{m}} \ket{a_i, z_j} \bra{a_k, z_l},
\\ & = \rho_{m,1}, \label{eq:qnrl:proof:quak:rho_m}
\end{align}
which we will refer to as $\rho_{m,1}$, where the sequence of operations 
\begin{equation}\label{eq:qnrl:proof:quak:Q}
    Q = \prod_{m} \bparen*{\mathbb{I}^{\otimes q_{\setaction}} \otimes U_{v \given z}} CU^{\pi}_{z \given x, a}
\end{equation}
have been applied only once on the initial action distribution state $\rho_{A \given x}$. We will use this short-hand form for brevity in the following derivations.

Now, to correct the normalization on the return-atoms space, we apply a reset operation according to \cref{eq:qnrl:reset} over the space of the return atoms $\hilbert_{\mathcal{Z}}$ to collapse their qubits.
Importantly, in this setting we do not care about the measurement result, only the collapse is desired.
So, applying this to our \ac{drl} setting, we can resume from $\rho_{m,1}$ \cref{eq:qnrl:proof:quak:rho_m} and reset the return-atom qubits in the following manner. We define the set of projectors $\set{\Pi_{z} = \mathbb{I}^{\otimes q_{\setaction}} \otimes \ket{z}\bra{z}}_{z\in \mathcal{Z}}$, which covers all possible return outcomes while doing nothing to the action space, and then apply these projectors as a \emph{partial} measurement on $\rho_{m,1}$ according to \cref{eq:qnrl:measurement} as follows. First, we compute the operator-sum representation (numerator of \cref{eq:qnrl:measurement})
\begin{align}
& \bfunc{\mathcal{E}_{\mathcal{Z}}}{\rho_{m,1}} \nonumber
\\&= \sum_{z \in \setreturn} \Pi_z \rho_{m,1} \Pi^{\dagger}_z
\\&= \sum_{z \in \setreturn} \paren*{\mathbb{I}^{\otimes q_{\setaction}} \otimes \ket{z}\bra{z}} \rho_{m,1} \paren*{\mathbb{I}^{\otimes q_{\setaction}} \otimes \ket{z}\bra{z}}
\\&= \sum_{\substack{a_i,a_k \in \setaction \\ z \in \setreturn}} \sqrt{g_{x}(a_i, z) g_{x}(a_k, z) v_{z}^{2m}} \ket{a_i, z} \bra{a_k, z},
\\& \textrm{(Collapses where} ~ z_j=z_l=z \textrm{)}\nonumber
\end{align}
followed by the normalization factor (denominator of \cref{eq:qnrl:measurement})
\begin{align}
& \bfunc{\textrm{tr}}{\bfunc{\mathcal{E}_{\mathcal{Z}}}{\rho_1}} \nonumber
\\&= \sum_{\substack{a \in \setaction \\ z \in \setreturn}} g_{x}(a,z) v_{z}^{m}
\\& \textrm{(Diagonal sum collapses were} ~ a_i=a_k=a \textrm{)} \nonumber
\\&= \sum_{a \in \setaction} p(a \given x) \sum_{z \in \setreturn} p(z \given x, a) v_{z}^{m}
\\&= \sum_{a \in \setaction} p(a \given x) \expect{(Z^{\pi})^{m} \given x,a},
\\& \textrm{(By definition of the moment of a random variable)} \nonumber
\end{align}
which produces the normalized mixed density matrix
\begin{align}
& \bfunc{M_{\mathcal{Z}}}{\rho_{m,1}} \nonumber
\\&= \frac{\bfunc{\mathcal{E}_{\mathcal{Z}}}{\rho_{m,1}}}{\bfunc{\textrm{tr}}{\bfunc{\mathcal{E}_{\mathcal{Z}}}{\rho_{m,1}}}}
\\&= \Bigl(\frac{1}{\sum_{a \in \setaction} p(a \given x) \underbrace{\expect{(Z^{\pi})^{m} \given x,a}}_{\textrm{$m$-th moment}}}\Bigr) \nonumber
\\&\quad \cdot \sum_{\substack{a_i,a_k \in \setaction \\ z \in \setreturn}} \sqrt{g_{x}(a_i, z) g_{x}(a_k, z) v_{z}^{2m}} \ket{a_i, z} \bra{a_k, z}
\\& = {\rho'}_{m,1}.
\end{align}
Notice that here, after applying $m$ repetitions of \cref{eq:qnrl:U_v_z} and performing a partial measurement on $\hilbert_{\mathcal{Z}}$, is where the $m$-th moment expectation of the random return $Z^{\pi}$ conditioned on the current state and action space $\mathcal{A}$, i.e., $\expect{(Z^{\pi})^{m} \given x,a}$, first appears.
We can now see a direct relationship between the choice of $m$ and the desired moment of the return distribution.
Next, we can find the reduced state of the action space after the partial measurement by taking the partial trace over the return space $\hilbert_{\setreturn}$ as follows:
\begin{align}
& \bfunc{\textrm{tr}_{\mathcal{Z}}}{{\rho'}_{m,1}} \nonumber
\\&= \sum_{z \in \setreturn} (\mathbb{I}^{\otimes q_{\setaction}} \otimes \bra{z}) {\rho'}_{m,1} (\mathbb{I}^{\otimes q_{\setaction}} \otimes \ket{z})
\\&= \Bigl(\frac{1}{\sum_{a \in \setaction} p(a \given x) \expect{(Z^{\pi})^{m} \given x,a}}\Bigr) \nonumber
\\&\quad \cdot \sum_{\substack{a_i,a_k \in \setaction \\ z \in \setreturn}} \sqrt{g_{x}(a_i, z) g_{x}(a_k, z) v_{z}^{2m}} \ket{a_i} \bra{a_k}
\end{align}
The final system state after applying the reset operation from \cref{eq:qnrl:reset} is therefore:
\begin{align}
& \bfunc{\textrm{Reset}_{\mathcal{Z}}}{\rho_{m,1}} \nonumber
\\&= \bfunc{\textrm{tr}_{\mathcal{Z}}}{{\rho'}_{m,1}} \otimes \ket{0^{\otimes q_{\mathcal{Z}}}}\bra{0^{\otimes q_{\mathcal{Z}}}}
\\&= \Bigl(\frac{1}{\sum_{a \in \setaction} p(a \given x) \expect{(Z^{\pi})^{m} \given x,a}}\Bigr) \nonumber
\\&\quad \cdot \sum_{\substack{a_i,a_k \in \setaction \\ z \in \setreturn}} \sqrt{g_{x}(a_i, z) g_{x}(a_k, z) v_{z}^{2m}} \ket{a_i,0^{\otimes q_{\mathcal{Z}}}} \bra{a_k,0^{\otimes q_{\mathcal{Z}}}}
\\&= \tilde{\rho}_{m,1},
\end{align}
which we refer to as $\tilde{\rho}_{m,1}$.
Notice that at this point the return space $\hilbert_{\setreturn}$ has been reduced to the all-zero state $\ket{0^{\otimes q_{\mathcal{Z}}}}\bra{0^{\otimes q_{\mathcal{Z}}}}$, which has a very similar form to $\rho_{A \given x}$ in \cref{eq:qnrl:proof:quak:rho_A_x}. This means after applying the reset the system state is placed back into a form after applying superposition of actions, and is thus receptive to another iteration of sequential return operations $Q$ from \cref{eq:qnrl:proof:quak:Q}.

If at this point we were to stop applying operations and measure the action qubits using projectors $\set{\Pi_{a} = \ket{a}\bra{a}}_{a \in \mathcal{A}}$, ignoring the return space $\hilbert_{\setreturn}$ because it is already zero, we would get the following mixed state in density matrix form
\begin{align}
& \bfunc{\mathcal{E}_{\mathcal{A}}}{\tilde{\rho}_{m,1}} \nonumber
\\&= \sum_{a \in \setaction} \Pi_a \bfunc{\textrm{tr}_{\mathcal{Z}}}{{\rho'}_{m,1}} \Pi^{\dagger}_a
\\&= \sum_{a \in \setaction} \Bigl(\frac{\sum_{z \in \setreturn} g_{x}(a, z) v_{z}^{m}}{\sum_{a' \in \setaction} p(a' \given x) \expect{(Z^{\pi})^{m} \given x,a'}}\Bigr) \ket{a} \bra{a}
\\&~~~~~\textrm{(Trace summations collapse where} ~ a_i = a_k = a \textrm{)}\nonumber
\\&= \sum_{a \in \setaction} \Bigl(\frac{p(a \given x) \sum_{z \in \setreturn} p(z \given x, a) v_{z}^{m}}{\sum_{a' \in \setaction} p(a' \given x) \expect{(Z^{\pi})^{m} \given x,a'}}\Bigr) \ket{a}\bra{a}
\\&= \sum_{a \in \setaction} \Bigl(\frac{p(a \given x) \expect{(Z^{\pi})^{m} \given x,a}}{\sum_{a' \in \setaction} p(a' \given x) \expect{(Z^{\pi})^{m} \given x,a'}}\Bigr) \ket{a}\bra{a},
\\& \textrm{(By definition of the moment of a random variable)} \nonumber
\end{align}
which is diagonal with normalization factor
\begin{align}
\bfunc{\textrm{tr}}{\bfunc{\mathcal{E}_{\mathcal{A}}}{\tilde{\rho}_{m,1}}} 
&=  \frac{\sum_{a \in \setaction} p(a \given x) \expect{(Z^{\pi})^{m} \given x,a}}{\sum_{a' \in \setaction} p(a' \given x) \expect{(Z^{\pi})^{m} \given x,a'}} 
\\& = 1,
\end{align}
producing the normalized state:
\begin{align}
& \bfunc{M_{\mathcal{A}}}{\tilde{\rho}_{m,1}} \nonumber
\\& = \frac{\bfunc{\mathcal{E}_{\mathcal{A}}}{\tilde{\rho}_{m,1}}}{\bfunc{\textrm{tr}}{\bfunc{\mathcal{E}_{\mathcal{A}}}{\tilde{\rho}_{m,1}}}}
\\& = \sum_{a \in \setaction} \Bigl(\frac{p(a \given x) \expect{(Z^{\pi})^{m} \given x,a}}{\sum_{a' \in \setaction} p(a' \given x) \expect{(Z^{\pi})^{m} \given x,a'}}\Bigr) \ket{a}\bra{a}.
\end{align}
This means the probability of measuring an arbitrary action $a_i$ according to \cref{def:qnrl:measurement} is
\begin{equation}\label{eq:qnrl:proof:quak:P_a_i_n1}
\bfunc{P}{\ket{a_i}\bra{a_i}} 
= \frac{p(a_i \given x) \overbrace{\expect{(Z^{\pi})^{m} \given x,a_i}}^{\textrm{exponent $n=1$}}}{\sum_{a' \in \setaction} p(a' \given x) \expect{(Z^{\pi})^{m} \given x,a'}}
\end{equation}
which is the \emph{normalized weighted $m$-th moment expectation of return-atoms for the given action $a_i$ relative to all other actions $\mathcal{A} \setminus \{a_i\}$}. 
Hence, we have shown that we can generate a normalized distribution over the $m$-th moments of the return-atom distribution by selecting the number of sequential applications of \cref{eq:qnrl:U_v_z} in the quantum system, as denoted by complete operator $\textrm{Reset}_{\setreturn} Q$, where $Q$ is from \cref{eq:qnrl:proof:quak:Q}.
We also call out the exponent of the expectation in \cref{eq:qnrl:proof:quak:P_a_i_n1}, which in this case is $n=1$. There is a relationship between the number of times $\textrm{Reset}_{\setreturn} Q$ has been applied on the input state and the exponent of the expectation, which will become increasingly clear in the following derivation.


If we do not measure the action space, and instead apply another iteration of cascaded return operations \cref{eq:qnrl:proof:quak:Q}, we get the following state in density matrix form

\begin{align}
& Q \tilde{\rho}_{m,1} Q^{\dagger} \nonumber
\\&= \Bigl(\frac{1}{\sum_{a \in \setaction} p(a \given x) \expect{(Z^{\pi})^{m} \given x,a}}\Bigr) \nonumber Q \Biggl[
\\&\quad \cdot \sum_{\substack{a_i,a_k \in \setaction \\ z_q \in \setreturn}} \sqrt{g_{x}(a_i, z_q) g_{x}(a_k, z_q) v_{z_q}^{2m}} \nonumber
\\&\quad \cdot \ket{a_i,0^{\otimes q_{\mathcal{Z}}}} \bra{a_k,0^{\otimes q_{\mathcal{Z}}}} \Biggr] Q^{\dagger}
\\& \textrm{(Pull out constant norm. factor, and relabel}~ z \to z_q \textrm{)}\nonumber
\\&= \Bigl(\frac{1}{\sum_{a \in \setaction} p(a \given x) \expect{(Z^{\pi})^{m} \given x,a}}\Bigr) \Biggl[ \nonumber
\\&\quad \cdot \sum_{\substack{a_i,a_k \in \setaction \\ z_q,z_w \in \setreturn}} \sqrt{g_{x}(a_i, z_q) g_{x}(a_k, z_q) v_{z_q}^{2m} p(z_w \given x,a_i) v_{z_w}^{m}} \nonumber
\\&\quad \cdot \ket{a_i,z_w} \bra{a_k,0^{\otimes q_{\mathcal{Z}}}} \Biggr] Q^{\dagger}
\\& \textrm{(Apply $Q$ on LHS, collapse where $a_i=a$ from \cref{eq:qnrl:CU_z_x_a}}, \nonumber
\\& ~~\textrm{and create new index $z_w=z$ from \cref{eq:qnrl:U_v_z})} \nonumber
\\&= \Bigl(\frac{1}{\sum_{a \in \setaction} p(a \given x) \expect{(Z^{\pi})^{m} \given x,a}}\Bigr) \nonumber
\\& \cdot \sum_{\substack{a_i,a_k \in \setaction \\ z_q,z_w,z_t \in \setreturn}} \Biggl[ g_{x}(a_i, z_q) g_{x}(a_k, z_q) v_{z_q}^{2m} \nonumber
\\& \cdot p(z_w \given x,a_i) v_{z_w}^{m} p(z_t \given x,a_k) v_{z_t}^{m} \Biggr]^{\frac{1}{2}} \ket{a_i,z_w} \bra{a_k,z_t}
\\& \textrm{(Apply $Q^{\dagger}$ on RHS, collapse where $a_k=a$ from \cref{eq:qnrl:CU_z_x_a}}, \nonumber
\\& ~~\textrm{and create new index $z_t=z$ from \cref{eq:qnrl:U_v_z})} \nonumber
\\&= \rho_{m,2}, \label{eq:qnrl:proof:quak:rho_m2}
\end{align}
which we will refer to as $\rho_{m,2}$.
We then apply the reset operation to the return space by first applying return projectors 
\begin{align}
&\bfunc{\mathcal{E}_{\mathcal{Z}}}{\rho_{m,2}} \nonumber
\\&= \sum_{z} \Pi_z \rho_{m,2} \Pi^{\dagger}_z
\\&= \Bigl(\frac{1}{\sum_{a \in \setaction} p(a \given x) \expect{(Z^{\pi})^{m} \given x,a}}\Bigr) \nonumber
\\& \cdot \sum_{\substack{a_i,a_k \in \setaction \\ z_q,z \in \setreturn}} \Biggl[ g_{x}(a_i, z_q) g_{x}(a_k, z_q) v_{z_q}^{2m} \nonumber
\\& \cdot p(z \given x,a_i) p(z \given x,a_k) v_{z}^{2m} \Biggr]^{\frac{1}{2}} \ket{a_i,z} \bra{a_k,z}
\\& \textrm{(Collapse where $z_w = z_t = z$)} \nonumber
\end{align}
which has the normalization factor
\begin{align}
& \bfunc{\textrm{tr}}{\bfunc{\mathcal{E}_{\mathcal{Z}}}{\rho_2}} \nonumber
\\&= \frac{\sum_{a,z_q,z \in \setreturn} g_{x}(a, z_q) v_{z_q}^{m} p(z \given x,a) v_{z}^{m}}{\sum_{a \in \setaction} p(a \given x) \expect{(Z^{\pi})^{m} \given x,a}}
\\& \textrm{(Trace summation collapse where $a_i = a_k = a$)} \nonumber
\\&= \frac{\sum_{a \in \setaction} p(a \given x) \sum_{z_q,z \in \setreturn} p(z_q \given x,a) v_{z_q}^{m} p(z \given x,a) v_{z}^{m}}{\sum_{a \in \setaction} p(a \given x) \expect{(Z^{\pi})^{m} \given x,a}}
\\&= \frac{\sum_{a \in \setaction} p(a \given x) \bigl[\sum_{z \in \setreturn} p(z \given x,a) v_{z}^{m}\bigr]^{2}}{\sum_{a \in \setaction} p(a \given x) \expect{(Z^{\pi})^{m} \given x,a}}
\\& \textrm{(Both return summations are identical, i.e., $z = z_q$)} \nonumber
\\&= \frac{\sum_{a \in \setaction} p(a \given x) \expect{(Z^{\pi})^{m} \given x,a}^{2}}{\sum_{a \in \setaction} p(a \given x) \expect{(Z^{\pi})^{m} \given x,a}}.
\\&\textrm{(By definition of the moment of a random variable)}\nonumber
\end{align}
Thus the normalized density matrix according to \cref{eq:qnrl:measurement} is
\begin{align}
&\bfunc{M_{\mathcal{Z}}}{\rho_{m,2}}
\\&= \frac{\bfunc{\mathcal{E}_{\mathcal{Z}}}{\rho_{m,2}}}{\bfunc{\textrm{tr}}{\bfunc{\mathcal{E}_{\mathcal{Z}}}{\rho_{m,2}}}}
\\&= \Bigl(\frac{1}{\sum_{a \in \setaction} p(a \given x) \expect{(Z^{\pi})^{m} \given x,a}^{2}}\Bigr) \nonumber
\\& \cdot \sum_{\substack{a_i,a_k \in \setaction \\ z_q,z \in \setreturn}} \Biggl[ g_{x}(a_i, z_q) g_{x}(a_k, z_q) v_{z_q}^{2m} \nonumber
\\& \cdot p(z \given x,a_i) p(z \given x,a_k) v_{z}^{2m} \Biggr]^{\frac{1}{2}} \ket{a_i,z} \bra{a_k,z}
\\&= {\rho'}_{m,2}
\end{align}
Tracing out the return-atoms system gives
\begin{align}
& \bfunc{\textrm{tr}_{\mathcal{Z}}}{{\rho'}_{m,2}} \nonumber
\\&= \sum_{z} (\mathbb{I}\otimes\bra{z}) {\rho'}_{m,2} (\mathbb{I}\otimes\ket{z})
\\&= \Bigl(\frac{1}{\sum_{a \in \setaction} p(a \given x) \expect{(Z^{\pi})^{m} \given x,a}^{2}}\Bigr) \nonumber
\\& \cdot \sum_{\substack{a_i,a_k \in \setaction \\ z_q,z \in \setreturn}} \Biggl[ g_{x}(a_i, z_q) g_{x}(a_k, z_q) v_{z_q}^{2m} \nonumber
\\& \cdot p(z \given x,a_i) p(z \given x,a_k) v_{z}^{2m} \Biggr]^{\frac{1}{2}} \ket{a_i} \bra{a_k}
\end{align}
and the final system state after applying the reset operation according to \cref{eq:qnrl:reset} is therefore
\begin{align}
&\bfunc{\textrm{Reset}_{\mathcal{Z}}}{\rho_{m,2}} \nonumber
\\&= \bfunc{\textrm{tr}_{\mathcal{Z}}}{{\rho'}_{m,2}} \otimes \ket{0^{\otimes q_{\mathcal{Z}}}}\bra{0^{\otimes q_{\mathcal{Z}}}}
\\&= \Bigl(\frac{1}{\sum_{a \in \setaction} p(a \given x) \expect{(Z^{\pi})^{m} \given x,a}^{2}}\Bigr) \nonumber
\\& \cdot \sum_{\substack{a_i,a_k \in \setaction \\ z_q,z \in \setreturn}} \Biggl[ g_{x}(a_i, z_q) g_{x}(a_k, z_q) v_{z_q}^{2m} \nonumber
\\& \cdot p(z \given x,a_i) p(z \given x,a_k) v_{z}^{2m} \Biggr]^{\frac{1}{2}} \ket{a_i,0^{\otimes q_{\mathcal{Z}}}} \bra{a_k,0^{\otimes q_{\mathcal{Z}}}}
\\&= \tilde{\rho}_{m,2}.
\end{align}

Importantly, notice that a squared version of the expectation over returns appears in the normalization term. From this we readily see a correlation between the power index of the expectation and the number of times the return operations have been applied (conveniently denoted in the subscript of the system state ${\rho'}_{m,2}$). This is even more prevalent if we apply a measurement on the action space as follows
\begin{align}
&\bfunc{\mathcal{E}_{\mathcal{A}}}{\tilde{\rho}_{m,2}}
\\&= \sum_{a} \ket{a}\bra{a} \bfunc{\textrm{tr}_{\mathcal{Z}}}{{\rho'}_{m,2}} \ket{a}\bra{a}
\\&= \Bigl(\frac{1}{\sum_{a \in \setaction} p(a \given x) \expect{(Z^{\pi})^{m} \given x,a}^{2}}\Bigr) \nonumber
\\& \cdot \sum_{\substack{a \in \setaction \\ z_q,z \in \setreturn}} g_{x}(a, z_q) v_{z_q}^{m} p(z \given x,a) v_{z}^{m} \ket{a} \bra{a}
\\& \textrm{(Sums collapse when $a_i = a_k = a$)} \nonumber
\\&= \Bigl(\frac{1}{\sum_{a \in \setaction} p(a \given x) \expect{(Z^{\pi})^{m} \given x,a}^{2}}\Bigr) \nonumber
\\& \cdot \sum_{a \in \setaction} p(a \given x) \sum_{z_q,z \in \setreturn} p(z_q \given x,a) v_{z_q}^{m} p(z \given x,a) v_{z}^{m} \ket{a} \bra{a}
\\&= \Bigl(\frac{1}{\sum_{a \in \setaction} p(a \given x) \expect{(Z^{\pi})^{m} \given x,a}^{2}}\Bigr) \nonumber
\\& \cdot \sum_{a \in \setaction} p(a \given x) \Bigl[ \sum_{z \in \setreturn} p(z \given x,a) v_{z}^{m} \Bigr]^{2} \ket{a} \bra{a}
\\& \textrm{(Both return summations are identical, i.e., $z = z_q$)} \nonumber
\\&= \Bigl(\frac{1}{\sum_{a \in \setaction} p(a \given x) \expect{(Z^{\pi})^{m} \given x,a}^{2}}\Bigr) \nonumber
\\& \cdot \sum_{a \in \setaction} p(a \given x) \expect{(Z^{\pi})^{m} \given x,a}^{2} \ket{a} \bra{a},
\\& \textrm{(By definition of the moment of a random variable)}
\end{align}
which we can readily see is both diagonal and normalized, i.e., $\bfunc{\textrm{tr}}{\bfunc{\mathcal{E}_{\mathcal{A}}}{\tilde{\rho_2}}} = 1$. Hence, according to \cref{eq:qnrl:measurement} we have the final measured state
\begin{align}
& \bfunc{M_{\mathcal{A}}}{\tilde{\rho}_{m,2}} \nonumber
\\& = \frac{\bfunc{\mathcal{E}_{\mathcal{A}}}{\tilde{\rho}_{m,2}}}{\bfunc{\textrm{tr}}{\bfunc{\mathcal{E}_{\mathcal{A}}}{\tilde{\rho}_{m,2}}}}
\\& = \sum_{a \in \setaction} \Bigl(\frac{p(a \given x) \expect{(Z^{\pi})^{m} \given x,a}^{2}}{\sum_{a' \in \setaction} p(a' \given x) \expect{(Z^{\pi})^{m} \given x,a'}^{2}}\Bigr) \ket{a}\bra{a}.
\end{align}
and the probability of measuring an arbitrary action $a_i$ according to \cref{eq:qnrl:measurement_outcome_prob} is therefore:
\begin{equation}
\bfunc{P}{\ket{a_i}\bra{a_i}} = \frac{p(a_i) \overbrace{\expect{Z^{m} \given a_i}^{2}}^{\textrm{exponent $n=2$}}}{\sum_{a' \in \setaction} p(a') \expect{Z^{m} \given a'}^2}.
\end{equation}
Notice that the exponent of the expectation, in this case $n=2$, matches the number of times the operation $\textrm{Reset}_{\setreturn} Q$ has been applied to the input state, as indicated by the subscript of $\rho_{m,2}$. 

Therefore, we have shown that we can control the exponent of the expectation in the resulting probability distribution through the number of cascades of $\textrm{Reset}_{\setreturn} Q$.
Formally, we define this cascaded quantum measurement operation over the joint action-return space $\hilbert_{\mathcal{A}} \otimes \hilbert_{\mathcal{Z}}$ as
\begin{equation}\label{eq:qnrl:proof:quak:result:0}
\begin{split}
& \textrm{QuAK} =
\\ & \underbrace{\prod_{n} \vphantom{\prod_{m}}}_{\textrm{Power}} \Bigl[ \underbrace{\textrm{Reset}_{\setreturn} \vphantom{\prod_{m}}}_{\textrm{Kickback}} \underbrace{\prod_{m} \bigl[ \mathbb{I} \otimes U_{v \given z} \bigr]}_{\textrm{Moment}} \underbrace{CU^{\pi}_{z \given x,a} \vphantom{\prod_{m}}}_{\textrm{Return dist.}} \Bigr] \bigl(\underbrace{U_{a \given x} \otimes \mathbb{I} \vphantom{\prod_{m}}}_{\textrm{Action dist.}}\bigr),
\end{split}
\end{equation}
which is the quantum state map
\begin{equation}\label{eq:qnrl:proof:quak:result:1}
\begin{split}
\textrm{QuAK}
& : \ket{0^{\otimes q_{\mathcal{A}} + q_{\mathcal{Z}}}}\bra{0^{\otimes q_{\mathcal{A}} + q_{\mathcal{Z}}}} 
\\ & \to \tilde{\rho}_{m,n} \otimes \ket{0^{\otimes q_{\setreturn}}} \bra{0^{\otimes q_{\setreturn}}},
\end{split}
\end{equation}
such that the action distribution is formulated as the density matrix
\begin{equation}\label{eq:qnrl:proof:quak:result:2}
\tilde{\rho}_{m,n}
= \frac{\sum_{a \in \mathcal{A}} p(a \given x) \bfunc{\mathbb{E}}{(Z^{\pi})^{m} \given x,a}^{n} \ket{a}\bra{a}}{\sum_{a' \in \mathcal{A}} p(a' \given x) \bfunc{\mathbb{E}}{(Z^{\pi})^{m} \given x,a'}^{n}},
\end{equation}
and the probability of measuring action $a \in \setaction$ is
\begin{equation}\label{eq:qnrl:proof:quak:result:3}
\begin{split}
    P[\ket{a}\bra{a}]
    & = \bfunc{\textrm{tr}}{\bfunc{M_{a \in \setaction}}{\tilde{\rho}_{m,n}}}
    \\ & = \frac{p(a \given x) \bfunc{\mathbb{E}}{(Z^{\pi})^{m} \given x,a}^{n}}{\sum_{a' \in \mathcal{A}} p(a' \given x) \bfunc{\mathbb{E}}{(Z^{\pi})^{m} \given x,a'}^{n}}.
\end{split}
\end{equation}
This pushes an initial all-zero state to a superposition of action states, where the probability of measuring any given action $a \in \setaction$ is the \emph{normalized weighted $m$-th moment of returns, raised to the power of $n$, relative to all other actions $\mathcal{A} \setminus \{a\}$}.
These proven definitions \cref{eq:qnrl:proof:quak:result:0,eq:qnrl:proof:quak:result:1,eq:qnrl:proof:quak:result:2} match our proposed \cref{eq:qnrl:quak:0,eq:qnrl:quak:1,eq:qnrl:rho_A_x_Z} in \cref{thm:qnrl:quak:moment}, and \cref{eq:qnrl:proof:quak:result:3} matches our proposed \cref{eq:qnrl:prob_a_x_Z} in \cref{thm:qnrl:quak:policy}.
Hence, we have shown that the computation and comparison of the $n$-power of the $m$-th moment of the return distributions conditioned on an environment state $x \in \setstate$ and the action set $\setaction$ can be parallelized entirely in the combined quantum space $\hilbert_{\mathcal{A}} \otimes \hilbert_{\mathcal{Z}}$ using \cref{eq:qnrl:quak:0,eq:qnrl:prob_a_x_Z}, without the need for downstream classical computation.
\end{proof}

\section{Broader {Q\MakeLowercase{u}AK} usage}

\Cref{thm:qnrl:quak:moment} is pertinent to applications in both \ac{drl}, and also broader statistical analysis settings, because it allows for comparison of the moments of many random variables within quantum space, that is to say, the entire operation is completely quantum, without the need for encoding/decoding into the classical space mid-way through the algorithm, which is both expensive and error prone due to stochastic noise. Specifically, the premier result of \cref{thm:qnrl:quak:moment} is its application in a more general class of nested distributional settings, which are \emph{distributions over distributions}. To be precise, our \ac{drl} setting can indeed be more generally described as a \emph{nested distribution of actions $\mathcal{A}$ over a distribution of returns $\mathcal{Z}$}.
As shown in \cref{thm:qnrl:quak:policy}, the probability of measuring any given state in the space $\hilbert_{\mathcal{A}}$ is directly proportional to the $m$-th moment expectation over the space $\hilbert_{\mathcal{Z}}$ conditioned on $\hilbert_{\mathcal{A}}$ according to the value mapping in \cref{eq:qnrl:U_v_z}.
Moreover, we can boost the likelihood of measuring actions with higher expectations by intelligently selecting the power index $n$ to ``spread'' the moment distribution, i.e., the probability larger moments get raised while others get lowered. This allows interplay between the relative strength of action probabilities and the number of shots of the quantum circuit; that is to say, higher values of $n$ can be chosen to lower the number of repeated shots required, and vice versa.

\section{Example Scenarios for {Q\MakeLowercase{u}AK}}

Here we discuss several example scenarios where our proposed \ac{quak} algorithm may be useful.
First, consider an \ac{rl} scenario where finding the action corresponding to the maximum average value of the return-atom distribution is desired, i.e., ${\arg\max}_{a \in \setaction}{\set*{\expect{Z^{\pi} \given x, a}}}$. In this setting, we do not care about the actual expected return value, only the relative distance between expected returns. Applying \cref{eq:qnrl:quak:0} with $n=1$ and $m=1$ will yield a distribution across every 1st moment, i.e., the \emph{mean}, of random returns. If the expectations across the actions are similar in value, then the likelihood of measuring any given action will be similar, that is, with a maximum value near $1/\abs{\mathcal{A}}$. As we push $n \to \infty$, the relative (normalized) distance between the expectations becomes greater, with a maximum value near $1$. Hence, the action associated with the maximum expected return can be found by choosing a sufficiently high $n$ such that the probability of measuring that action is boosted (approaching $1$) while others are lowered (approaching $0$) by only running a single shot of the quantum circuit.

Consider another example where instead we are interested in the similarity in expected returns, i.e., which actions have returns that are \emph{close to each other in relative magnitude}. This could be beneficial in situations where exploration in the environment is desired, in contrast to a greedy policy (i.e., always takes the perceived ``best'' action). In this setting, one could choose a very low value for $n$ and instead run many shots of the circuit to generate a probability distribution over the action space
\begin{equation*}
\set*{\frac{p(a_i \given x)\expect{Z^{\pi} \given x, a_i}}{\sum_{a \in \mathcal{A}} p(a \given x)\expect{Z^{\pi} \given x, a}}}_{a_i \in \mathcal{A}},
\end{equation*}
which can then be sampled by the agent to take a ``likely'' action.

As a final example, consider the comparison of the 2nd moments, i.e., the \emph{variance}, of the returns distribution. Selecting $m=2$ with low $n$ one can sample the circuit many times to generate the distribution
\begin{equation*}
\set*{\frac{p(a_i \given x)\expect{(Z^{\pi})^{2} \given x, a_i}}{\sum_{a \in \mathcal{A}} p(a \given x)\expect{(Z^{\pi})^{2} \given x, a}}}_{a_i \in \mathcal{A}},
\end{equation*}
which forms a policy based on the ``spread'' of the distribution support, and is useful in tasks where the consistency and precision of a random variable is of particular interest.

\section{Environment Specifications}\label{app:qnrl:env}
\setcounter{figure}{0}
\setcounter{table}{0}
\setcounter{equation}{0}
\setcounter{algocf}{0}

In this section, we provide a description of each environment used to train and evaluate our models.

\subsection{\texttt{CartPole}}

\begin{figure}[t!]
    \centering
    \fcolorbox{gray!50}{white}{\includegraphics[width=0.75\columnwidth]{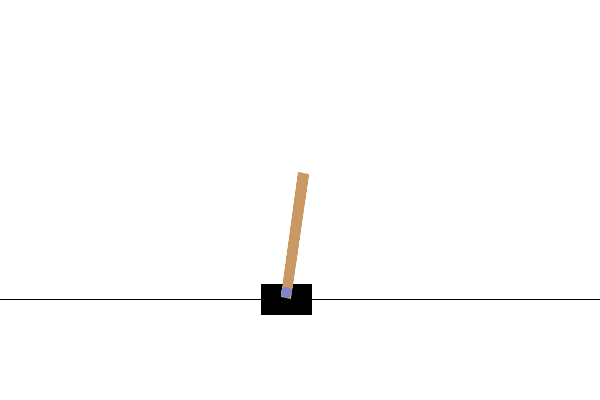}}
    \caption{Example of \texttt{CartPole} environment.\label{fig:qnrl:env:CartPole}}
\end{figure}

\begin{table}[t!]
    \caption{Specifications for \texttt{CartPole} environment.\label{tab:qnrl:env-spec:CartPole}}
    \centering
    \resizebox{\columnwidth}{!}{%
    \begin{tabular}{ll}
        \toprule
        Parameter & Value \\
        \midrule
        Episode Time limit ($T$) & 500 \\
        Observation at time $t$ & $\bm{x}_{t} \in \mathbb{R}^{4}$ \\
        Action at time $t$ & $a_{t} \in \set{\textrm{left},\textrm{right}}$\\
        Reward at time $t$ & $r_{t} = \begin{cases}
            +1, & \textrm{if pole is balanced}, \\
            0, & \textrm{otherwise}
        \end{cases}$\\
        Distribution atoms & $\abs{\setreturn} = 32$ \\
        Distribution range & $[z_{\textrm{min}}, z_{\textrm{max}}] = [-100, 100]$ \\
        \bottomrule
    \end{tabular}
    }
\end{table}

We train using the \texttt{CartPole} environment as proposed in \cite{Barto1983NeuronlikeAdaptiveElements}. This environment is an interesting case study for distributional learning because the observations are continuous real values, and the impact of chosen actions require several time steps to fully realize their consequences. An example of the \texttt{CartPole} environment is shown in \cref{fig:qnrl:env:CartPole}.
Agent observations are a vector $\bm{x} \in \mathbb{R}^{4}$ with 4 real-valued features. The features are: 
\begin{enumerate*}[label=\arabic*)]
    \item Cart position with range $[-4.8, 4.8]$,
    \item Cart velocity with range $(-\infty, \infty)$,
    \item Pole angle in \emph{radians} with range $[-0.418, 0.418]$, and
    \item Pole angular velocity with range $(-\infty, \infty)$.
\end{enumerate*}
The pole is considered \emph{balanced} if the pole angle feature stays within the range $[-.2095, .2095]$ radians, and the cart position feature stays within the range $[-2.4, 2.4]$.
For our quantum models we perform the static transformation 
\begin{equation}
    f(\bm{x}_{i}) = \bm{x}_{i} / \bm{v}_{i}, ~\textrm{where}~ \bm{v}_{i} = [2.4, 2.5, 0.21, 2.5],
\end{equation}
on the observation to convert its features into the effective range of $[-2,2]$.
An agent interacts with the environment by taking actions in the space $\setaction = \set{\textrm{left},\textrm{right}}$, which correspond to pushing their cart to the left and right respectively.
Each time step an agent is successful in keeping their pole balanced they receive a $+1$ episode reward. The episode terminates when an observation falls outside of the balanced range.
The agent's goal is to maximize their expected total episode reward (i.e., the number of time steps they are able to keep the pole balanced). The maximum reward for this environment is $T$, which is the time limit of environment configuration.
Our models represent distributions with support size $\abs{\setreturn} = 32$ and range $[z_{\textrm{min}}, z_{\textrm{max}}] = [-100, 100]$.
The details of the environment are summarized in \cref{tab:qnrl:env-spec:CartPole}.

We evaluate the agents using the \emph{total reward} metric, which aggregates all rewards over a single episode
\begin{equation}
    TR = \sum_{t=0}^{T-1} r_{t}, \label{eq:qnrl:env-spec:total-reward}
\end{equation}
where $t\in[0,T-1]$ is the episode time index, $T$ is the episode time limit, and $r_{t}$ is the agent reward at time $t$.

\subsection{\texttt{CliffWalking}}

\begin{figure}[t!]
    \centering
    \includegraphics[width=\columnwidth]{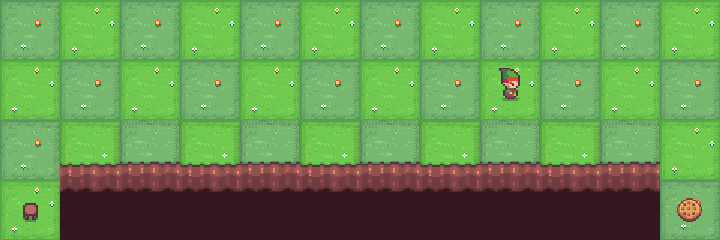}
    \caption{Example of \texttt{CliffWalking} environment.\label{fig:qnrl:env:CliffWalking}}
\end{figure}

\begin{table}[t!]
    \caption{Specifications for \texttt{CliffWalking} environment.\label{tab:qnrl:env-spec:CliffWalking}}
    \centering
    \resizebox{\columnwidth}{!}{%
    \begin{tabular}{ll}
        \toprule
        Parameter & Value \\
        \midrule
        Episode Time limit ($T$) & 99 \\
        Observation at time $t$ & $\bm{x}_{t} \in \set{0,1}^{48}$ (i.e, $4 \times 12$ grid) \\
        Action at time $t$ & $a_{t} \in \set{\textrm{up},\textrm{right},\textrm{down},\textrm{left}}$ \\
        Reward at time $t$ & $r_{t} = \begin{cases}
            -100, & \textrm{agent steps into the cliff}, \\
            -1, & \textrm{otherwise}
        \end{cases}$\\
        Distribution atoms & $\abs{\setreturn} = 32$ \\
        Distribution range & $[z_{\textrm{min}}, z_{\textrm{max}}] = [-100, 100]$ \\
        \bottomrule
    \end{tabular}
    }
\end{table}

We train using the \texttt{CliffWalking} environment as proposed in \cite{Sutton2018ReinforcementLearningIntroduction}. This environment is an interesting case study for building distributional models because of the discrete observation space, the large negative reward received for falling off of the cliff, and the resetting of the agent's position after falling. An example of the \texttt{CliffWalking} environment is shown in \cref{fig:qnrl:env:CliffWalking}.
Agent observations are a one-hot encoded vector $\bm{x}_{t} \in \set{0,1}^{48}$ representing the $\abs{\setstate} = 48$ discrete states of a $4 \times 12$ grid environment. A cliff runs along the bottom border of the grid, between cells $[3, 1] \to [3,10]$, and the goal is located at grid cell $[3,11]$. If the agent moves to a grid cell that is marked as a cliff, it is returned to the starting position and the episode continues until either the time limit is exceeded or the goal cell is reached. An agent interacts with the environment by taking actions in the space $\setaction = \set{\textrm{up},\textrm{right},\textrm{down},\textrm{left}}$, which correspond to moving to an adjacent grid cell along the designated cardinal direction. If an agent is adjacent to the perimeter of the grid and selects an action moving towards the perimeter this is registered as a valid action and the agent's position does not change. The agent receives a $-100$ reward for stepping into a grid cell marked as a cliff, and a $-1$ reward for all other time steps. The agent's goal is to maximize their expected total episode reward, which includes reaching the goal grid cell in as few time steps as possible without falling off the cliff. The maximum reward possible for this environment is $-13$, which corresponds to the minimal number of $13$ time steps required to reach the goal cell. 
Our models represent distributions with support size $\abs{\setreturn} = 32$ and range $[z_{\textrm{min}}, z_{\textrm{max}}] = [-100, 100]$.
We evaluate agents using the same total episode reward metric as outlined in \cref{eq:qnrl:env-spec:total-reward}.
The details of the environment are summarized in \cref{tab:qnrl:env-spec:CliffWalking}.

\subsection{\texttt{FrozenLake}}

\begin{figure}[t!]
    \centering
    \includegraphics[width=0.5\columnwidth]{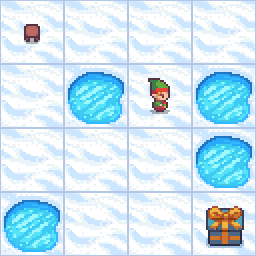}
    \caption{Example of \texttt{FrozenLake} environment.\label{fig:qnrl:env:FrozenLake}}
\end{figure}

\begin{table}[t!]
    \caption{Specifications for \texttt{FrozenLake} environment.\label{tab:qnrl:env-spec:FrozenLake}}
    \centering
    \tiny
    \resizebox{\columnwidth}{!}{%
    \begin{tabular}{ll}
        \toprule
        Parameter & Value \\
        \midrule
        Episode Time limit ($T$) & 100 \\
        Observation at time $t$ & $\bm{x}_{t} \in \set{0,1}^{16}$ (i.e., $4 \times 4$ grid) \\
        Action at time $t$ & $a_{t} \in \set{\textrm{left},\textrm{down},\textrm{right},\textrm{up}}$ \\
        Reward at time $t$ & $r_{t} = \begin{cases}
            +1, & \textrm{agent reaches the goal}, \\
            -0.2, & \textrm{agent reaches a hole}, \\
            -0.01, & \textrm{otherwise}, \\
        \end{cases}$\\
        Distribution atoms & $\abs{\setreturn} = 32$ \\
        Distribution range & $[z_{\textrm{min}}, z_{\textrm{max}}] = [-2, 2]$ \\
        \bottomrule
    \end{tabular}
    }
\end{table}

We train using a variation of the \texttt{FrozenLake} environment proposed in \cite{Brockman2016OpenAIGym}. This environment is an interesting case study for distributional learning because of the discrete observation space and small reward penalty for falling into holes throughout the environment. This requires agents to predict immediate negative rewards for falling into adjacent hazards, while also considering long-term rewards for planning to reach the goal cell. An example of the \texttt{FrozenLake} environment is shown in \cref{fig:qnrl:env:FrozenLake}.
Agent observations are a one-hot encoded vector $\bm{x}_{t} \in \set{0,1}^{16}$ representing the $\abs{\setstate} = 16$ discrete states of a $4 \times 4$ grid environment. The environment is a semi-frozen landscape with some ``hole'' cells marked as hazards, and a single goal cell located at the bottom-right cell $[3,3]$. Agents interact in the environment by taking actions in the space $\setaction = \set{\textrm{left},\textrm{down},\textrm{right},\textrm{up}}$, which correspond to moving to an adjacent grid cell in the designed cardinal direction. If an agent is adjacent to the perimeter of the grid and selects an action moving towards the perimeter this is registered as a valid action and the agent's position does not change. We use a custom reward schedule for this environment, whereby an receives a $+1$ reward for reaching the goal, a $-0.2$ reward for falling into a hole, and a $-0.01$ reward for all other situations. The latter small negative reward penalizes the number of time steps required to reach the goal. Importantly, the environment does not reset when an agent falls into a hole, instead continuing from the hole position as normal. The maximum episode reward for this environment and reward schedule is $0.95$, whereby the agent requires a minimum of 6 time steps to reach the goal position without falling into a hole.
Our models represent distributions with support size $\abs{\setreturn} = 32$ and range $[z_{\textrm{min}}, z_{\textrm{max}}] = [-2, 2]$.
We evaluate agents using the same total episode reward metric as outlined in \cref{eq:qnrl:env-spec:total-reward}.
The details of the environment are summarized in \cref{tab:qnrl:env-spec:FrozenLake}.

\subsection{\texttt{Acrobot}}

\begin{figure}[t!]
    \centering
    \fcolorbox{gray!50}{white}{\includegraphics[width=0.75\columnwidth]{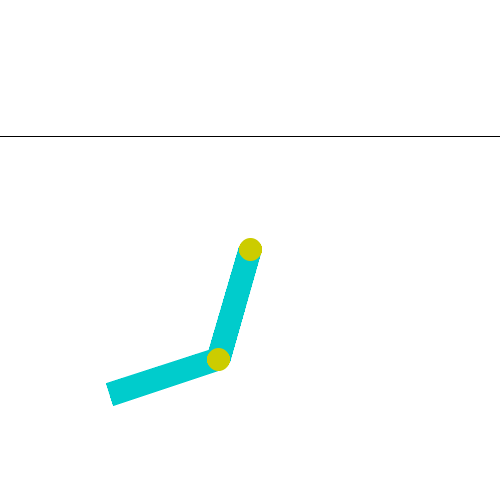}}
    \caption{Example of \texttt{Acrobot} environment.\label{fig:qnrl:env:Acrobot}}
\end{figure}

\begin{table}[t!]
    \caption{Specifications for \texttt{Acrobot} environment.\label{tab:qnrl:env-spec:Acrobot}}
    \centering
    \resizebox{\columnwidth}{!}{%
    \begin{tabular}{ll}
        \toprule
        Parameter & Value \\
        \midrule
        Episode Time limit ($T$) & 500 \\
        Observation at time $t$ & $\bm{x}_{t} \in \mathbb{R}^{6}$ \\
        Action at time $t$ & $a_{t} \in \set{\textrm{-1 torque},\textrm{0 torque},\textrm{+1 torque}}$ \\
        Reward at time $t$ & $r_{t} = \begin{cases}
            0, & \textrm{agent achieves target height}, \\
            -1, & \textrm{otherwise}, \\
        \end{cases}$\\
        Distribution atoms & $\abs{\setreturn} = 32$ \\
        Distribution range & $[z_{\textrm{min}}, z_{\textrm{max}}] = [-100, 100]$ \\
        \bottomrule
    \end{tabular}
    }
\end{table}

We train using the \texttt{Acrobot} environment as proposed in \cite{Sutton2018ReinforcementLearningIntroduction}. This environment serves as an interesting case study for distributional learning for its continuous real-valued observation space, complexity in managing the relationship between an actuated and non-actuated joints, and the long-term planning required for cascading small actions to effect large changes in the environment. Specifically, the arm requires many correlated actions be taken in succession to place it in an upright position, otherwise it will remain downward for the entirety of the episode. Hence, agents must navigate exploring a large continuous space to find an optimal policy. An example of the \texttt{Acrobot} environment is shown in \cref{fig:qnrl:env:Acrobot}.
Agent observations are a vector $\bm{x} \in \mathbb{R}^{6}$ with 6 real-valued features. The features are: 
\begin{enumerate*}[label=\arabic*)]
    \item Cosine of $\theta_{1}$ with range $[-1, 1]$,
    \item Since of $\theta_{1}$ with range $[-1, 1]$,
    \item Cosine of $\theta_{2}$ with range $[-1, 1]$,
    \item Sine of $\theta_{2}$ with range $[-1, 1]$,
    \item Angular velocity of $\theta_{1}$ with range $[-4 \pi, 4 \pi]$, and
    \item Angular velocity of $\theta_{2}$ with range $[-9 \pi, 9 \pi]$,
\end{enumerate*}
where $\theta_{1}$ is the angle of the first joint, and $\theta_{2}$ is relative to the angle of the first link. An angle of $\theta_{1} = 0$ indicates the first link is pointing downwards, and $\theta_{2} = 0$ indicates the same angle between the two links.
For our quantum models we perform the static transformation 
\begin{equation}
    f(\bm{x}_{i}) = \bm{x}_{i} / \bm{v}_{i}, ~\textrm{where}~ \bm{v}_{i} = [1, 1, 1, 1, 4 \pi, 9 \pi],
\end{equation}
on the observation to convert its features into the range of $[-1,1]$.
The goal state is when the free end of the arm reaches a target height derived as $-\cos{(\theta_{1})} - \cos{(\theta_{2} + \theta_{1})} > 1.0$.
The agent interacts within the environment by taking actions representing the torque applied on the actuated joint between the two links, which is the set $\setaction = \set{\textrm{-1 torque},\textrm{0 torque},\textrm{+1 torque}}$ respectively.
Every time step the agent receives a $-1$ reward, penalizing the duration of the episode, unless the goal is reached which terminates the episode with a $0$ reward.
The maximum reward threshold for this environment is $-100$, which indicates the agent reached the target height in 100 time steps.
Our models represent distributions with support size $\abs{\setreturn} = 32$ and range $[z_{\textrm{min}}, z_{\textrm{max}}] = [-100, 100]$.
We evaluate agents using the same total episode reward metric as outlined in \cref{eq:qnrl:env-spec:total-reward}.
The details of the environment are summarized in \cref{tab:qnrl:env-spec:Acrobot}.

\subsection{\texttt{SpaceInvaders}}

\begin{figure}[t!]
    \centering
    \includegraphics[width=0.6\columnwidth]{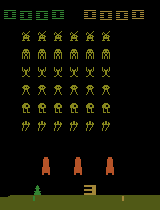}
    \caption{Example of \texttt{SpaceInvaders} environment.\label{fig:qnrl:env:SpaceInvaders}}
\end{figure}

\begin{table}[t!]
    \caption{Specifications for \texttt{SpaceInvaders} environment.\label{tab:qnrl:env-spec:SpaceInvaders}}
    \centering
    \resizebox{\columnwidth}{!}{%
    \begin{tabular}{ll}
        \toprule
        Parameter & Value \\
        \midrule
        Episode Time limit ($T$) & $10,000$ \\
        Frame Skip & $4$ \\
        Frame Stack & $4$ \\
        Screen Size & $84 \times 84$ \\
        Scale Observation & True \\
        Greyscale Observation & True \\
        Repeat Action Probability & $0\%$ \\
        Noop Reset & $30$ \\
        Terminate Episode on Life Loss & True \\
        Observation at time $t$ & $\bm{x}_{t} \in [0,1]^{4 \times 84 \times 84}$ \\
        Action at time $t$ & $a_{t} \in \set{\textrm{NOOP},\textrm{fire},\textrm{right},\textrm{left},\textrm{right fire},\textrm{left fire}}$ \\
        Reward at time $t$ & See \cite{Bellemare2013ArcadeLearningEnvironment,Machado2018RevisitingArcadeLearning} for score structure. \\
        Distribution atoms & $\abs{\setreturn} = 32$ \\
        Distribution range & $[z_{\textrm{min}}, z_{\textrm{max}}] = [-10, 10]$ \\
        \bottomrule
    \end{tabular}
    }
\end{table}

We train using the \texttt{SpaceInvaders} environment as implemented in the \ac{ale} \cite{Bellemare2013ArcadeLearningEnvironment,Machado2018RevisitingArcadeLearning}. This environment is a valuable case study for evaluating how distributional models learn from image data, how they handle a more sparse reward structure, and with randomness injected into the action selection phase via $\epsilon$-greedy.
An example of the \texttt{SpaceInvaders} environment is shown in \cref{fig:qnrl:env:SpaceInvaders}.
We use the widely adopted Atari preprocessing scheme of \cite{Machado2018RevisitingArcadeLearning} with a screen size of $84 \times 84$, greyscale image conversion, observation scaling to range $[0,255] \mapsto [0,1]$, 4-frame skipping with 4-frame stacking, a maximum of $30$ random \texttt{NOOP} actions at the start of each episode, and episode termination on life loss.
Agent observations are thus a matrix $\bm{x} \in [0,1]^{4 \times 84 \times 84}$ consisting of 4 frame-stacked greyscale images of shape $84 \times 84$.
Both the classical and quantum models employ a \ac{cnn} to learn an embedding for the image data, which is then mapped by a fully-connected layer into the dimension necessary for either the downstream classical distributional network or quantum circuit respectively.
The agent interacts within the environment by taking actions to move the player and fire the player weapon, which is the set $\mathcal{A} = \set{\textrm{NOOP},\textrm{fire},\textrm{right},\textrm{left},\textrm{right fire},\textrm{left fire}}$, where $a_{0}=\textrm{NOOP}$ is a special ``do nothing'' operation resulting in no player action taken.
The agent gains points by destroying the space invaders with varying assigned score values. We refer to \cite{Bellemare2013ArcadeLearningEnvironment,Machado2018RevisitingArcadeLearning} for a detailed explanation of the reward structure.
Our models represent distributions with support size $\abs{\setreturn} = 32$ and range $[z_{\textrm{min}}, z_{\textrm{max}}] = [-10, 10]$.
We evaluate agents using the same total episode reward metric as outlined in \cref{eq:qnrl:env-spec:total-reward}, which aggregates the total score received by the agent from destroying the space invaders.
The details of the environment are summarized in \cref{tab:qnrl:env-spec:SpaceInvaders}.

\subsection{\texttt{Breakout}}

\begin{figure}[t!]
    \centering
    \includegraphics[width=0.6\columnwidth]{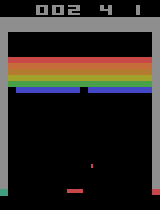}
    \caption{Example of \texttt{Breakout} environment.\label{fig:qnrl:env:Breakout}}
\end{figure}

\begin{table}[t!]
    \caption{Specifications for \texttt{Breakout} environment.\label{tab:qnrl:env-spec:Breakout}}
    \centering
    \resizebox{\columnwidth}{!}{%
    \begin{tabular}{ll}
        \toprule
        Parameter & Value \\
        \midrule
        Episode Time limit ($T$) & $10,000$ \\
        Frame Skip & $4$ \\
        Frame Stack & $4$ \\
        Screen Size & $84 \times 84$ \\
        Scale Observation & True \\
        Greyscale Observation & True \\
        Repeat Action Probability & $0\%$ \\
        Noop Reset & $30$ \\
        Terminate Episode on Life Loss & True \\
        Observation at time $t$ & $\bm{x}_{t} \in [0,1]^{4 \times 84 \times 84}$ \\
        Action at time $t$ & $a_{t} \in \set{\textrm{NOOP},\textrm{fire},\textrm{right},\textrm{left}}$ \\
        Reward at time $t$ & See \cite{Bellemare2013ArcadeLearningEnvironment,Machado2018RevisitingArcadeLearning} for score structure. \\
        Distribution atoms & $\abs{\setreturn} = 32$ \\
        Distribution range & $[z_{\textrm{min}}, z_{\textrm{max}}] = [-10, 10]$ \\
        \bottomrule
    \end{tabular}
    }
\end{table}

We train using the \texttt{Breakout} environment as implemented in \ac{ale} \cite{Bellemare2013ArcadeLearningEnvironment,Machado2018RevisitingArcadeLearning}. Similar to \texttt{SpaceInvaders}, this environment is also a valuable case study for evaluating how return distributions are learned from image data, with a sparse reward structure, and with action selection randomness.
An example of the \texttt{Breakout} environment is shown in \cref{fig:qnrl:env:Breakout}.
Here we also employ the Atari preprocessing scheme of \cite{Machado2018RevisitingArcadeLearning} using the same configuration as on \texttt{SpaceInvaders}, thus agent observations are a matrix $\bm{x} \in [0,1]^{4 \times 84 \times 84}$ consisting of 4 frame-stacked greyscale images of shape $84 \times 84$.
The agent interacts within the environment by taking actions to move the player and fire the player weapon, which is the set $\mathcal{A} = \set{\textrm{NOOP},\textrm{fire},\textrm{right},\textrm{left}}$, where $a_{0}=\textrm{NOOP}$ results in no player action taken.
The agent gains points by destroying the bricks on the wall with varying score values assigned to the colors of each brick. We refer to \cite{Bellemare2013ArcadeLearningEnvironment,Machado2018RevisitingArcadeLearning} for a detailed explanation of the reward structure.
Our models represent distributions with support size $\abs{\setreturn} = 32$ and range $[z_{\textrm{min}}, z_{\textrm{max}}] = [-10, 10]$.
We evaluate agents using the same total episode reward metric as outlined in \cref{eq:qnrl:env-spec:total-reward}, which aggregates the total score received by destroying bricks in the wall.
The details of the environment are summarized in \cref{tab:qnrl:env-spec:Breakout}.
\section{Hyperparameter Description}\label{app:qnrl:hyper}
\setcounter{figure}{0}
\setcounter{table}{0}
\setcounter{equation}{0}
\setcounter{algocf}{0}

The hyperparameters for each of the models trained in our experiments, as discussed in \cref{sec:qnrl:exp}, are shown in \cref{tab:qnrl:hyp:Gym} for the classic control and grid-world environments, and \cref{tab:qnrl:hyp:Atari} for the Atari environments respectively. For \texttt{C51} we vary the number of hidden dimensions via \texttt{douts} for the classic control and grid-world environments, and the number of \ac{cnn} features for the Atari environments. For our \texttt{QnRL} on the classic control and grid-world environments we vary the moment index $m$, expectation power $n$, circuit depth $L$, and entanglement style $U_{\textrm{ent}}$, and similarly on the Atari environments we vary the circuit depth, entanglement style, and add variation for the input embedding \ac{cnn} features.

\begin{table}[t!]
\centering
\caption{Description of hyperparameters for \texttt{C51} and \texttt{QnRL} used on the classic control and grid-world environments.\label{tab:qnrl:hyp:Gym}}
\resizebox{\columnwidth}{!}{%
\begin{tabular}{llp{0.7\columnwidth}}
\toprule
Model & Parameter & Description \\
\midrule
\multirow{11}{*}{\texttt{C51}}
& Total Train Steps & $100,000$ \\
& $\abs{x}$ & Input observation dimension. \\
& $\abs{\setaction}$ & Number of actions. \\
& $\abs{\setreturn}$ & Number of return atoms. \\
& $z_{\textrm{min}}$ & Minimum value for return-atom value range. \\
& $z_{\textrm{max}}$ & Maximum value for return-atom value range. \\
& \texttt{douts} & List containing number of hidden dimensions for each \ac{nn} layer. \\
& Optimizer & \texttt{Adam} \\
& Learning Rate & $lr = 10^{-3}$ \\
& Training seed & $k^{(t)} \in \set{0, 21, 42, 84}$ \\
& Evaluation seed & $\set{k^{(e)}_{i} = \texttt{jax.random.split}(k^{(e)}_{i-1})}_{i=1}^{10}$, where $k^{(e)}_{0}=1024$ \\
& Loss & \texttt{cross-entropy} \\
& $\epsilon_{\textrm{start}}$ & $1$; Starting $\epsilon$ value. \\
& $\epsilon_{\textrm{end}}$ & $0.05$; Final $\epsilon$ value. \\
& $\epsilon_{\textrm{decay}}$ & $50\%$; $\epsilon$ decay duration fraction of total time steps. \\
\cmidrule(lr){1-3}
\multirow{20}{*}{\texttt{QnRL}}
& Total Train Steps & $100,000$ \\
& $\abs{x}$ & Input observation dimension. \\
& $\abs{\setaction}$ & Number of actions. \\
& $\abs{\setreturn}$ & Number of return atoms. \\
& $z_{\textrm{min}}$ & Minimum value for return-atom value range. \\
& $z_{\textrm{max}}$ & Maximum value for return-atom value range. \\
& $q_{\setaction}$ & Number of action qubits: $q_{\setaction} = \ceil{\log_{2} \abs{\setaction}}$ \\
& $q_{\setreturn}$ & Number of return qubits: $q_{\setreturn} = \ceil{\log_{2} \abs{\setreturn}}$ \\
& $n$ & Exponent of the expectation of returns, i.e., $\mathbb{E}[\cdot]^{n}$. \\
& $m$ & Moment of the return distribution, i.e., $\mathbb{E}[(\cdot)^{m}]$. \\
& $L$ & Number of quantum circuit layers. \\
& $U_{\textrm{enc}}$ & Encoding circuit quantum operators. \\
& $U_{\textrm{var}}$ & Variational circuit quantum operators. \\
& $U_{\textrm{ent}}$ & Entanglement circuit type and coupling operator. \\
& $\phi$ & Activation function after multiplexing classical observations into quantum encoder. \\
& Optimizer & \texttt{AdamW} \\
& Learning Rate & $\set{lr_{\textrm{classical}} = 10^{-3}, lr_{\textrm{quantum}} = 10^{-2}}$ \\
& Training seed & $k^{(t)} \in \set{0, 21, 42, 84}$ \\
& Evaluation seed & $\set{k^{(e)}_{i} = \texttt{jax.random.split}(k^{(e)}_{i-1})}_{i=1}^{10}$, where $k^{(e)}_{0}=1024$ \\
& Loss & \texttt{cross-entropy} \\
& $\epsilon_{\textrm{start}}$ & $1$; Starting $\epsilon$ value. \\
& $\epsilon_{\textrm{end}}$ & $0.05$; Final $\epsilon$ value. \\
& $\epsilon_{\textrm{decay}}$ & $50\%$; $\epsilon$ decay duration fraction of total time steps. \\
\bottomrule
\end{tabular}
}
\end{table}

\begin{table}[t!]
\centering
\caption{Description of hyperparameters for \texttt{C51} and \texttt{QnRL} used on the Atari environments.\label{tab:qnrl:hyp:Atari}}
\resizebox{\columnwidth}{!}{%
\begin{tabular}{llp{0.7\columnwidth}}
\toprule
Model & Parameter & Description \\
\midrule
\multirow{11}{*}{\texttt{C51}}
& Total Train Steps & $200,000$ \\
& $\abs{x}$ & Input observation dimension. \\
& $\abs{\setaction}$ & Number of actions. \\
& $\abs{\setreturn}$ & Number of return atoms. \\
& $z_{\textrm{min}}$ & Minimum value for return-atom value range. \\
& $z_{\textrm{max}}$ & Maximum value for return-atom value range. \\
& \texttt{conv\_douts} & List containing number of feature dimensions for each \ac{cnn} layer. \\
& \texttt{conv\_kernel\_size} & List containing sizes of the convolutional filter for each \ac{cnn} layer. \\
& \texttt{conv\_strides} & List containing number of pixels to move the kernel for each \ac{cnn} layer. \\
& Optimizer & \texttt{Adam} \\
& Learning Rate & $lr = 2.5\times10^{-4}$ \\
& Training seed & $k^{(t)} \in \set{0, 21, 42, 84}$ \\
& Evaluation seed & $\set{k^{(e)}_{i} = \texttt{jax.random.split}(k^{(e)}_{i-1})}_{i=1}^{10}$, where $k^{(e)}_{0}=1024$ \\
& Loss & \texttt{cross-entropy} \\
& $\epsilon_{\textrm{start}}$ & $1$; Starting $\epsilon$ value. \\
& $\epsilon_{\textrm{end}}$ & $0.01$; Final $\epsilon$ value. \\
& $\epsilon_{\textrm{decay}}$ & $50\%$; $\epsilon$ decay duration fraction of total time steps. \\
\cmidrule(lr){1-3}
\multirow{20}{*}{\texttt{QnRL}}
& Total Train Steps & $200,000$ \\
& $\abs{x}$ & Input observation dimension. \\
& $\abs{\setaction}$ & Number of actions. \\
& $\abs{\setreturn}$ & Number of return atoms. \\
& $z_{\textrm{min}}$ & Minimum value for return-atom value range. \\
& $z_{\textrm{max}}$ & Maximum value for return-atom value range. \\
& $q_{\setaction}$ & Number of action qubits: $q_{\setaction} = \ceil{\log_{2} \abs{\setaction}}$ \\
& $q_{\setreturn}$ & Number of return qubits: $q_{\setreturn} = \ceil{\log_{2} \abs{\setreturn}}$ \\
& $n$ & Exponent of the expectation of returns, i.e., $\mathbb{E}[\cdot]^{n}$. \\
& $m$ & Moment of the return distribution, i.e., $\mathbb{E}[(\cdot)^{m}]$. \\
& $L$ & Number of quantum circuit layers. \\
& $U_{\textrm{enc}}$ & Encoding circuit quantum operators. \\
& $U_{\textrm{var}}$ & Variational circuit quantum operators. \\
& $U_{\textrm{ent}}$ & Entanglement circuit type and coupling operator. \\
& $\phi$ & Activation function after multiplexing classical observations into quantum encoder. \\
& \texttt{conv\_douts} & List containing number of feature dimensions for each \ac{cnn} layer. \\
& \texttt{conv\_kernel\_size} & List containing sizes of the convolutional filter for each \ac{cnn} layer. \\
& \texttt{conv\_strides} & List containing number of pixels to move the kernel for each \ac{cnn} layer. \\
& Optimizer & \texttt{AdamW} \\
& Learning Rate & $\set{lr_{\textrm{classical}} = 2.5\times10^{-4}, lr_{\textrm{quantum}} = 10^{-2}}$ \\
& Training seed & $k^{(t)} \in \set{0, 21, 42, 84}$ \\
& Evaluation seed & $\set{k^{(e)}_{i} = \texttt{jax.random.split}(k^{(e)}_{i-1})}_{i=1}^{10}$, where $k^{(e)}_{0}=1024$ \\
& Loss & \texttt{cross-entropy} \\
& $\epsilon_{\textrm{start}}$ & $1$; Starting $\epsilon$ value. \\
& $\epsilon_{\textrm{end}}$ & $0.01$; Final $\epsilon$ value. \\
& $\epsilon_{\textrm{decay}}$ & $50\%$; $\epsilon$ decay duration fraction of total time steps. \\
\bottomrule
\end{tabular}
}
\end{table}







\printbibliography

\begin{IEEEbiography}[{\includegraphics[width=1in,height=1.25in,clip,keepaspectratio]{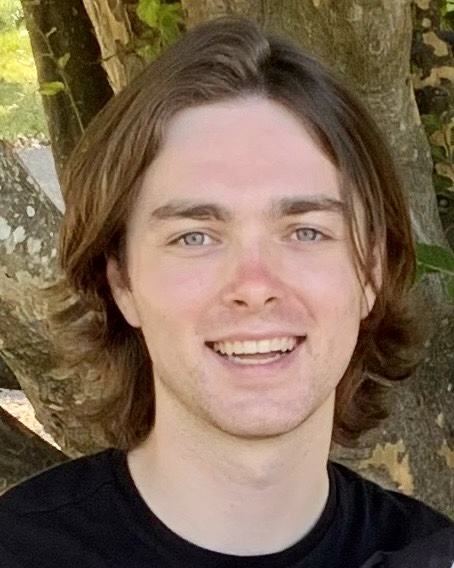}}]{Alexander DeRieux}(Graduate Student Member, IEEE) received the B.S.\@ degree in Electrical Engineering with honors and the B.S.\@ degree in Computer Science with honors from Virginia Tech in 2016, and the M.S.\@ degree in Electrical Engineering in 2022. He is currently a Ph.D.\@ candidate and Bradley Fellow in the Bradley Department of Electrical and Computer Engineering at the Virginia Tech Institute for Advanced Computing. Between 2017 and 2024, he worked as an Electronics Engineer for the U.S.\@ Naval Research Laboratory (NRL) Naval Center for Space Technology (NCST) Space Systems Development Division (SSDD). During this tenure, he researched, designed, and developed space-system technologies in the areas of rocketry, communications, optics, networking, tactical network modeling, surveillance and tracking, software engineering, Positioning, Navigation, and Timing (PNT), and Precise Time and Time Interval (PTTI) theory and applications. His research focuses on designing quantum-native artificial intelligence systems that leverage unique quantum mechanical properties to facilitate learning and operation naturally within the quantum domain.
\end{IEEEbiography}
\begin{IEEEbiography}[{\includegraphics[width=1in,height=1.25in,clip,keepaspectratio]{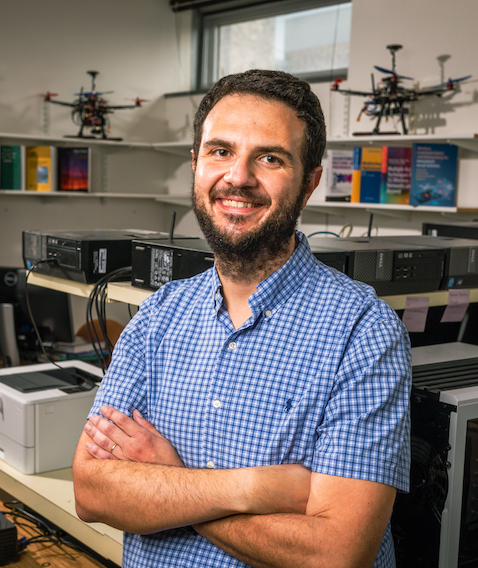}}]{Walid Saad}(Fellow, IEEE) (S’07, M’10, SM’15, F’19) received the Ph.D.\@ degree from the University of Oslo in 2010. He is currently a Professor at the Department of Electrical and Computer Engineering at Virginia Tech, where he leads the Network sciEnce, Wireless, and Security (NEWS) laboratory. He is also the Next-G Wireless research leader at Virginia Tech's Innovation Campus. His research interests include wireless networks (5G/6G/beyond), machine learning, game theory, security, unmanned aerial vehicles, semantic communications, cyber-physical systems, and network science. Dr. Saad is a Fellow of the IEEE. He was the author/co-author of eleven conference best paper awards. He is the recipient of the 2015 and 2022 Fred W. Ellersick Prize from the IEEE Communications Society, of the 2017 IEEE ComSoc Best Young Professional in Academia award, of the 2018 IEEE ComSoc Radio Communications Committee Early Achievement Award, and of the 2019 IEEE ComSoc Communication Theory Technical Committee. He was also a co-author of the 2019 and 2021 IEEE Communications Society Young Author Best Paper. He is the Editor-in-Chief for the IEEE Transactions on Machine Learning in  Communications and Networking.
\end{IEEEbiography}

\vfill 

\EOD

\end{document}